\title{On the Approximability of Digraph Ordering}
\author{Sreyash Kenkre\inst{1}
	\and Vinayaka Pandit\inst{1}
        \and Manish Purohit\inst{2}\thanks{Partially supported by NSF grants CCF-1217890 and IIS-1451430.}
	\and Rishi Saket\inst{1}}
	\institute{IBM Research, Bangalore, Karnataka 560045, India \\ {\tt
		\{srekenkr, pvinayak, rissaket\}@in.ibm.com}  \and 
		University of Maryland, College Park, MD 20742, USA \\ {\tt
	manishp@cs.umd.edu}}
\DeclareMathOperator*{\argmax}{arg\,max}
\newcommand{\hide}[1]{{}}
\newcommand{\eps}{\varepsilon}
\newcommand{\mas}{{\sc MAS}\xspace}
\newcommand{\maxk}{{\sc Max-$k$-Ordering}\xspace}
\newcommand{\maxdi}{{\sc Max-DiCut}\xspace}
\newcommand{\mink}{{\sc DED$(k)$}\xspace}
\newcommand{\rmas}{{\sc RMAS}\xspace}
\newcommand{\rmasoff}{{\sc OffsetRMAS}\xspace}
\newcommand{\off}{{\theta}}
\newcommand{\slfrac}[2]{\left.#1\middle/#2\right.}
\newcommand{\mc}[1]{\ensuremath{\mathcal{#1}}\xspace}
\spnewtheorem{cl}{Claim}{\bfseries}{\itshape}
\begin{document}

\maketitle

\begin{abstract}
	Given an $n$-vertex digraph $D = (V, A)$ the 
	\maxk problem is to compute a labeling $\ell : V \to [k]$ 
	maximizing the number of forward edges, i.e. edges 
	$(u,v)$ such that $\ell(u) < \ell(v)$. For
	different values of $k$, this reduces to
	\emph{maximum acyclic subgraph} ($k=n$),
	and \maxdi ($k=2$). This work studies the approximability
	of \maxk and its generalizations, motivated by their
	applications to job scheduling with \emph{soft} precedence
	constraints. 
	We give an LP rounding based
	$2$-approximation algorithm for \maxk for any $k=\{2,\dots, n\}$,
	improving on the known $\slfrac{2k}{(k-1)}$-approximation
	obtained via
	random assignment.
	The tightness of this
	rounding is shown by proving that for any $k=\{2,\dots, n\}$ and
	constant $\eps > 0$, 
	\maxk has an
	LP integrality gap of $2 - \eps$ for
	$n^{\Omega\left(\slfrac{1}{\log\log k}\right)}$ rounds of
	the Sherali-Adams hierarchy.

	\smallskip
	A further generalization of \maxk is the \emph{restricted
	maximum acyclic subgraph} problem or \rmas, where each vertex
	$v$ has a finite set of allowable labels $S_v \subseteq
	\mathbb{Z}^+$. We prove an LP rounding based
	$\slfrac{4\sqrt{2}}{\left(\sqrt{2}+1\right)} \approx 2.344$
	approximation for it, improving on the $2\sqrt{2} \approx
	2.828$ approximation recently given by Grandoni et
	al.~\cite{grandoni2015lp}. In fact, our approximation algorithm also works for a
	general version where the objective counts the edges which go
	forward by at least a positive \emph{offset} specific to each
	edge. 
	
	\smallskip
	The minimization formulation of digraph
	ordering is \emph{DAG edge deletion} or \mink, which
	requires deleting the minimum number of edges from an
	$n$-vertex directed
	acyclic graph (DAG) to remove all paths of length $k$.
	We show that both, the LP relaxation
	and a local ratio approach for \mink yield $k$-approximation 
	for any
	$k\in [n]$.
	A vertex deletion
	version was studied earlier by Paik et al.~\cite{paik1994deleting}, 
	and
	Svensson~\cite{svensson2012hardness}. 
\end{abstract}

\section{Introduction}
\label{sec:introduction}
One of the most well studied combinatorial problems on directed graphs
(digraphs)
is the \emph{Maximum Acyclic Subgraph} problem (\mas): given an $n$-vertex digraph,
find a subgraph\footnote{Unless specified, throughout this
paper a \emph{subgraph} is not necessarily induced.} of maximum number 
of edges 
containing no directed cycles. 
 An equivalent formulation of \mas is to
obtain a linear ordering of the vertices which maximizes the number of directed
edges going forward. A natural generalization is \maxk
where the goal is to compute the best \emph{$k$-ordering}, i.e. a labeling of
the vertices from $[k] = \{1,\dots, k\}$ ($2\leq k \leq n$), which
maximizes the number of directed edges going forward in this ordering.
It can be seen -- and we show this formally -- that \maxk is
equivalent to finding the maximum subgraph which has no directed
cycles, and no directed paths\footnote{The length of a directed path is
the number of directed edges it contains.} of length $k$. Note that \mas is
the special case of \maxk when $k = n$, and for $k=2$ \maxk reduces to the
well known \maxdi problem. 

A related problem is the \emph{Restricted Maximum Acyclic Subgraph} problem or
\rmas, 
in which each vertex $v$ of the digraph has
to be assigned a label from a finite set 
$S_v\subseteq \mathbb{Z}^+$ to maximize the number of edges
going forward in this assignment. 
Khandekar et al.~\cite{khandekar2009hardness} 
introduced \rmas in the context of \emph{graph
pricing} problems and its approximability has recently 
been studied by Grandoni et al.~\cite{grandoni2015lp}. A further
generalization is \rmasoff where each edge $(u,v)$ has an offset
$o_e \in \mathbb{Z}^+$ and is satisfied by a labeling $\ell$ if
$\ell(u) + o_e \leq \ell(v)$. Note that
when all offsets are unit \rmasoff reduces to \rmas, which in turn
reduces to \maxk when $S_v = [k]$ for all vertices $v$.

This study focuses on the approximability of \maxk and its
generalizations  and 
is motivated by their applicability
in scheduling jobs with \emph{soft} precedences under a hard deadline. 
Consider the following simple case
of discrete time scheduling: given $n$ unit length jobs with precedence
constraints and an infinite capacity machine, find a schedule so that all the jobs
are completed by timestep $k$. Since it may not be feasible to satisfy all
the precedence constraints, the goal is to satisfy the maximum number.
This is equivalent to \maxk on the corresponding precedence digraph.
One can generalize this setting to each job having a
set of allowable timesteps when it can be
scheduled. This can be abstracted as \rmas and a further
generalization to each
precedence having an associated lag between the start-times yields
\rmasoff as the underlying optimization problem.  

Also of interest is the minimization version of \maxk on directed
\emph{acyclic} graphs (DAGs). We refer to it as \emph{DAG edge
deletion} or \mink where the goal is to
delete the minimum number of directed edges from a DAG so that the
remaining digraph does not contain any path of length $k$. Note that
the problem for arbitrary $k$ does not admit any approximation factor on
general digraphs since even detecting whether a digraph has a 
path of length $k$ is the well studied NP-hard longest path problem. …
A vertex deletion formulation of
\mink was introduced as an abstraction of
certain VLSI design and communication problems by Paik et
al.~\cite{paik1994deleting} who gave efficient algorithms for it on special
cases of DAGs, and proved it to be NP-complete in general. More recently,
its connection to project scheduling was noted 
by Svensson~\cite{svensson2012hardness} who proved 
inapproximability results for
the vertex deletion version. 
  
The rest of this section gives a background of previous related work,
describes our results, and provides an overview of the techniques
used.

\subsection{Related Work}
It is easy to see that \mas admits a trivial $2$-approximation, by
taking any linear ordering or its reverse, and this is also obtained by a random
ordering.
For \maxk the random $k$-ordering yields a
$\slfrac{2k}{(k-1)}$-approximation for any $k \in \{2,\dots, n\}$. For
$k=2$, which is \maxdi, the semidefinite programming (SDP) 
relaxation is shown to yield a $\approx
1.144$-approximation in \cite{lewin2002improved}, improving upon previous analyses of
\cite{matuura20010},  \cite{zwick2000analyzing}, and \cite{FeigeGoemans}. 
As mentioned above, \rmas is a generalization
of \maxk, and a $2\sqrt{2}$-approximation for it
based on linear programming (LP) 
rounding was shown recently by Grandoni et al.~\cite{grandoni2015lp}
which is also the best approximation for \maxk for $k=3$. 
For $4\leq k \leq n-1$, to the best of our knowledge the proven
approximation factor for \maxk remains $\slfrac{2k}{(k-1)}$. 

On the hardness side, Newman~\cite{newman2000approximating} showed that 
\mas is NP-hard to
approximate within a factor of $\slfrac{66}{65}$. Assuming
Khot's~\cite{khot2002power} Unique
Games Conjecture (UGC), Guruswami et al.~\cite{guruswami2008beating} gave a
$(2-\eps)$-inapproximability for any $\eps > 0$.
Note that \maxdi is at least as hard as {\sc Max-Cut}. Thus, 
for $k=2$, \maxk is NP-hard to
approximate within factor $(\slfrac{13}{12}
-\eps)$~\cite{hastad2001some},
and within factor  $1.1382$ assuming the UGC~\cite{khot2007optimal}. 
For larger constants $k$, the
result of Guruswami et al.~\cite{guruswami2008beating} implicitly shows a
$(2-o_k(1))$-inapproximability for \maxk, assuming the UGC.

For the vertex deletion version of 
\mink, Paik et al.~\cite{paik1994deleting} gave linear time and quadratic time 
algorithms for rooted trees and series-parallel
graphs respectively. 
The problem reduces to vertex cover on $k$-uniform hypergraphs
for any constant $k$ thereby admitting a $k$-approximation, and a
matching $(k-\eps)$-inapproximability assuming the UGC was obtained by
Svensson~\cite{svensson2012hardness}.

\subsection{Our Results}
The main algorithmic result of this paper is the following improved
approximation guarantee for \maxk.
\begin{theorem}\label{thm:main1}
	There exists a polynomial time $2$-approximation algorithm for \maxk on
	$n$-vertex weighted digraphs for any $k \in \{2,\dots, n\}$.
\end{theorem}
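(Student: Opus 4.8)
The plan is to exhibit an LP relaxation of \maxk whose value upper bounds the weight of the best $k$-ordering, and then to round any fractional optimum while losing only a factor $2$. For the relaxation I would keep, for every vertex $v$ and label $i\in[k]$, a variable $x_{v,i}\in[0,1]$ modelling $\Pr[\ell(v)=i]$ with $\sum_i x_{v,i}=1$; write $F_v(i)=\sum_{j\le i}x_{v,j}$ for the fractional cumulative distribution, and for every edge $e=(u,v)$ keep a variable $z_e\in[0,1]$ for the amount by which $e$ is satisfied, with objective $\max\sum_e w_e z_e$. Since a forward edge has $\ell(u)<\ell(v)$, for every threshold $i$ the event $\{\ell(u)<\ell(v)\}$ lies inside $\{\ell(u)\le i\}\cup\{\ell(v)>i\}$, which yields the valid inequalities $z_e\le F_u(i)+1-F_v(i)$; I would also add $z_e\le\sum_{i=1}^{k-1}(F_u(i)-F_v(i))^+$ (linearised with auxiliary variables), bounding $z_e$ by the fractional number of levels $e$ crosses the ``right'' way. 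If the basic relaxation turns out to be too weak, a constant number of Sherali--Adams rounds (equivalently, edge-localised joint variables $x_{uv,ij}$ with consistent marginals and $z_e=\sum_{i<j}x_{uv,ij}$) would still be solvable in polynomial time and, by the integrality gap proved later for $n^{\Omega(1/\log\log k)}$ rounds, cannot be pushed past factor $2$ in any case.

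Given an optimal fractional solution $(x^{\ast},z^{\ast})$ with value $\mathrm{LP}\ge\mathrm{OPT}$, the rounding I would use samples one global threshold $\theta$ uniformly from $(0,1)$ and, independently for each vertex, a fair bit $s_v$; it then sets $\ell(v)$ to the $\theta$-quantile $\min\{i:F^{\ast}_v(i)\ge\theta\}$ when $s_v=0$ and to the $(1-\theta)$-quantile when $s_v=1$. This is the natural extension of the randomised-threshold rounding that already gives a $2$-approximation for \maxdi (the $k=2$ case) from the analogous LP: there the four patterns of $(s_u,s_v)$ contribute $\tfrac12(x_u-x_v)^+ +\tfrac12\min(x_u,1-x_v)\ge\tfrac12 z_e$. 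For general $k$ and a fixed edge $e=(u,v)$ one analyses the same four patterns: the two ``aligned'' patterns realise a monotone coupling of $F^{\ast}_u$ and $F^{\ast}_v$, the two ``opposed'' patterns an anti-monotone coupling; adding the four contributions and invoking the LP inequalities for $e$ should give $\Pr[e\text{ forward}]\ge z^{\ast}_e/2$, hence $\mathbb{E}[\text{weight of forward edges}]\ge\mathrm{LP}/2\ge\mathrm{OPT}/2$. The algorithm is then made deterministic by standard conditional expectations, restricting $\theta$ to the $O(nk)$ breakpoints of the fractional c.d.f.'s.

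The hard part will be exactly this per-edge estimate, and specifically the regime where $u$ and $v$ carry nearly identical fractional label-distributions while $z^{\ast}_e$ is close to $1$: there the aligned (monotone) patterns recover almost nothing, so essentially all the gain has to come from the opposed patterns together with the level-counting inequality — which is precisely why the two-probability inequalities alone may be insufficient and why the level-based or edge-localised constraints must be present. Forcing the four contributions to sum to at least $z^{\ast}_e/2$ uniformly over every shape of $(F^{\ast}_u,F^{\ast}_v)$ and every $k$ simultaneously is the technical core; feasibility of the relaxation, its polynomial-time solvability, and the derandomisation are routine by comparison.
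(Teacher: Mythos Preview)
Your proposal has a genuine gap: the per-edge bound $\Pr[\ell(u)<\ell(v)]\ge z_e^\ast/2$ does \emph{not} hold for your threshold rounding against the natural LP, and you do not actually establish it for any concrete LP. Take the standard relaxation with edge-local joint variables $y^e_{ij}$ and marginals $x^u_i=\sum_j y^e_{ij}$, $x^v_j=\sum_i y^e_{ij}$, and set both marginals to be uniform on $[k]$ with $y^e_{i,i+1}=1/k$, so $z_e^\ast=(k-1)/k$. Under your rounding the two aligned patterns ($s_u=s_v$) give $\ell(u)=\ell(v)$ with probability $1$, and the two opposed patterns together contribute $\tfrac14\Pr_\theta[\lceil k\theta\rceil\neq\lceil k(1-\theta)\rceil]\le\tfrac14$; hence $\Pr[\ell(u)<\ell(v)]\le 1/4$ while $z_e^\ast/2\to 1/2$. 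So against the LP you eventually fall back on (``edge-localised joint variables \dots\ $z_e=\sum_{i<j}x_{uv,ij}$'') your scheme is only a $4$-approximation. You try to pre-empt this with the extra inequality $z_e\le\sum_i(F_u(i)-F_v(i))^+$, but then the monotone-coupling contribution you need is $\Pr_\theta[\exists i:F_v(i)<\theta\le F_u(i)]$, which equals the \emph{measure of a union} of the intervals $(F_v(i),F_u(i)]$, not their sum; these intervals can overlap, so the aligned patterns can still fall well short of $\tfrac12\sum_i(F_u(i)-F_v(i))^+$. In short, neither choice of LP is shown to close the gap, and you explicitly leave ``forcing the four contributions to sum to at least $z_e^\ast/2$'' as the unfinished technical core.

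The paper's proof takes a different route that avoids shared thresholds altogether. It uses the standard CSP relaxation with the $y^e_{ij}$ variables and rounds \emph{independently}: vertex $v$ gets label $i$ with probability $\tfrac{1}{2k}+\tfrac{x^v_i}{2}$. The analysis then reduces to two ingredients: (i) for each edge, $\sum_{i<j}(x^u_i+x^v_j)\ge k z_e$ by a direct counting argument, and (ii) the key inequality $\sum_{i<j}x^u_i x^v_j\ge \tfrac{k}{2(k-1)}z_e^2$, proved by showing that a certain symmetric matrix indexed by $\{(i,j):i<j\}$ is positive semidefinite (its Gram vectors are the corners of a simplex). Combining these with AM--GM yields $\Pr[\ell(u)<\ell(v)]\ge z_e/2$. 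The uniform-marginal instance above is exactly where this PSD lemma is tight and where your correlated-threshold scheme breaks; the independence plus the $\tfrac{1}{2k}$ smoothing are what make the argument go through.
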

The above approximation is obtained by appropriately rounding the
standard LP relaxation of the CSP formulation of \maxk. For small values of $k$
this yields significant improvement on
the previously known approximation factors: $2\sqrt{2}$ for 
$k=3$ (implicit in \cite{grandoni2015lp}), $\slfrac{8}{3}$ for $k=4$, 
and $2.5$ for $k=5$. The latter two factors follow from the previous best 
$\slfrac{2k}{(k-1)}$-approximation given by a random $k$-ordering
for $4\leq k \leq n-1$. The detailed proof of Theorem \ref{thm:main1} 
is given in
Section \ref{sec:lp-rounding}.

Using an LP rounding approach similar to Theorem \ref{thm:main1}, in
Section \ref{sec:generalizations} we
show the following improved approximation for \rmasoff which implies
the same for \rmas. Our result improves the 
previous $2\sqrt{2}\approx 2.828$-approximation for \rmas 
obtained by Grandoni et al.~\cite{grandoni2015lp}.
\begin{theorem}\label{thm:rmasoff}
	There exists a polynomial time
	$\slfrac{4\sqrt{2}}{(\sqrt{2}+1)} \approx 2.344$ approximation algorithm 
	for \rmasoff on weighted digraphs. 
\end{theorem}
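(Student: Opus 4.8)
The plan is to follow the LP-rounding blueprint behind Theorem~\ref{thm:main1}, extended to cope with arbitrary allowable-label sets $S_v$ and arbitrary positive offsets $o_e$. For the relaxation, for each vertex $v$ and each $i\in S_v$ keep $x_{v,i}\ge 0$ with $\sum_{i\in S_v}x_{v,i}=1$, and write $X_v(m):=\sum_{i\in S_v,\, i\le m}x_{v,i}$ for the induced cumulative distribution; for each edge $e=(u,v)$ keep $z_e\in[0,1]$ subject to $z_e\le X_u(t-o_e)+\bigl(1-X_v(t)\bigr)$ for every integer threshold $t$. Since the right-hand side changes only at $t\in S_v\cup(S_u+o_e)$ the relaxation has polynomial size. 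Any integral labeling $\ell$ gives the feasible point $x_{v,\ell(v)}=1$, $z_e=\mathbf 1[\,\ell(u)+o_e\le\ell(v)\,]$ of the same value -- the threshold inequality just linearizes ``if $\ell(v)\le t$ and $e$ is satisfied then $\ell(u)\le t-o_e$'' -- so the LP optimum is at least $\mathrm{OPT}$, and we solve $\max\sum_e w_e z_e$ in polynomial time.

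For the rounding, given an optimal $(x^\ast,z^\ast)$ I would assign labels by a quantile rule driven by one global random seed together with independent per-vertex randomization: draw a threshold $\theta$ from a carefully chosen density on $(0,1]$, and for each vertex $v$ independently draw an auxiliary quantity that displaces $v$'s position within its own cumulative distribution $X_v^\ast$, then set $\ell(v)$ to the label at the resulting quantile. (Generalizing the rounding of Theorem~\ref{thm:main1}, this should be set up so that when every $S_v$ is an interval $[k]$ and every offset is $1$ it reduces to the \maxk case.) The design goal is that on every edge the two endpoints be coupled essentially monotonically but with a positive forward ``slack'', since that is exactly what an offset $o_e$ demands; the density of $\theta$ and the law of the displacement form the single tunable object whose optimization produces the constant $\slfrac{4\sqrt2}{(\sqrt2+1)}=4(2-\sqrt2)$.

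For the analysis, fix $e=(u,v)$ with $z^\ast_e=z$. The threshold inequalities give $X_v(t)-X_u(t-o_e)\le 1-z$ for all $t$, i.e. the law of $\ell(v)$ is within additive $1-z$ of stochastically dominating the shifted law of $\ell(u)+o_e$; feeding this into the rounding lower-bounds $\Pr[\,\ell(u)+o_e\le\ell(v)\,]$ by a function of $z$ alone, and the tunable law is chosen so that this function equals $\tfrac{2+\sqrt2}{8}\,z=\tfrac{z}{4(2-\sqrt2)}$ for every $z\in(0,1]$. Linearity of expectation then yields expected weight at least $\tfrac{1}{4(2-\sqrt2)}\sum_e w_e z^\ast_e\ge\tfrac{1}{4(2-\sqrt2)}\,\mathrm{OPT}$. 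Derandomization is routine: $\theta$ ranges over effectively polynomially many breakpoints, and the $n$ independent per-vertex choices are fixed one at a time by the method of conditional expectations.

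The heart of the argument -- and the part I expect to be the main obstacle -- is the per-edge bound for \emph{arbitrary} $S_v$ and $o_e$, where there is no interval structure to exploit and one must argue solely from the threshold inequalities. The extremal instances are ``cyclically misaligned'' ones on which the LP pays almost $1$ for an edge via a rotation-type fractional coupling that no purely monotone integral coupling can reproduce (e.g. $S_u=S_v=\{1,3,\dots,2N-1\}$ uniform with unit offset). The $\sqrt2$ emerges from balancing a rounding that introduces ``too little'' forward slack -- which fails on such misaligned instances -- against one that introduces ``too much'' -- which fails on tightly nested instances such as $S_u=\{i,i+1\}$, $S_v=\{i+1,i+2\}$; pinning down the slack law that equalizes these two losses, and checking the resulting bound is uniform in $z$, is the technical crux.
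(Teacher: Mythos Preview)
Your write-up is a research plan, not a proof: the rounding distribution is left unspecified (the ``carefully chosen density'' for $\theta$ and the per-vertex ``displacement'' are placeholders), and you explicitly identify the per-edge bound as ``the main obstacle'' without resolving it. The threshold LP you propose is a valid relaxation---each inequality $z_e\le X_u(t-o_e)+(1-X_v(t))$ is implied by the paper's LP with joint variables $y^e_{ij}$---but without an explicit rounding rule and a proof that it achieves $\Pr[\ell(u)+o_e\le\ell(v)]\ge\tfrac{\sqrt2+1}{4\sqrt2}\,z_e$ uniformly in $z_e$, nothing is established. The ``equalize two extremal losses to get $\sqrt2$'' heuristic is suggestive but is not an argument.

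The paper's route is both concrete and structurally different. It keeps the standard CSP LP with edge variables $y^e_{ij}$ and rounds each vertex \emph{fully independently}---there is no global seed: vertex $v$ receives label $i$ with probability $\tfrac14+\tfrac{x^v_i}{2}$ if $i\in\{\min S_v,\max S_v\}$ and $\tfrac{x^v_i}{2}$ otherwise. Expanding $\Pr[\ell(u)+o_e\le\ell(v)]$ yields a constant $\tfrac1{16}$ from the extreme-label boost, a linear term in the $x$'s that is at least $\tfrac{z_e}{4}$, and a cross term $\tfrac14\sum_{i+o_e\le j}x^u_i x^v_j$. The cross term is controlled by the key matrix inequality (Lemma~\ref{lem:psd}): for any nonnegative matrix with row sums $r_i$, column sums $c_j$, and any offset $o$, one has $\sum_{i+o\le j}r_ic_j\ge\tfrac12\bigl(\sum_{i+o\le j}a_{ij}\bigr)^2$, so the cross term is at least $\tfrac{z_e^2}{8}$. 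Summing and applying $t+\tfrac1{2t}\ge\sqrt2$ gives $\tfrac{z_e}{4}\bigl(1+\tfrac1{\sqrt2}\bigr)$ directly. Your correlated-quantile framework is closer in spirit to the Grandoni et al.\ rounding that gave $2\sqrt2$; the paper's improvement comes not from tuning a global threshold law but from boosting the two extreme labels and invoking Lemma~\ref{lem:psd}.
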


Our next result gives a lower bound that matches the approximation
obtained in Theorem \ref{thm:main1}. In Section \ref{sec:integrality-gaps}, we show that even after
strengthening the LP relaxation of \maxk with a large number of rounds of the
Sherali-Adams hierarchy, its integrality gap remains close to $2$, and
hence Theorem \ref{thm:main1} is tight. 
\begin{theorem}\label{thm:main2}
	For any small enough constant $\eps > 0$, there exists $\gamma
	> 0$ such that for \maxk on $n$-vertex weighted digraphs and
	any $k \in \{2, \ldots, n\}$, 
	the LP relaxation with $n^{\left(\slfrac{\gamma}{\log\log
	k}\right)}$ 
	rounds of
	Sherali-Adams constraints has a $(2-\eps)$ integrality gap.
\end{theorem}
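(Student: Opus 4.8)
The plan is to prove Theorem~\ref{thm:main2} by reducing from a \emph{Sherali--Adams integrality gap for Unique Games} and transporting the fractional solution through an ordering ``dictatorship'' gadget, in the spirit of the Guruswami et al.\ ``beating the random ordering'' reduction but carried out on local distributions rather than integral assignments. Fix $\zeta > 0$ to be a small multiple of $\eps$. I would start from a Unique Games instance $\mathcal{U}$ on $N$ variables over an alphabet $[R]$, with $R$ a parameter to be chosen as a function of $k$, such that: (i) no integral labeling of $\mathcal{U}$ satisfies more than a $\zeta$ fraction of its constraints; yet (ii) there is a Sherali--Adams solution for $\mathcal{U}$, consistent on all variable-sets of size $r_0$, whose local distributions satisfy in expectation a $(1-\zeta)$ fraction of the constraints, where $r_0$ is polynomial in $N$ (with the exponent depending on $R$). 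Such instances are known. The ordering gadget below encodes an ordering over about $R$ levels and must therefore embed into the $k$ available labels, which pins $R$ to a suitable function of $k$ (and $R$ also has to exceed a constant depending on $\eps$ for the soundness step to work, so the statement is only meaningful for large $k$; the finitely many remaining small $k$ are covered by the known \maxdi-type Sherali--Adams gaps). Tracking how $r_0$, $R$ and $N$ compose through the reduction is what yields the $n^{\Omega(1/\log\log k)}$ bound.

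From $\mathcal{U}$ I would build the weighted digraph $D = D_k$ on $n = \mathrm{poly}(N)$ vertices by the standard composition: each variable $x$ of $\mathcal{U}$ becomes a block $B_x$ of gadget vertices, and each constraint $\pi_{xy}$ becomes a weighted set of arcs between $B_x$ and $B_y$ ``twisted'' by the bijection $\pi_{xy}$, plus some intra-block arcs. The gadget is the \maxk analogue of the \maxdi\,/\,\mas dictatorship test and should satisfy two properties. \emph{Completeness:} for each coordinate $i \in [R]$ there is an ``$i$-dictator'' $k$-labeling of a block so that, whenever two adjacent blocks use coordinates consistent with their connecting constraint, all but a $\zeta$ fraction of the arcs between them (and of the intra-block arcs) go forward. \emph{Soundness:} any $k$-labeling of a single block that makes more than a $(\tfrac12 + \zeta)$ fraction of its arcs go forward must have a coordinate of non-trivial influence. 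For $k = 2$ the gadget degenerates to the \maxdi dictatorship gadget.

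The argument then has two halves. For the \textbf{integral optimum}, suppose some $k$-labeling of $D$ makes more than a $(\tfrac12 + C\zeta)$ fraction of arcs forward; averaging the block-soundness property over all blocks shows that a constant fraction of blocks carry an influential coordinate, and a standard influence-matching/list-decoding argument along the twisted arc-sets then extracts a labeling of $\mathcal{U}$ satisfying more than a $\zeta$ fraction of its constraints, contradicting (i). Hence $\mathrm{OPT}(D) \le \tfrac12 + O(\zeta)$. For the \textbf{Sherali--Adams solution}, I would \emph{lift} the solution of $\mathcal{U}$: for a set $S$ of at most $r := r_0 / O(1)$ vertices of $D$ --- the $O(1)$ bounding how many blocks any single vertex or arc of $D$ can touch --- define the local distribution over $k$-labelings of $S$ by first drawing, from $\mathcal{U}$'s Sherali--Adams solution, a local assignment to the at most $r_0$ variables of $\mathcal{U}$ whose blocks meet $S$, and then labeling each such block by the corresponding dictator, each with an \emph{independent uniformly random shift}. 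Consistency of these local distributions for $r$ rounds follows from consistency of $\mathcal{U}$'s solution for $r_0$ rounds together with the bounded locality of the gadget, and completeness together with the independence of the shifts gives expected forward-fraction at least $1 - O(\zeta)$. Combining, the $r$-round Sherali--Adams relaxation of \maxk on $D$ has value at least $1 - O(\zeta)$ while $\mathrm{OPT}(D) \le \tfrac12 + O(\zeta)$, an integrality gap of $(1 - O(\zeta))/(\tfrac12 + O(\zeta)) \ge 2 - \eps$, with $r = n^{\Omega(1/\log\log k)}$ by the parameter calculation above.

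I expect the main obstacle to be the design and analysis of the \maxk ordering dictatorship gadget so that it works \emph{uniformly over all $k$}: a single family of gadgets whose completeness stays essentially perfect using only $k$ labels, whose soundness still forces an influential coordinate, and whose locality is $O(1)$ so that only a constant factor of Sherali--Adams rounds is lost in the lift. Intertwined with this is making the parameters line up --- balancing ``$R$ large enough (relative to $\eps$) for soundness'', ``$R$ small enough, as a function of $k$, for the gadget to embed into $[k]$'', and ``the Unique Games Sherali--Adams round count $r_0$ at alphabet $R$'' --- so that the final bound comes out exactly as $n^{\Omega(1/\log\log k)}$. A secondary subtlety is checking that the random-shift folding needed for the dictatorship soundness does not destroy the local consistency of the lifted Sherali--Adams solution; this is where the consistency-for-$r_0$-rounds of the base Unique Games solution must be spent with care.
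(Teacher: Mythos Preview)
Your proposal takes a fundamentally different route from the paper. The paper does \emph{not} reduce from a Unique Games Sherali--Adams gap; it builds the gap instance directly. It starts from the complete digraph on $n$ vertices (whose basic LP already has a $(2-2/n)$ gap by a simple counting argument), randomly sparsifies it to $\Theta((n\log k)/\eps^2)$ edges so that the integral optimum stays at most $\tfrac12(1-\tfrac1k)+\eps$ while the underlying undirected multigraph becomes almost $\Delta$-regular with $\Delta=\Theta((\log k)/\eps^2)$, and then deletes $o(n)$ more edges to enforce girth $\ge l$ and $l$-path-decomposability of all induced subgraphs on $n^\delta$ vertices, with $l=\Theta((\log n)/\log\Delta)$. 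On such locally tree-like graphs one can, following Charikar--Makarychev--Makarychev and Lee, build consistent local distributions over $k$-labelings: partition a bounded neighborhood into trees via a random multicut, propagate a uniformly random seed along each tree so that every directed edge $(x,y)$ has $\ell(y)-\ell(x)\equiv 1\pmod k$, embed the resulting pairwise marginals as vectors, and round with a random Gaussian. This yields an SA solution of value $(1-O(\eps))(1-\tfrac1k)$ valid for $n^{\gamma/\log\Delta}=n^{\Omega(1/\log\log k)}$ rounds; the $\log\log k$ arises precisely because $\log\Delta=\Theta(\log\log k)$.

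Your reduction-based plan is a plausible alternative in spirit, but as written it is a strategy with the key components missing, and the obstacles you flag are genuine gaps rather than routine details. First, the \maxk ordering dictatorship gadget with the stated completeness and soundness uniformly over all $k$ is not constructed; the Guruswami--Manokaran--Raghavendra gadget is for full linear orderings, its compression to $k$ levels with near-perfect completeness is not obvious, and its soundness analysis (via Gaussian-noise-stability of orderings) is delicate and not off-the-shelf for this variant. Second, the parameter accounting that is supposed to yield exactly $n^{\Omega(1/\log\log k)}$ is asserted but never carried out: you need the UG alphabet $R$ to scale with $k$ in a specific way, and the round count of the base UG Sherali--Adams gap to degrade with $R$ just so that, after the $\mathrm{poly}(N)\to n$ blowup of the gadget, the exponent becomes $\Theta(1/\log\log k)$ --- none of the known UG SA gaps are quoted, let alone shown to have this dependence. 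Third, in the lifting step, drawing \emph{independent} random shifts per block is exactly what can break marginal consistency across different supersets of the same $S$, unless the shifts are themselves encoded into the local distribution in a set-independent way; you note this worry but do not resolve it. Until these pieces are supplied, the proposal does not establish the theorem.
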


For \mink on DAGs we prove in Section \ref{sec:mink} 
the following approximation for any $k$,
not necessarily a constant.
\begin{theorem}\label{thm:deletion1}
	The standard LP relaxation for \mink on $n$-vertex DAGs can be
	solved in polynomial time for $k=\{2,\dots, n-1\}$ and yields
	a $k$-approximation. The same approximation factor is also
	obtained by a combinatorial algorithm.
\end{theorem}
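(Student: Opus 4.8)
The plan is to set up the natural covering LP for \mink, observe that a single threshold step rounds it within factor $k$, and then read off the combinatorial algorithm as the matching local-ratio procedure. Let $D=(V,A)$ be the input DAG with arc weights $w_e\ge 0$. Since any surviving directed path of length at least $k$ contains a directed subpath with exactly $k$ arcs, it suffices to destroy every path of length exactly $k$; let $\mathcal{P}_k$ be the (possibly exponentially large) collection of such paths. The LP introduces a variable $x_e\in[0,1]$ for the deletion of arc $e$ and reads: minimize $\sum_{e\in A} w_e x_e$ subject to $\sum_{e\in P} x_e\ge 1$ for every $P\in\mathcal{P}_k$, and $x_e\ge 0$. (For $k=n$ an $n$-vertex DAG has no path of length $k$ at all, so there is nothing to delete; this is why the relevant range is $k\in\{2,\dots,n-1\}$.)

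First I would show the LP is solvable in polynomial time with the ellipsoid method, which needs only a polynomial separation oracle. Given a candidate $x^\ast\ge 0$, the most violated constraint is a path in $\mathcal{P}_k$ of minimum $x^\ast$-weight, and this is found by a layered dynamic program over a topological order of $D$: set $g(v,0)=0$ and $g(v,j)=\min_{(u,v)\in A}\bigl(g(u,j-1)+x^\ast_{(u,v)}\bigr)$ (with the convention that an empty minimum is $+\infty$), so that $g(v,j)$ is the least $x^\ast$-weight of a directed path with exactly $j$ arcs ending at $v$; a constraint is violated precisely when $\min_{v}g(v,k)<1$, and a traceback returns the offending path. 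One could equally use the variables $g(\cdot,\cdot)$ to write a compact LP and avoid ellipsoid altogether.

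For the rounding, take an optimal solution $x^\ast$ and let $F=\{e\in A:\ x^\ast_e\ge 1/k\}$. Every $P\in\mathcal{P}_k$ has exactly $k$ arcs with $\sum_{e\in P}x^\ast_e\ge 1$, so by averaging some arc of $P$ lies in $F$; hence $D\setminus F$ contains no path of length $k$, and therefore none of length $\ge k$, so $F$ is feasible. Its cost is $\sum_{e\in F}w_e\le k\sum_{e\in F}w_e x^\ast_e\le k\sum_{e\in A}w_e x^\ast_e=k\cdot\mathrm{OPT}_{\mathrm{LP}}\le k\cdot\mathrm{OPT}$, which is the claimed $k$-approximation.

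The combinatorial algorithm is the local-ratio analogue. Recursively: if $D$ has no length-$k$ path, return $\emptyset$; else if some arc $e$ has $w_e=0$, recurse on $D-e$ and add $e$ to the returned set; otherwise find any $P\in\mathcal{P}_k$ by the DP above, let $\delta=\min_{e\in P}w_e>0$, and recurse on $D$ with the weight of each arc of $P$ decreased by $\delta$. Feasibility follows by the same induction as for the rounding. For the ratio, at a decrement step write $w=w'+\delta\,\mathbf{1}_P$: any feasible arc set $S$ meets $P$, so $w(S)\ge w'(S)+\delta$ and hence $\mathrm{OPT}(w)\ge\mathrm{OPT}(w')+\delta$; meanwhile the returned set $F$ satisfies $|F\cap P|\le |P|=k$, so $w(F)=w'(F)+\delta|F\cap P|\le k\,\mathrm{OPT}(w')+k\delta\le k\,\mathrm{OPT}(w)$ by induction (the base and zero-weight cases are immediate). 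Each decrement creates at least one new zero-weight arc and each arc is deleted at most once, so there are $O(|A|)$ recursive calls, each running in polynomial time. The only steps that genuinely use the acyclicity of $D$ are the separation/search oracle — a layered DP over a topological order — and the reduction to covering only $\mathcal{P}_k$; I do not expect any obstacle beyond these bookkeeping points and the degenerate ranges of $k$.
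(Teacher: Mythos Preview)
Your proposal is correct and mirrors the paper's own proof almost exactly: the same covering LP, the same dynamic-programming separation oracle over a topological order (the paper's $a^v_t$ is your $g(v,t)$), the same $1/k$ threshold rounding, and the same local-ratio algorithm for the combinatorial bound. The only minor inaccuracy is your closing remark that the reduction to covering only $\mathcal{P}_k$ relies on acyclicity---it does not, since any simple directed path of length at least $k$ contains a length-$k$ subpath in an arbitrary digraph; acyclicity is needed solely for the polynomial-time path-finding DP.
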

We complement the above by showing in Section \ref{sec:mink-hardness} a
$\left(\lfloor\slfrac{k}{2}\rfloor-\eps\right)$ hardness factor for \mink via a
 simple gadget reduction from
Svensson's~\cite{svensson2012hardness} $(k-\eps)$-inapproximability 
for the vertex deletion version for constant $k$, assuming the
UGC.

\subsection{Overview of Techniques}

The approximation algorithms we obtain for \maxk and its
generalizations are based on rounding the standard LP relaxation for
the instance. \maxk is viewed as a constraint satisfaction problem (CSP) 
over alphabet
$[k]$, and the corresponding LP relaxation has $[0,1]$-valued variables 
$x^v_i$ for each vertex $v$ and label $i \in [k]$, and $y^e_{ij}$ for
each edge $(u,v)$ and pairs of labels $i$ and $j$ to $u$ and $v$
respectively. We show that a generalization of the 
rounding algorithm used by Trevisan~\cite{trevisan1998parallel} for approximating 
$q$-ary boolean CSPs yields a
$2$-approximation in our setting. The key
ingredient in the analysis is a lower bound
on a certain product of the $\{x^u_i\}, \{x^v_i\}$ variables 
corresponding to the end points of an edge $e=(u,v)$
in terms of the $\{y^e_{ij}\}$ variables for that edge. This improves 
a weaker bound shown by Grandoni et al.~\cite{grandoni2015lp}.
For \rmasoff, a
modification of this rounding algorithm yields the improved
approximation. 

The construction of the integrality gap for the LP
augmented with  Sherali-Adams constraints for \maxk begins
with a simple integrality gap instance for the basic LP
relaxation. This instance is appropriately sparsified  to ensure
that subgraphs of polynomially 
large (but bounded) size are \emph{tree-like}.
On trees, it is easy to construct a distribution over  
labelings from $[k]$
to the vertices (thought of as $k$-orderings), 
such that the marginal distribution on each vertex is uniform
over $[k]$ and a large fraction of edges are satisfied in expectation.
Using this along with the sparsification allows us to construct 
distributions for each bounded subgraph, i.e. good local distributions.
Finally a geometric 
\emph{embedding} of the marginals of these distributions
followed by Gaussian 
rounding yields modified local distributions which are
\emph{consistent} on the common vertex sets. These distributions
correspond to an LP solution with a high objective value,  
for large number of rounds of Sherali-Adams constraints. Our
construction follows the approach in a recent work of
Lee~\cite{lee2014hardness} which is based on earlier works of Arora et
al.~\cite{arora2002proving} and Charikar et al.~\cite{charikar2009integrality}.

For the \mink problem, the approximation algorithms stated in Theorem
\ref{thm:deletion1} are obtained using the acyclicity of the input
DAG. In particular, we show that both, the LP rounding and the
local ratio approach, can be implemented in polynomial time on DAGs 
yielding  $k$-approximate solutions.

\section{Preliminaries} \label{sec:prelims}
This section formally defines the problems studied in this paper. We
begin with \maxk.
\begin{definition}\label{def:maxk} \maxk: Given an $n$-vertex digraph $D =
	(V,A)$ with a non-negative weight function $w : A \to
	\mathbb{R}^+$, 
	and an integer $2 \leq k \leq n$, find a labeling to the
	vertices $\ell: V \rightarrow [k]$ that maximizes the weighted 
	fraction of edges $e = (u,v) \in A$ such that $\ell(u) <
	\ell(v)$, i.e. forward edges.
\end{definition}
It can be seen that \maxk is equivalent to the problem of computing
the maximum weighted subgraph of $D$ which is acyclic and does not
contain any directed path of length $k$. The
following lemma implies this equivalence.
\begin{restatable}{lemma}{equivalence}\label{lem:equiv}
	Given a digraph $D = (V,A)$, there exists a labeling
	$\ell : V \to [k]$ with each edge $e = (u,v) \in A$ satisfying
	$\ell(u)< \ell(v)$, if and only if $D$ is acyclic and does not
	contain any directed path of length $k$.
\end{restatable}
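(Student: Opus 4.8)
The plan is to prove both directions of the equivalence. The forward direction is almost immediate: suppose there is a labeling $\ell : V \to [k]$ with $\ell(u) < \ell(v)$ for every edge $(u,v) \in A$. Then along any directed path $v_0 \to v_1 \to \cdots \to v_t$ the labels strictly increase, so $\ell(v_0) < \ell(v_1) < \cdots < \ell(v_t)$, which forces $t \le k-1$ since the labels lie in $[k]$. In particular there is no directed path of length $k$, and there is no directed cycle (a cycle would give $\ell(v_0) < \ell(v_0)$). So $D$ is acyclic with no directed path of length $k$.

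For the reverse direction, assume $D$ is acyclic and contains no directed path of length $k$. I would define $\ell(v)$ to be one plus the length of the longest directed path in $D$ that \emph{ends} at $v$ (equivalently, the number of vertices on a longest path terminating at $v$). Since $D$ is acyclic this quantity is well defined and finite, and since there is no path of length $k$ we have $\ell(v) \in \{1, \dots, k\}$, so $\ell : V \to [k]$. It remains to check that every edge goes forward: if $(u,v) \in A$ and $P$ is a longest directed path ending at $u$, then appending the edge $(u,v)$ yields a directed path ending at $v$ of length one greater — this concatenation is a valid simple path because acyclicity guarantees $v$ does not already appear on $P$. Hence $\ell(v) \ge \ell(u) + 1 > \ell(u)$, as desired.

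The only subtlety worth spelling out is the well-definedness of $\ell$ in the reverse direction: one must confirm that ``longest directed path ending at $v$'' makes sense, i.e. that the set of path lengths is nonempty (the trivial one-vertex path gives length $0$) and bounded (which follows since any directed path in an acyclic graph is simple, hence has at most $n$ vertices, and in fact at most $k$ by hypothesis). I expect this to be the main — though still routine — point to get right; everything else is a direct check on edges and paths. One can also note that $\ell$ can be computed in polynomial time by a topological-sort-based dynamic program, which is what makes the equivalence useful algorithmically, though that is not needed for the statement itself.
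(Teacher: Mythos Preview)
Your proof is correct and follows essentially the same approach as the paper: both directions match, and in the reverse direction the paper likewise defines $\ell(v)$ as one plus the length of the longest directed path ending at $v$ (phrased there as ``from any source to $v$''). Your write-up is in fact a bit more careful about why the concatenation step is valid and why $\ell$ is well defined.
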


\begin{proof}
	If such a labeling $\ell$ exists then every edge is directed
	from a lower labeled vertex to a higher labeled one. Thus, 
	there are no
	directed cycles in $D$. Furthermore, any directed path in $D$ has
	at most $k$ vertices on it, and is of length at most $k-1$. On
	the other hand, if $D$ satisfies the second condition in the
	lemma, then
	choose $\ell(v)$ for any vertex $v$ to be $t_v + 1$, where
	$t_v$ is the length of the
	longest path from any source to $v$. It is easy to see that
	$\ell(v) \in [k]$ and for each edge $(u,v)$, $\ell(u) <
	\ell(v)$. \qed
\end{proof}

The generalizations of \maxk studied in this work, viz. 
\rmas and \rmasoff, are defined as
follows.
\begin{definition}\label{def:rmasoff}\rmasoff: 
	The input is a digraph $D = (V, A)$ with a
	finite subset $S_v \subseteq \mathbb{Z}^+$ of labels for each
	vertex $v \in V$, a non-negative weight function $w :
	A\to\mathbb{R}^+$, and offsets $o_e \in \mathbb{Z}^+$ for each
	edge $e \in A$. A labeling $\ell$ to $V$ s.t. $\ell(v) \in S_v,
	\forall v \in V$ satisfies an edge $e= (u,v)$ if $\ell(u) + o_e \leq
	\ell(v)$. The goal is to compute a labeling that maximizes the
	weighted fraction of satisfied edges. \rmas is the special
	case when each offset is unit.
\end{definition}

As mentioned earlier, \mink is not approximable on general digraphs. 
Therefore, we define it only on DAGs.
\begin{definition}\mink:  Given a DAG $D =
	(V,A)$ with a non-negative weight function $w : A \to
	\mathbb{R}^+$, 
	and an integer $2 \leq k \leq n-1$, find a minimum weight set
	of edges $F \subseteq A$ such that $(V,A\setminus F)$ does not
	contain any path of length $k$.
\end{definition}
The rest of this section describes the LP relaxations for \maxk and
\rmasoff studied in this paper. 

\subsection{LP Relaxation for \maxk}
From Definition \ref{def:maxk}, an instance $\mc{I}$ of 
\maxk is given by $D = (V, A)$, $k$, and $w$. 
Viewing it as a CSP over label set $[k]$, the standard LP
relaxation given in Figure \ref{fig:LPmaxk} is defined over 
variables $x^v_i$ for each vertex $v$ and
label $i$, and $y^{e}_{ij}$ for each edge $e = (u,v)$ and labels $i$
to $u$ and $j$ to $v$.
\begin{figure}[h]
\setlength{\fboxsep}{10pt}
\begin{center}
\fbox{
\begin{minipage}[c]{4.3in}
\begin{equation}
	\max
	\sum_{e \in A}w(e)\cdot\sum_{\substack{i,j \in [k]\\i <
	j}}y^e_{ij}
\end{equation}
subject to,
\begin{align}
	& \forall v \in V, && \displaystyle \sum_{i \in [k]} x^v_i
	= 1 .
\label{eqn:LPvsum}\\
\ &\ & \nonumber\\
  & \forall e = (u,v) \in A, \textnormal{ and } i,j \in [k], &&
\displaystyle \sum_{\ell \in [k]}y^e_{i\ell} = x^u_{i}, 
\label{eqn:LPuedgesum} \\
&\textnormal{and,}&&\displaystyle \sum_{\ell \in [k]}y^e_{\ell j} =
x^v_{j}. 
\label{eqn:LPvedgesum} \\
&&& \nonumber \\
&\forall v \in V,\textnormal{ and } i \in [k], && x^v_i \geq 0.
	\label{eqn:LPxpositive}\\
&\forall e \in A,\textnormal{ and } i,j \in [k],  && y^e_{ij} \geq 0.
\label{eqn:LPypositive}
\end{align}
\caption{LP Relaxation for instance $\mc{I}$ of \maxk.}
\label{fig:LPmaxk}
\end{minipage}
}
\end{center}
\end{figure}

\noindent
{\bf Sherali-Adams Constraints.} Let $z^S_\sigma \in [0,1]$ 
be a variable corresponding
to a subset $S$ of vertices, and a labeling 
$\sigma : S\to [k]$. The LP relaxation in Figure
\ref{fig:LPmaxk} can augmented with $r$ rounds of Sherali-Adams
constraints which are 
defined over the variables $\{z^S_\sigma\ \mid\ 1 \leq |S| \leq
r+1\}$. The additional constraints are given in Figure \ref{fig:SAmaxk}. The
Sherali-Adams variables define, for each subset $S$ of at most
$(r+1)$ vertices, a distribution over the possible labelings from
$[k]$ to the vertices in $S$. The constraints given by Equation 
\eqref{eqn:SAcons}  ensure that these
distributions are consistent across subsets. Additionally, 
Equations \eqref{eqn:SAxcons} and \eqref{eqn:SAycons} ensure the
consistency of these distributions with the variables of the
standard LP relaxation given in Figure \ref{fig:LPmaxk}.
\begin{figure}[h]
\setlength{\fboxsep}{10pt}
\begin{center}
\fbox{
\begin{minipage}[c]{4.3in}
\begin{align}
	&\forall S\subseteq T\subseteq V, && \nonumber \\
	& 1\leq |S|, |T|\leq r+1, && \nonumber \\	
	& \textnormal{and }
	\sigma:S\to [k], &&  z^S_\sigma = \sum_{\substack{\rho: T\to [k]\\
	\rho|_S = \sigma}} z^T_\rho. \label{eqn:SAcons} \\
	& \forall S\subseteq V, 1\leq |S|\leq r+1, && \nonumber \\
	& \textnormal{and }\sigma:S\to [k],&& 0 \leq 
	z^S_\sigma \leq 1. \\
	&&& \nonumber \\
	&\forall v\in V,\textnormal{ and }\sigma:\{v\}\to [k], &&
	\nonumber \\
	&\textnormal{s.t. }\sigma(v)=i, && x^v_i = 
	z^{\{v\}}_\sigma. \label{eqn:SAxcons} \\
	&&&\nonumber \\
	&\forall e=(u,v)\in A,\textnormal{ and,} && \nonumber \\ 
	& \sigma:\{u,v\}\to
	[k], &&  \nonumber \\ 
	&\textnormal{s.t. }(\sigma(u),\sigma(v)) = (i,j), && 
	y^e_{ij} = z^{\{u,v\}}_\sigma. \label{eqn:SAycons}
\end{align}
\caption{$r$-round Sherali-Adams constraints for LP relaxation in
Figure \ref{fig:LPmaxk}.}
\label{fig:SAmaxk}
\end{minipage}
}\end{center}
\end{figure}

\smallskip
\noindent
{\bf LP Relaxation for \rmas and \rmasoff.} 
The LP relaxation for \rmas is
a generalization of the one in Figure
\ref{fig:LPmaxk} for \maxk and we omit a detailed definition. Let $\mathcal{S} = \cup_{v \in V} S_v$ denote the set of all labels.
For convenience, we define variables $\{x^v_i\,\mid\, v\in V, i\in \mathcal{S}\}$ and $\{y^e_{ij}\,\mid\,
e=(u,v)\in A, i,j\in \mathcal{S}\}$ and force the infeasible assignments to be zero, i.e. $x^v_i = 0$ for $i \notin S_v$. The other constraints
are modified accordingly.  For \rmasoff, an additional change is
that the contribution to the objective from each edge $e = (u,v)$ is
$ \sum_{i\in S_u, j\in S_v, i + o_e\leq j}y^e_{ij}$. 

\section{A 2-Approximation for \maxk}
\label{sec:lp-rounding}

This section proves the following theorem that implies 
Theorem \ref{thm:main1}.
\begin{theorem}
\label{thm:2-approx} Let $\{x^v_i\}, \{y^e_{ij}\}$ denote an optimal solution
to the LP in Figure \ref{fig:LPmaxk}. Let $\ell : V\to[k]$ 
be a randomized labeling
obtained by independently assigning to each vertex $v$ 
label $i$  with probability
$\slfrac{1}{2k} + \slfrac{x^v_i}{2}$. Then, for any edge $e=(u,v)$,
\[\Pr[ \ell(u) < \ell(v)] \geq \frac{1}{2} \left(\sum_{\substack{i,j \in [k]\\i <
j}}y^e_{ij}\right).\] 
\end{theorem}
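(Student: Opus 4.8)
The plan is to check the rounding is legitimate, reduce to a per-edge statement, expand, and then prove a combinatorial lemma on the LP variables which is where the real work sits. First I would verify the distribution: by \eqref{eqn:LPvsum} we have $\sum_{i\in[k]}\bigl(\tfrac{1}{2k}+\tfrac{x^v_i}{2}\bigr)=\tfrac12+\tfrac12=1$, and each summand is at least $\tfrac{1}{2k}>0$ by \eqref{eqn:LPxpositive}, so writing $p^v_i:=\tfrac{1}{2k}+\tfrac{x^v_i}{2}$ for the probability that $\ell(v)=i$, the $p^v_\cdot$ form an honest distribution on $[k]$. Since distinct vertices are labeled independently, for an edge $e=(u,v)$ we get $\Pr[\ell(u)<\ell(v)]=\sum_{i<j}p^u_i p^v_j$. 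Substituting $p=\tfrac{1}{2k}+\tfrac{x}{2}$ and collecting terms, this equals $\binom{k}{2}\tfrac{1}{4k^2}+\tfrac{1}{4k}\sum_{i<j}(x^u_i+x^v_j)+\tfrac14\sum_{i<j}x^u_i x^v_j$; using $\sum_i x^v_i=1$ the middle (linear) part can be re-expressed through $\sum_i i\,x^u_i$ and $\sum_j j\,x^v_j$.

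So the theorem reduces to the following lemma, to be proved using only the edge's LP constraints \eqref{eqn:LPuedgesum}, \eqref{eqn:LPvedgesum}, \eqref{eqn:LPypositive}:
\[
\sum_{i<j}\Bigl(\tfrac1k+x^u_i\Bigr)\Bigl(\tfrac1k+x^v_j\Bigr)\ \ge\ 2\sum_{i<j}y^e_{ij}.
\]
The facts about $y^e$ I would feed in are: (a) $y^e_{ij}\le x^u_i$ and $y^e_{ij}\le x^v_j$, hence $(y^e_{ij})^2\le x^u_i x^v_j$ (morally the bound used by Grandoni et al.~\cite{grandoni2015lp}, and too weak on its own here); and, more usefully, (b) for each label $i$ the forward LP mass leaving $i$ satisfies $\sum_{j>i}y^e_{ij}\le x^u_i$ and $\sum_{j>i}y^e_{ij}\le\sum_{j>i}x^v_j$, so $\sum_{j>i}y^e_{ij}\le\min\bigl(x^u_i,\sum_{j>i}x^v_j\bigr)$ and therefore $\sum_{i<j}y^e_{ij}\le\sum_i\min\bigl(x^u_i,\sum_{j>i}x^v_j\bigr)$.

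To prove the lemma I would rewrite its left side as $4\sum_i p^u_i q_i$ with $q_i:=\sum_{j>i}p^v_j=\tfrac{k-i}{2k}+\tfrac12\sum_{j>i}x^v_j=\Pr[\ell(v)>i]$, noting $p^u_i\ge\max\bigl(\tfrac12 x^u_i,\tfrac{1}{2k}\bigr)$ and $q_i\ge\max\bigl(\tfrac12\sum_{j>i}x^v_j,\tfrac{k-i}{2k}\bigr)$. When the marginals are spread out, the desired bound comes from a clean label-by-label comparison of $p^u_i q_i$ against $\tfrac12\min\bigl(x^u_i,\sum_{j>i}x^v_j\bigr)$, which should reduce to a perfect-square inequality (for $k=2$ it is literally $(2y^e_{12}-1)^2\ge0$). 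The main obstacle is that this label-by-label comparison fails once $x^u$, or the forward $x^v$-mass, is concentrated on a single label: there the one matching pair contributes too little, and the required probability must instead be harvested from the additive $\tfrac{1}{2k}$ floors spread over the many other labels, so the lemma cannot be argued term by term — one must argue globally, balancing the shortfall at the concentrated label against the surplus elsewhere. Finally, getting the constant to be exactly $2$ (a cruder argument yields only $4$) forces one to retain and exploit the cross term $\tfrac14\sum_{i<j}x^u_i x^v_j$ in concert with the linear mean-terms and the constant $\binom{k}{2}\tfrac{1}{4k^2}$; this tightening is the improvement over~\cite{grandoni2015lp}.
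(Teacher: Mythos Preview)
Your setup is correct: the distribution is valid, the expansion is right, and the reduction to the per-edge inequality
\[
\sum_{i<j}\Bigl(\tfrac{1}{k}+x^u_i\Bigr)\Bigl(\tfrac{1}{k}+x^v_j\Bigr)\ \ge\ 2\sum_{i<j}y^e_{ij}
\]
is exactly the target. But you stop short of proving it. You correctly diagnose that the label-by-label comparison $p^u_i q_i \ge \tfrac{1}{2}\min\bigl(x^u_i,\sum_{j>i}x^v_j\bigr)$ fails under concentration, and you say one must ``argue globally'' --- but you never supply that global argument. The proposal ends as a description of the difficulty, not a resolution of it.

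The paper's proof fills this gap with three concrete ingredients you are missing. Write $z_e=\sum_{i<j}y^e_{ij}$. First, the linear term: a direct coefficient count gives $\sum_{i<j}(x^u_i+x^v_j)\ge k z_e$, since $y^e_{ab}$ (for $a<b$) contributes $(k-a)+(b-1)\ge k$ times. Second, and this is the heart of the matter, the bilinear term satisfies
\[
\sum_{i<j}x^u_i x^v_j \;\ge\; \frac{k}{2(k-1)}\,z_e^2,
\]
proved (Lemma~\ref{lem:psd}) by rewriting the deficit as a quadratic form $\overline{a}^{\sf T}M\overline{a}$ and showing $M$ is positive semidefinite via the Gram matrix of the normalized simplex corners. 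Third, substituting both bounds gives $\Pr[\ell(u)<\ell(v)]\ge \tfrac{z_e}{4}\bigl[1+\tfrac{1}{2}\bigl(\tfrac{k-1}{kz_e}+\tfrac{kz_e}{k-1}\bigr)\bigr]$, and the bracket is $\ge 2$ by AM--GM. Your route through fact~(b) only upper-bounds $z_e$, forcing you to prove a \emph{stronger} inequality than needed; the paper instead lower-bounds $\sum_{i<j}x^u_i x^v_j$ directly in terms of $z_e^2$, which is what makes the global argument go through without any per-label case analysis.
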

To analyze the rounding given above, we  need the following key 
lemma that bounds the sum of
products of row and column sums of a matrix in terms of the matrix
entries. It improves a weaker bound shown by Grandoni
et al.~\cite{grandoni2015lp} and also generalizes to arbitrary
offsets.
\begin{lemma}
\label{lem:psd}
Let $\mathbb{A} = [a_{ij}]$ be a $k \times k$ matrix with non-negative entries.
Let $r_i = \sum_j a_{ij}$ and $c_j = \sum_{i} a_{ij}$ denote the sum
of entries in the $i^\text{th}$ row and $j^\text{th}$ column
respectively, and let $1 \leq \off \leq k-1$ be an integer offset . 
Then,
\begin{equation}
\sum_{\substack{i + \off \leq j \\ i,j \in [k]}} r_i c_j \geq
\frac{k-\off+1}{2(k-\off)} \left(\sum_{\substack{i + \off \leq j \\ i,j \in [k]}} 
a_{ij}\right)^2. \label{eqn:lempsd}
\end{equation}
\end{lemma}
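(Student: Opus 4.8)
The plan is to reduce \eqref{eqn:lempsd} to an inequality about a single upper-triangular matrix and to prove that by a ``same-left-endpoint'' charging argument. Write $S := \sum_{i+\off\le j}a_{ij}$ and $m := k-\off\ge 1$, so the target is $\sum_{i+\off\le j} r_i c_j \ge \frac{m+1}{2m}S^2$. First I would pass to the ``forward part'' $\mathbb{B}=[b_{ij}]$ of $\mathbb{A}$, where $b_{ij}=a_{ij}$ if $i+\off\le j$ and $b_{ij}=0$ otherwise, with row sums $\rho_i$ and column sums $\gamma_j$. Since all entries are non-negative, $r_i\ge\rho_i\ge 0$ and $c_j\ge\gamma_j\ge 0$, so it suffices to bound $\sum_{i+\off\le j}\rho_i\gamma_j$ from below. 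As $\mathbb{B}$ is supported on cells $(i,j)$ with $i+\off\le j$, shifting the column index by $\off$ produces an $m\times m$ upper-triangular non-negative matrix whose relevant row/column sums are exactly these $\rho_i,\gamma_j$ and whose entries still sum to $S$; so the lemma reduces to the following claim about such a matrix $\mathbb{B}=[b_{ij}]$ (here $b_{ij}=0$ for $i>j$, with row sums $\rho_i$, column sums $\gamma_j$, total $S$): $\sum_{i\le j}\rho_i\gamma_j\ge\frac{m+1}{2m}S^2$. I would remark that the upper-triangular hypothesis is essential — for arbitrary non-negative vectors $\rho,\gamma$ with $\sum_i\rho_i=\sum_j\gamma_j=S$ the analogous bound is false — and this extra structure is exactly what lets us improve on the estimate of \cite{grandoni2015lp}.

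To prove the claim, expand $\rho_i=\sum_a b_{ia}$ and $\gamma_j=\sum_c b_{cj}$, so that $\sum_{i\le j}\rho_i\gamma_j=\sum b_X b_Y$ over ordered pairs of non-zero cells $X=(i,a)$, $Y=(c,j)$ with $\mathrm{row}(X)\le\mathrm{col}(Y)$; to a cell $X=(i,j)$ attach the integer interval $I_X=[i,j]\subseteq[m]$, which is non-empty since $i\le j$. The key combinatorial observation is that for any two cells one cannot have both $\mathrm{row}(X)>\mathrm{col}(Y)$ and $\mathrm{row}(Y)>\mathrm{col}(X)$ (chaining these with $\mathrm{col}\ge\mathrm{row}$ is contradictory). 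Hence, partitioning ordered pairs into four ``quadrants'' by whether $\mathrm{row}(X)\le\mathrm{col}(Y)$ and whether $\mathrm{row}(Y)\le\mathrm{col}(X)$, one quadrant is empty, the ``both'' quadrant is exactly the pairs whose intervals intersect, and the remaining two are interchanged by $X\leftrightarrow Y$. Adding up and using $\sum_{X,Y}b_Xb_Y=S^2$ yields $2\sum_{i\le j}\rho_i\gamma_j = S^2 + \Sigma$, where $\Sigma:=\sum_{X,Y:\,I_X\cap I_Y\ne\emptyset} b_X b_Y$.

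It then remains to show $\Sigma\ge S^2/m$. For $p\in[m]$ put $L_p:=\sum_{X:\,\mathrm{row}(X)=p} b_X$, so $\sum_p L_p=S$. Two cells with the same left endpoint $p$ have intervals both containing $p$, hence intersecting, so $\Sigma\ge\sum_{p\in[m]}L_p^2\ge\frac1m(\sum_p L_p)^2=S^2/m$ by Cauchy--Schwarz; substituting into $2\sum_{i\le j}\rho_i\gamma_j=S^2+\Sigma$ gives $\sum_{i\le j}\rho_i\gamma_j\ge\frac{m+1}{2m}S^2$ and, unwinding the reduction, \eqref{eqn:lempsd}. I expect the main obstacle to be the first step rather than the computation: one has to recognize that the bound is false for the numbers $\rho_i,\gamma_j$ in isolation and that the triangular matrix behind them must be retained, after which charging by left endpoints is the natural move; the diagonal matrix shows the constant $1/m=1/(k-\off)$, and hence the factor $\frac{k-\off+1}{2(k-\off)}$, is tight.
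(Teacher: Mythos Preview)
Your argument is correct. The reduction to the forward part and the shift to an $m\times m$ upper-triangular matrix are both valid; the identity $2\sum_{i\le j}\rho_i\gamma_j=S^2+\Sigma$ follows cleanly from your four-quadrant partition (the ``both fail'' quadrant is indeed empty by the chain $\mathrm{row}(X)>\mathrm{col}(Y)\ge\mathrm{row}(Y)>\mathrm{col}(X)\ge\mathrm{row}(X)$); and the bound $\Sigma\ge\sum_p L_p^2\ge S^2/m$ is just Cauchy--Schwarz on the row sums of the forward part.

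The paper takes a different-looking route: after an analogous expansion it rewrites the desired inequality as $\bar a^{\sf T}M\bar a\ge 0$ for an explicit symmetric matrix $M$ indexed by the forward cells, and certifies $M\succeq 0$ by exhibiting it as the Gram matrix of the corner vectors of the $(k-\theta-1)$-simplex (one vector per row index). If you unwind that certificate, the quadratic form $\bar a^{\sf T}M\bar a\ge 0$ is exactly $(k-\theta)\sum_x\rho_x^2\ge S^2$, i.e.\ the same Cauchy--Schwarz step you use. So the two proofs coincide at the core; what differs is the packaging. Your interval-intersection identity makes the combinatorics transparent and avoids the detour through PSD matrices and simplex geometry, and it immediately exhibits the tight instance (the diagonal). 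The paper's presentation, on the other hand, makes the connection to Gram matrices explicit, which some readers may find suggestive for generalizations.
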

\begin{proof} The LHS of the above is simplified as,
\begin{align} 
	\sum_{i + \off \leq j}r_ic_j &= \sum_{i+\off \leq j}
	\left[\bigg(\sum_{j'} a_{ij'}\bigg)\bigg(\sum_{i'}
	a_{i'j}\bigg)\right] \label{eqn:RHS1} \\ 
	&\geq \sum_{x + \off \leq y} a_{xy}^2 +
	2\cdot\sum_{\substack{x+\off \leq y\\x 
	+ \off \leq y'\\ y < y'}} a_{xy}a_{xy'} 
	+ \sum_{\substack{x+\off \leq
	y \\x'+\off \leq y'\\x < x'}} 
	a_{xy}a_{x'y'}, \label{eqn:RHS2}
\end{align}
where all the indices above are in $[k]$. Note that \eqref{eqn:RHS2}
follows from \eqref{eqn:RHS1} because: \newline
      (i) For any $x + \off \leq y$, $a_{xy}^2$ appears in
      the RHS of \eqref{eqn:RHS1} when $i=x$ and $j=y$.\newline
      (ii) For $x + \off \leq y$ and $x + \off \leq y'$,
      $a_{xy}a_{xy'}$ appears in the RHS of \eqref{eqn:RHS1} both,
      when $i=x, j=y$, and when $i=x, j=y'$.\newline
      (iii) For any $x + \off \leq y$ and $x' + \off \leq y'$ (say
      $x<x'$), it must be that $x + \off \leq y'$, and hence
      $a_{xy}a_{x'y'}$ appears in the RHS of \eqref{eqn:RHS1}
      when $i=x$ and $j=y'$.\newline
      Thus, we obtain,
      \begin{equation}
	\sum_{i + \off \leq j}r_ic_j \geq 
     \left(\sum_{x + \off \leq y} a_{xy}\right)^2 - 
     \left(\sum_{\substack{x + \off \leq y\\x' + \off \leq y'\\x<x'}}
	     a_{xy}a_{x'y'}\right).
     \end{equation}
Therefore, it is sufficient to show that
\begin{align}
 &\sum_{\substack{x + \off \leq y\\x' + \off \leq y'\\x<x'}}
	a_{xy}a_{x'y'} \leq \frac{k-\off-1}{2(k-\off)}\left(\sum_{x +
	\off \leq y} a_{xy}\right)^2.\label{eqn:squared1}
\end{align}
Substituting, 
$$\left(\sum_{x +\off \leq y} a_{xy}\right)^2 = 
\sum_{x + \off \leq y} a_{xy}^2 + 2\cdot\sum_{\substack{x + \off \leq y\\x
+ \off \leq y'\\ y < y'}} a_{xy}a_{xy'} + 2\cdot\sum_{\substack{x +
\off \leq y\\x' + \off \leq y'\\x<x'}} a_{xy}a_{x'y'},$$
and simplifying, inequality \eqref{eqn:squared1} can be rewritten as,
\begin{align}
 \sum_{x + \off \leq y} a_{xy}^2 + 2\sum_{\substack{x + \off \leq y\\x
+ \off \leq y'\\y  < y'}}&a_{xy}a_{xy'} - \left(\frac{2}{k-\off -1}\right)\cdot
\sum_{\substack{x
+ \off \leq y\\x' + \off \leq y'\\x<x'}} a_{xy}a_{x'y'}\,\geq\, 0 ,
\label{eq:toprove} \\
\Leftrightarrow &\ \ \overline{a}^{\sf T}M\overline{a}\,\geq\, 0, 
  \end{align}
  where $\overline{a} \in \mathbb{R}^{\mathcal{Z}}$,
  $\mathcal{Z} := \{(x,y)\,\mid\, x + \off \leq y \text{ and } x,y \in [k]\}$ with 
  $\overline{a}_{(x,y)} :=
  a_{xy}$, and $M \in \mathbb{R}^{\mathcal{Z}\times\mathcal{Z}}$ 
  is a symmetric matrix defined as follows: 
\begin{align}
	M_{(x,y)(x',y')} = 
  \begin{cases}
	  1 &\text{if $(x,y) = (x',y')$,}\\
    1 &\text{if $x'=x$, and $y\neq y'$,}\\
	  \slfrac{-1}{(k-\off-1)} &\text{if $x\neq x'$.}
  \end{cases}
\end{align}
To complete the proof of the lemma we show that $M$ is positive
semidefinite. Consider the set of unit vectors
$\{v_x\,\mid\, 1\leq x\leq k-\off\}$ given by the normalized corner
points of the $(k-\off-1)$-dimensional simplex centered at the origin. 
It is easy to see
(for e.g. in Lemma 3 of \cite{frieze1997improved}) that, $\langle
v_x,v_{x'}\rangle = \slfrac{-1}{(k-\off-1)}$ if $x \neq x'$. 
Thus, $M = L^{\sf T}L$, where $L$ is a matrix whose columns are
indexed by $\mathcal{Z}$ such that 
the $(x,y)$ column is $v_x$. 
Therefore, $M$ is positive 
semidefinite.
\qed
\end{proof}
\begin{proof}[of Theorem \ref{thm:2-approx}]
For brevity, let $z_e = \sum_{i <j}y^e_{ij}$ denote the contribution
of the edge $e$ to the LP objective. From the definition of the
rounding procedure we have, 
\begin{align}
\Pr[\ell(u) < \ell(v)] &= \sum_{i<j} \Pr[\ell(u) = i] \Pr[\ell(v)=j] \nonumber \\
&= \sum_{i<j} \left(\frac{1}{2k} + \frac{x^u_{i}}{2}\right)\left(\frac{1}{2k} + 
	\frac{x^v_{j}}{2}\right)\nonumber \\
&= \frac{1}{4}\left(\frac{(k-1)}{2k} + \frac{1}{k} \sum_{i<j}(x^u_{i}
+ x^v_{j}) + \sum_{i<j} x^u_{i}x^v_{j} \right)\nonumber 
\intertext{We can now apply Lemma \ref{lem:psd} to the $k \times k$ matrix $[y^e_{ij}]$.  The LP constraints guarantee that $r_i = x^u_i$ and $c_j = x^v_j$ are equal to the row and column sums respectively. Further, substituting offset $\off = 1$, we obtain }\
\Pr[\ell(u) < \ell(v)] &\geq \frac{1}{4}\left(\frac{(k-1)}{2k} + \frac{1}{k} \sum_{i<j}(x^u_{i} + x^v_{j}) +  \frac{k}{2(k-1)}z_e^2 \right). \label{eq:A}\\
\intertext{On the other hand,}
\sum_{i<j}(x^u_{i} + x^v_{j}) &= \sum_{i=1}^{k-1} (k-i)x^u_{i} +
\sum_{j=2}^{k} (j-1)x^v_{j} \nonumber \\
&\geq \sum_{i=1}^{k-1} \left[(k-i) \sum_{j'>i} y^e_{ij'} \right] +
\sum_{j=2}^{k} \left[(j-1)\sum_{i'<j} y^e_{i'j} \right].
\label{eqn:expanded}
\end{align}
For $a<b$, $y^e_{ab}$ appears $(k-a)$ times in the RHS of the above
inequality when $i=a$, and $(b-1)$ times when $j=b$. Since $k-a+b-1
\geq k$, we obtain that RHS of Equation \eqref{eqn:expanded} is 
lower bounded by $k\sum_{a<b}y^e_{ab} = kz_e$. Substituting back into Equation
\eqref{eq:A} and simplifying gives us that $\Pr[\ell(u) < \ell(v)]$ is at least,
\begin{equation}
	\frac{z_e}{4}\left[1 + \frac{1}{2}\left(\frac{(k-1)}{kz_e} +  
	\frac{kz_e}{(k-1)}\right)\right] \geq \frac{z_e}{4}\left(1 +
	1\right) = \frac{z_e}{2},
\end{equation}
where we use $t + 1/t \geq 2$ for $t > 0$.
\qed
\end{proof}

\section{Approximation for \rmasoff}
\label{sec:generalizations}
Let $D = (V,A)$, $\{S_v\}_{v\in V}$, $w$, and $\{o_e\}_{e\in A}$ 
constitute an instance of \rmasoff
as given in Definition \ref{def:rmasoff}.
Without loss of generality, one can assume that for each edge $e =
(u,v) \in A$,  $\min(S_u) + o_e \leq \max(S_v)$, otherwise
no feasible solution can satisfy $e$ and that edge can be removed. 
A simple randomized
strategy that independently assigns each vertex $v$ either $\ell^v_{min}
:= \min(S_v)$
or $\ell^v_{max} := \max(S_v)$
with equal probability is a $4$-approximation. The
recent work of Grandoni et al.~\cite{grandoni2015lp} show that
combining this randomized scheme with an appropriate LP-rounding
yields a $2\sqrt{2} \approx 2.828$ approximation algorithm for
\rmas.

We show that a variant of the rounding scheme
developed in Section \ref{sec:lp-rounding} yields an improved
approximation factor for \rmasoff. 
In particular, we prove the following theorem which
implies Theorem \ref{thm:rmasoff}.
\begin{theorem}
\label{thm:rmasapprox}
Let $\{x^v_i\}, \{y^e_{ij}\}$ 
denote an optimal solution to the linear programming relaxation of
\rmasoff described in Section \ref{sec:prelims}. Let $\ell$ be a
randomized labeling obtained by independently assigning labels to
each vertex $v$ with the following probabilities:
  \begin{align}
\Pr[ \ell(v) = i] = 
\begin{cases}
\frac{1}{4} + \frac{x^v_i}{2} \quad &\text{if } i \in \{\ell^v_{min}, \ell^v_{max}\}\\
\frac{x^v_i}{2} &\text{if } i \in S_v \setminus \{\ell^v_{min}, \ell^v_{max}\}
\end{cases}\label{eqn:rmasoffprobs}   
\end{align}
Then, for any edge $e=(u,v)$ we have
\[\Pr[ \ell(u)+o_e \leq \ell(v)] \geq \frac{1}{4}\left(1 + \frac{1}{\sqrt{2}} \right) \left(\sum_{\substack{i\in S_u, j\in S_v\\ i+o_e \leq j}}y^e_{ij}\right).\]
\end{theorem}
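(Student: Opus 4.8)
The plan is to mimic the analysis of Theorem~\ref{thm:2-approx}, but to replace the uniform "boost" of $1/(2k)$ on every label by a boost of $1/4$ concentrated on just the two extreme labels $\ell^v_{\min}, \ell^v_{\max}$ of each vertex. Fix an edge $e = (u,v)$ and write $z_e = \sum_{i \in S_u, j \in S_v, i + o_e \le j} y^e_{ij}$ for its contribution to the LP objective. Expanding $\Pr[\ell(u) + o_e \le \ell(v)] = \sum_{i + o_e \le j} \Pr[\ell(u) = i]\,\Pr[\ell(v) = j]$ using \eqref{eqn:rmasoffprobs}, the product splits into a "pure $y$" part $\tfrac14 \sum_{i + o_e \le j} x^u_i x^v_j$, two "cross" parts of the form $\tfrac18 \sum (x^u_i + x^v_j)$ but restricted to the pairs where one endpoint is extreme, and a constant part coming from the pairs where both endpoints are extreme. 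First I would bound the pure-$y$ part: apply Lemma~\ref{lem:psd} to the matrix $[y^e_{ij}]$ (rows indexed by $S_u$, columns by $S_v$, with the convention that out-of-range entries are $0$), using that the LP constraints force $r_i = x^u_i$ and $c_j = x^v_j$; this gives $\sum_{i+o_e \le j} x^u_i x^v_j \ge \tfrac{k - o_e + 1}{2(k - o_e)} z_e^2 \ge \tfrac12 z_e^2$, so the pure-$y$ part is at least $\tfrac18 z_e^2$.

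The second step is the delicate counting of the cross terms. Since $\min(S_u) + o_e \le \max(S_v)$, the label $\ell^u_{\min}$ is compatible with $\ell^v_{\max}$, so $x^v_{\max}$ appears with coefficient $\tfrac18$ in the sum over all $i \in S_u$ with $i + o_e \le \ell^v_{\max}$, which is \emph{all} of $S_u$; likewise $x^u_{\max}$ appears whenever $\ell^u_{\max} + o_e \le j$. More carefully, I want to show that the contribution of the extreme labels, together with the "both extreme" constant term, is bounded below by something like $\tfrac14 z_e$ (the factor coming from summing $x^u_i$ over $i$ and $x^v_j$ over $j$ against the $y^e_{ij}$ that appear, exactly as in the chain \eqref{eqn:expanded} in the proof of Theorem~\ref{thm:2-approx}, except that here only the extreme rows/columns and the extreme-extreme corner get counted, and I need to check each satisfiable $y^e_{ij}$ still gets "hit" with total weight at least~$1$). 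Concretely: for a satisfied pair $(i,j)$ with $i + o_e \le j$, the term $\Pr[\ell(u)=i]\Pr[\ell(v)=j]$ contributes $\tfrac14 x^u_i x^v_j$ always; if $j = \ell^v_{\max}$ it additionally contributes $\tfrac18 x^u_i$; if $i = \ell^u_{\min}$ it additionally contributes $\tfrac18 x^v_j$; and the corner pair $(\ell^u_{\min}, \ell^v_{\max})$ contributes an extra $\tfrac{1}{16}$. Summing the $\tfrac18$-terms over all satisfiable pairs and using $\sum_i x^u_i = 1 = \sum_j x^v_j$ restricted to the relevant ranges, I expect to collect at least $\tfrac18 z_e + \tfrac18 z_e = \tfrac14 z_e$ plus a nonnegative corner term; I would be careful to track exactly which feasibility restrictions on $S_u, S_v$ are needed here (this is where the $\min(S_u) + o_e \le \max(S_v)$ normalization is used).

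Putting the pieces together, $\Pr[\ell(u) + o_e \le \ell(v)] \ge \tfrac18 z_e^2 + \tfrac14 z_e \cdot(\text{something}) + c$; collecting terms as a function of $z_e$ I expect the bound to take the shape $\tfrac{z_e}{4}\bigl(\alpha + \beta z_e\bigr)$ (up to the corner term) and then optimizing via $t + 1/t \ge 2$ — or rather the weighted version $\alpha z_e + \beta/z_e \ge 2\sqrt{\alpha\beta}$ — to pin down where the constant $\tfrac14(1 + \tfrac1{\sqrt2})$ comes from: the $\sqrt 2$ is precisely $\sqrt{1/(1/2)}$ or an artifact of balancing a $\tfrac18 z_e^2$ term against a $\tfrac14 z_e$ term, i.e. it is the minimum of $\tfrac14 + \tfrac18 z_e + \tfrac{1}{16 z_e}\cdot(\cdots)$ over $z_e \in (0,1]$. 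The main obstacle I anticipate is the bookkeeping in step two: making sure every satisfiable $y^e_{ij}$ is counted with the right multiplicity when the label sets $S_u, S_v$ are arbitrary (not $[k]$) and the offset $o_e$ shifts the "diagonal", and making sure the coefficients of $z_e$ and $z_e^2$ that emerge are exactly the ones whose balanced AM–GM value is $\tfrac14(1 + 1/\sqrt2)$. The endpoint check at $z_e = 1$ (where no rounding slack exists) and at small $z_e$ should confirm the bound is tight and that no better constant is obtainable from this scheme.
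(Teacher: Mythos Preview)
Your proposal is correct and matches the paper's proof essentially line for line: the same decomposition into the $\tfrac14\sum_{i+o_e\le j} x^u_ix^v_j$ term (bounded via Lemma~\ref{lem:psd} by $\tfrac18 z_e^2$), the cross terms kept only from $i=\ell^u_{\min}$ and $j=\ell^v_{\max}$, the single $\tfrac1{16}$ corner contribution (this is where $\min(S_u)+o_e\le\max(S_v)$ is used), and the final AM--GM on $\tfrac1{16}+\tfrac{z_e}{4}+\tfrac{z_e^2}{8} = \tfrac{z_e}{4}\bigl(1+\tfrac12(\tfrac1{2z_e}+z_e)\bigr)\ge \tfrac{z_e}{4}(1+\tfrac1{\sqrt2})$. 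One small clarification on the cross-term bookkeeping: the inequality $\tfrac18\sum_{i\le \ell^v_{\max}-o_e} x^u_i \ge \tfrac18 z_e$ does \emph{not} come from $\sum_i x^u_i=1$ (that would point the wrong way) but from the LP constraint $x^u_i \ge \sum_{j:\,i+o_e\le j} y^e_{ij}$, exactly the mechanism of \eqref{eqn:expanded} that you cite earlier in your sketch.
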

\begin{proof}
Let $\mathcal{S} = \cup_{v \in V} S_v$ denote the set of all labels
and let $z_e = \left(\sum_{\substack{i+o_e \leq j}}y^e_{ij}\right)$
denote the contribution of the edge $e$ to the LP objective. We have,
  \begin{align*}
\Pr[ \ell(u) + o_e \leq \ell(v)] &= \sum_{\substack{i+o_e \leq j\\i\in
S_u, j \in S_v}} \Pr[\ell(u)=i] \Pr[\ell(v)=j]
  \end{align*}
  Substituting the assignment probabilities from \eqref{eqn:rmasoffprobs}
  into the above and simplifying we obtain,
  \begin{eqnarray}
	  &&\Pr[ \ell(u) + o_e \leq \ell(v)]\nonumber \\ &=&
  \frac{1}{16} + \frac{1}{8}\left(\sum_{\substack{i \leq \ell^v_{max} - o_e \\ i \in \mathcal{S}}} x^u_{i} + \sum_{\substack{j \geq \ell^u_{min} + o_e \\ j \in \mathcal{S}}} x^v_{j}\right) + \frac{1}{4}\left(\sum_{\substack{i+o_e \leq j \\ i,j \in \mathcal{S}}} x^u_{i}x^v_{j}\right)
  \end{eqnarray}
Note that we allow $i,j \in \mathcal{S}$ in the above sums instead of
$S_u$ and $S_v$. This does not affect the analysis as the LP forces $x^u_i
= 0$ for $i \notin S_u$ and similarly for $v$.  Now, consider the
$|\mathcal{S}| \times |\mathcal{S}|$ matrix $[y^e_{ij}]$. Since
$x^u_i$ and $x^v_j$ are equal to the row sums and column sums of this
matrix respectively, Lemma \ref{lem:psd} guarantees that,
\begin{equation}
\sum_{\substack{i+o_e \leq j \\ i,j \in \mathcal{S}}} x^u_{i}x^v_{j}
\geq \frac{|\mathcal{S}| - o_e + 1}{2(|\mathcal{S}| - o_e)}
\left(\sum_{\substack{i\in S_u, j\in S_v\\ i+o_e \leq j}}y^e_{ij}
\right)^2 \geq \frac{(|\mathcal{S}| - o_e + 1)}{2(|\mathcal{S}| -
o_e)}z_e^2 \geq \frac{z_e^2}{2}. \nonumber
\end{equation}
We thus have, 
\begin{eqnarray}
	&& \Pr[ \ell(u)+o_e \leq \ell(v)]\nonumber \\ 
	&\geq& \frac{1}{16} + \frac{1}{8}\left(\sum_{\substack{i \leq
	\ell^v_{max} - o_e}} x^u_{i} + \sum_{\substack{j \geq \ell^u_{min} +
o_e}} x^v_{j}\right) + \frac{z_e^2}{8}\nonumber \\
&\geq& \frac{1}{16} + \frac{1}{8}\left(\sum_{\substack{i \leq \ell^v_{max}
- o_e}} \left(\sum_{\substack{j \geq i+o_e}} y^e_{i,j}\right) + 
\sum_{\substack{j
\geq \ell^u_{min} + o_e}} \left(\sum_{\substack{i + o_e \leq j}}
y^e_{i,j}\right)\right) + \frac{z_e^2}{8}\nonumber \\
&=& \frac{1}{16} + \frac{1}{8}\left(2 \sum_{\substack{i + o_e \leq
j\\i\in S_u, j \in S_v}} y^e_{i,j}\right) + \frac{z_e^2}{8} \nonumber
\\
&\geq& \frac{1}{16} + \frac{z_e}{4} + \frac{z_e^2}{8} \nonumber \\
& = & \frac{z_e}{4}\left(1 + \frac{1}{2}\left(\frac{1}{2z_e} + 
z_e\right)\right)
\geq \frac{z_e}{4}\left(1 + \frac{1}{\sqrt{2}}\right) 
,
\end{eqnarray}
where the last inequality uses 
$t + \slfrac{1}{at} \geq \slfrac{2}{\sqrt{a}}$ for $a,t > 0$.
\qed
\end{proof}

\section{Sherali-Adams Integrality Gap for \maxk}
\label{sec:integrality-gaps}
For convenience, in the construction of the integrality gaps presented
in this section the
integral optimum and the LP objective shall count the weighted
fraction of edges satisfied.
We begin with a simple construction of an $n$-vertex digraph which is
a $\left(2 - \slfrac{2}{n}\right)$ integrality gap
for the standard LP relaxation for \maxk in Figure \ref{fig:LPmaxk},
for $2\leq k\leq n$. 
\begin{cl}\label{claim:int}
Let $D = (V,A)$ be the complete digraph on $n$ vertices, i.e. having
a directed edge for every ordered pair $(u,v)$ of distinct vertices
$u$ and $v$. Thus, $|A| = 2{n\choose 2}$. Let
$k \in \{2,\ldots, n\}$. Then,
\begin{itemize}
	\item The optimum of \maxk on $D$ is at most 
		$\frac{1}{2}\left(1 -
		\frac{1}{k}\right)\left(\frac{n}{n-1}\right)$. 
	\item There is a solution to the standard LP relaxation for
		\maxk on $D$ with value $\left(1 -
		\frac{1}{k}\right)$. 
\end{itemize}
In particular, the above implies a  $\left(2 - \slfrac{2}{n}\right)$
integrality gap
for the LP relaxation in Figure \ref{fig:LPmaxk}.
\end{cl}
\begin{proof}
The number of forward edges is simply the number of ordered pairs of vertices
$(u,v)$ with distinct labels.
By Turan's Theorem, 
the optimal integral solution is to partition the vertices into $k$
subsets whose sizes differ by at most $1$, giving each subset a distinct 
label from $\{1,\dots, k\}$. This implies that there are at most 
$\frac{n^2}{2}\left(1
- \frac{1}{k}\right)$ forward edges. Hence, the optimal integral solution
has value, 
$$\frac{1}{2}\left(1 -
\frac{1}{k}\right)\left(\frac{n}{n-1}\right).$$
On the other
hand, consider an  LP solution that assigns $x^u_i = \frac{1}{k}$ for all
$u\in V$ and $i \in [k]$, and $y^e_{i,i+1} = \frac{1}{k}$ for all
$e = (u,v) \in A$ and $i \in [k-1]$. Each edge $e$ 
contributes,
$$\displaystyle \sum_{i=1}^{k-1}y^{e}_{i,i+1} =
\left(1-\frac{1}{k}\right),$$
to the objective. \qed
\end{proof}
The above integrality gap is essentially 
retained even after near polynomial rounds of the Sherali-Adams constraints
given in Figure \ref{fig:SAmaxk}. In particular, we prove the
following that implies Theorem \ref{thm:main2}.
\begin{theorem}\label{thm:main2restated}
	For any constant $\eps > 0$, there is $\gamma > 0$ such that
	for large enough $n \in \mathbb{Z}^+$ and any $k \in \{2, \ldots, n\}$, 
	there is a weighted digraph $D^* = (V^*, A^*)$
	satisfying,
	\begin{itemize}
		\item The optimum of \maxk on $D^*$ is at most 
		$\frac{1}{2}\left(1 - \frac{1}{k}\right) + \eps$.
		\item The LP relaxation for \maxk augmented with
			$n^{\left(\slfrac{\gamma}{\log\log k}\right)}$ 
			rounds of Sherali-Adams constraints
			has objective value at least $\left(1-\eps)(1 -
			\frac{1}{k}\right).$
	\end{itemize}
\end{theorem}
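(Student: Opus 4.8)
The plan is to prove Theorem~\ref{thm:main2restated} by \emph{sparsifying} the complete-digraph gap of Claim~\ref{claim:int} and then arguing that the sparse instance fools $n^{\Omega(1/\log\log k)}$ rounds of Sherali--Adams. Concretely, I would take $D^*$ to be (a lightly post-processed) random digraph on $n$ vertices in which each of the $n(n-1)$ ordered pairs $(u,v)$ is present as an arc independently with probability $q=d/n$ for $d=\Theta(\eps^{-2}\log k)$, then delete one arc from each $2$-cycle (only $O(d^2)$ of them in expectation); weights are used only to regularize degrees and to zero out the few pathological small dense subgraphs. For the first bullet, fix a labeling $\ell$ with $n_i$ vertices of label $i$: the number of forward arcs is a sum of at most $n^2$ independent $\mathrm{Bernoulli}(q)$ variables of total mean $q\sum_{i<j}n_in_j\le \tfrac12(1-\tfrac1k)\,qn^2$, so a Chernoff bound gives deviation probability $\exp(-\Omega(\eps^2 qn^2))=\exp(-\Omega(n\log k))$, and a union bound over the $k^n$ labelings shows that with high probability every $k$-ordering satisfies at most $\bigl(\tfrac12(1-\tfrac1k)+\eps\bigr)|A^*|$ arcs. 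The second property I need is that $D^*$ is \emph{tree-like at scale $L$}: for every vertex set $S$ with $|S|\le L$, the induced digraph $D^*[S]$ consists of a spanning forest plus at most $\eps|S|$ further arcs, and moreover almost every arc lies in no short cycle. A first-moment computation for $G(n,d/n)$ gives this for $L=n^{\Omega(1/\log d)}$, and since $\log d=\Theta(\log\log k)$ this is $L=n^{\Omega(1/\log\log k)}$; the number of rounds will be $r:=L-1$, which is where the exponent $\gamma/\log\log k$ comes from. (This is not a girth bound --- $D^*$ does contain short cycles --- but it is exactly the structure exploited in the argument below.)

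For the second bullet I would follow the template of Charikar--Makarychev--Makarychev~\cite{charikar2009integrality} and Lee~\cite{lee2014hardness}. The base ingredient is a distribution on \emph{trees}: for a tree with any orientation there is a height function $\phi$ with $\phi(v)=\phi(u)+1$ on every arc $(u,v)$, and drawing $a$ uniformly from $\mathbb{Z}_k$ and setting $\ell(v)=\bigl((a+\phi(v))\bmod k\bigr)+1$ yields a distribution whose vertex-marginals are uniform on $[k]$ and whose arc-marginals are uniform over the $k$ cyclic steps $\{(1,2),\dots,(k-1,k),(k,1)\}$, so every arc is forward with probability exactly $1-\tfrac1k$. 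Running this process on a spanning forest of $D^*[S]$ gives, for each $S$ with $|S|\le r+1$, a ``good local distribution'' with uniform vertex-marginals and arc-contribution $1-\tfrac1k$ on all but the $\le\eps|S|$ excess arcs. These local distributions are mutually inconsistent, so --- exactly as in the cited works --- I would pass to their one- and two-point marginals, realize them (up to an error controlled by the tree-likeness) as inner products of unit vectors $V_{v,i}\in\mathbb{R}^m$, one per vertex--label pair, which is possible because on a genuine tree the required Gram matrix arises from an honest distribution and hence is PSD, and then apply Gaussian rounding: draw $g\sim\mathcal{N}(0,I_m)$ and set $\ell(v)=\argmax_i\langle V_{v,i},g\rangle$. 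For every $S$ with $|S|\le r+1$ this induces a distribution $z^S$ over $[k]^S$, and these are automatically consistent under restriction (all are marginals of the same rounding applied to the relevant vectors), hence they satisfy the Sherali--Adams constraints of Figure~\ref{fig:SAmaxk}; by the symmetry of the embedding the vertex-marginals remain (near-)uniform, so $x^v_i\approx\tfrac1k$, and for a typical arc $(u,v)$ --- one lying in no short cycle, which is almost all arcs --- $V_{u,\cdot}$ and $V_{v,\cdot}$ are close to an exact cyclic shift, so Gaussian rounding sets $\ell(v)=\ell(u)+1\bmod k$ with probability $1-o(1)$ and the arc's LP contribution $\sum_{i<j}y^e_{ij}$ is $\ge(1-\eps)(1-\tfrac1k)$. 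The rare arcs in short cycles are down-weighted (or absorbed into an $o(1)$ additive loss), so averaging over arcs yields LP objective $\ge(1-\eps)(1-\tfrac1k)$, as required, and together with the first bullet this gives a $(2-O(\eps))$ integrality gap.

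The crux, and where essentially all the difficulty lies, is the passage from the good-but-inconsistent local tree distributions to a single globally consistent family of local distributions on \emph{all} subsets of size $\le r+1$ while losing only $o(1)$ in the per-arc objective. No single global distribution can have uniform vertex-marginals and per-arc contribution $1-\tfrac1k$ simultaneously --- that would force $\ell(v)=\ell(u)+1\bmod k$ around every cycle, hence $k\mid(\text{cycle length})$ always --- so the argument is forced to tolerate controlled errors coming from the (relatively long, but genuinely present) cycles inside each small set, and to keep $d=\Theta(\log k)$ small enough that such cycles are rare, which in turn limits $L$ and hence the number of rounds. Showing that the geometric embedding exists with small enough error and that Gaussian rounding then simultaneously delivers near-uniform marginals and per-arc contribution $1-o(1)$, uniformly over all $\le(r+1)$-subsets, is the delicate technical heart of the proof, and is precisely what the machinery of~\cite{arora2002proving,charikar2009integrality,lee2014hardness} is built to carry out.
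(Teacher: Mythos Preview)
Your proposal is essentially the paper's own proof: sparsify the complete-digraph gap of Claim~\ref{claim:int} to average degree $\Theta(\eps^{-2}\log k)$, keep the integral optimum low by a Chernoff/union bound over the $k^n$ labelings (the paper phrases this via $\eps$-samples, Lemma~\ref{lem:lowopt}), and then feed the sparse instance into the Charikar--Makarychev--Makarychev/Lee machinery to build consistent local distributions via a vector embedding and Gaussian rounding. The number of rounds you obtain, $n^{\Omega(1/\log d)}=n^{\Omega(1/\log\log k)}$, matches the paper.

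One point of imprecision is worth flagging. Your structural hypothesis ``tree-like at scale $L$'' (every $S$ with $|S|\le L$ has at most $\eps|S|$ excess arcs) is not quite the property the CMM machinery consumes, and the step ``run the tree process on a spanning forest of $D^*[S]$'' does not by itself give pairwise marginals $\rho(u(i),v(i'))$ that are \emph{independent of $S$}, which is what makes the global vectors $V_{v,i}$ well-defined. The paper handles both issues the way~\cite{charikar2009integrality,lee2014hardness} do: it first deletes $o(|A'|)$ edges to enforce girth $\ge l=\Theta(\log n/\log\Delta)$ (Lemma~\ref{lem:girth}), then upgrades ``few excess edges in small $2$-connected pieces'' to \emph{$l$-path decomposability} via the Arora--Bollob\'as--Lov\'asz lemma (Lemma~\ref{lem:arora}, Corollary~\ref{cor:path-decomposable}), and finally applies Theorem~\ref{thm:cmmmulticut} not to $S$ but to the full $l$-neighborhood $V'$ of $S$, obtaining a random partition of $V'$ into trees with separation probabilities that depend only on shortest-path distance. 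The resulting pairwise distribution $\rho$ is defined via the \emph{unique} shortest $u$--$v$ path when $d(u,v)\le L$ (uniqueness uses the girth bound), so it is determined by the graph alone and not by $S$. None of this changes your plan, but replacing ``spanning forest of $S$'' by ``CMM multicut on the $l$-neighborhood, with $l$-path decomposability coming from few excess edges plus Lemma~\ref{lem:arora}'' is exactly the missing link between your sketch and a complete argument.
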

The rest of this section is devoted to proving the above theorem.  Our
construction of the integrality gap uses the techniques of 
Lee~\cite{lee2014hardness} who proved a similar gap for a variant of the 
\emph{graph pricing} problem. We begin by 
showing that a sparse, random subgraph $D'$,
of the complete digraph $D$ mentioned above,
also has a low optimum solution. For this, we require the following
result on \emph{$\eps$-samples}~\cite{VC} for finite set systems that
follows from Hoeffding's bound. The reader is referred 
to Theorem 3.2 in \cite{lect} for a proof.
\begin{theorem}
\label{thm:epsilonsample}
  Let $(\mathcal{U}, \mathcal{S})$ denote a finite set system\footnote{A set system $(\mathcal{U}, \mathcal{S})$ consists of a
	  ground set $\mathcal{U}$ and a collection of its subsets
	  $\mathcal{S}\subseteq 2^{\mathcal{U}}$. It is called finite
  if $|\mathcal{U}|$ is finite.}. Suppose
  $\tilde{A}$ is a multi-set obtained by sampling
  from $\mathcal{U}$ independently and uniformly $m$ times where $m
  \geq \frac{1}{2\eps^2} \ln \frac{2 |\mathcal{S}|}{\delta}$. Then
  with probability at least $1-\delta$,
  \[\left|\frac{|\tilde{A} \cap S|}{|\tilde{A}|} -
  \frac{|S|}{|\mathcal{U}|} \right| \leq \eps, \ \ \ \forall S \in
  \mathcal{S}.\] 
  $\tilde{A}$ is referred to as an $\eps$-sample for
  $(\mathcal{U},\mathcal{S})$.
\end{theorem}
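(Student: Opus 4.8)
The plan is to apply Hoeffding's inequality to each set $S \in \mathcal{S}$ separately and then take a union bound over the finitely many sets in $\mathcal{S}$. First I would fix $S \in \mathcal{S}$ and write the multiset of samples as $\tilde{A} = \{a_1,\dots,a_m\}$, where each $a_t$ is drawn independently and uniformly from $\mathcal{U}$. Define random variables $X_t := 1$ if $a_t \in S$ and $X_t := 0$ otherwise. These are i.i.d.\ with $\mathbb{E}[X_t] = |S|/|\mathcal{U}|$, they take values in $[0,1]$, and since sampling is done with replacement we have $|\tilde{A}| = m$ deterministically, so $\frac{1}{m}\sum_{t=1}^m X_t = \frac{|\tilde{A}\cap S|}{|\tilde{A}|}$ (counting multiplicities).

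Next, since each $X_t$ lies in $[0,1]$, Hoeffding's inequality gives
\[
\Pr\!\left[\left|\frac{1}{m}\sum_{t=1}^m X_t - \frac{|S|}{|\mathcal{U}|}\right| > \eps\right] \;\le\; 2\exp\!\left(-2m\eps^2\right).
\]
Plugging in the hypothesis $m \ge \frac{1}{2\eps^2}\ln\frac{2|\mathcal{S}|}{\delta}$ yields $2\exp(-2m\eps^2) \le 2\exp\!\left(-\ln\frac{2|\mathcal{S}|}{\delta}\right) = \frac{\delta}{|\mathcal{S}|}$. Hence, for each fixed $S$, the empirical density deviates from the true density by more than $\eps$ with probability at most $\delta/|\mathcal{S}|$.

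Finally I would union-bound over all $S \in \mathcal{S}$: the probability that $\left|\frac{|\tilde{A}\cap S|}{|\tilde{A}|} - \frac{|S|}{|\mathcal{U}|}\right| > \eps$ holds for at least one $S \in \mathcal{S}$ is at most $|\mathcal{S}|\cdot\frac{\delta}{|\mathcal{S}|} = \delta$, so with probability at least $1-\delta$ the claimed bound holds simultaneously for every $S \in \mathcal{S}$. There is no genuine obstacle here—the argument is entirely standard; the only points requiring a little care are that sampling with replacement makes $|\tilde{A}| = m$ non-random (so the empirical quantity is a clean average of $m$ i.i.d.\ terms), and that the $\{0,1\}$-boundedness of the $X_t$ is exactly what makes the constant $2$ in the Hoeffding exponent, and hence the factor $\frac{1}{2\eps^2}$ in the sample-size requirement, the right one.
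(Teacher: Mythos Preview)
Your argument is correct and is exactly the standard Hoeffding-plus-union-bound proof the paper has in mind: the paper does not spell out a proof but simply states that the result ``follows from Hoeffding's bound'' and cites an external reference. Your handling of the sample-size arithmetic (so that $2\exp(-2m\eps^2)\le \delta/|\mathcal{S}|$) and the observation that $|\tilde{A}|=m$ deterministically are both fine.
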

In order to construct a solution that satisfies Sherali-Adams
constraints for a large number of rounds, 
we require the instance to be \emph{locally sparse}, i.e. the
underlying undirected graph is
almost a tree on subgraphs induced by large (but bounded) 
vertex sets. We use the notion of \emph{path decomposability} 
as defined by 
Charikar et al.~\cite{charikar2009integrality} as a measure of local
sparsity.
\begin{definition}\textnormal{[Path Decomposability]} A graph $G$ is
	$l$-path decomposable if every 2-connected subgraph $H$ of $G$
	contains a path of length $l$ such that every 
	vertex of the path has degree 2 in $H$.  
\end{definition}
We proceed to show that the sparse graph $D'$ obtained as above can be
further processed so that it is locally sparse. Applying the
techniques in \cite{charikar2009integrality} and 
\cite{lee2014hardness} yields a solution with high value that
satisfies the Sherali-Adams constraints.

\subsection{Constructing a Sparse Instance}
\label{sec:constr-sparse-inst}

\begin{lemma} \label{lem:lowopt} Let $D = (V,A)$ be the complete
	digraph on $n$ vertices, let $k \in \{2, \ldots, n\}$ and
	$\eps > 0$ be a small constant. The weighted 
	digraph $D' = (V,A')$
	obtained by sampling $\Omega(\frac{n\log k}{\eps^2})$ edges
	uniformly at random satisfies $Opt(D') \leq \frac{1}{2}\left(1 -
	\frac{1}{k}\right) + \eps$ 
	with high probability, where $Opt$
	denotes the optimum of \maxk.
\end{lemma}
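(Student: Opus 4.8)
The plan is to use the $\eps$-sample machinery of Theorem~\ref{thm:epsilonsample} to argue that, for \emph{every} labeling $\ell : V \to [k]$ simultaneously, the fraction of forward edges in the sampled digraph $D'$ is close to the fraction of forward edges in the complete digraph $D$. Since we already know from Claim~\ref{claim:int} that the optimum of \maxk on $D$ is at most $\tfrac12(1-\tfrac1k)(\tfrac{n}{n-1}) \le \tfrac12(1-\tfrac1k) + o(1)$, transferring this bound (up to an additive $\eps$) to $D'$ gives exactly the claimed $Opt(D') \le \tfrac12(1-\tfrac1k) + \eps$.

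First I would set up the appropriate set system. Let the ground set be $\mathcal{U} = A$, the $2\binom{n}{2}$ directed edges of the complete digraph. For each labeling $\ell : V \to [k]$ define $S_\ell \subseteq A$ to be the set of edges that are forward under $\ell$, i.e.\ $S_\ell = \{(u,v) \in A : \ell(u) < \ell(v)\}$, and let $\mathcal{S} = \{ S_\ell : \ell : V \to [k] \}$. Then $|\mathcal{S}| \le k^n$, so $\ln(2|\mathcal{S}|/\delta) = O(n \log k + \log(1/\delta))$. Applying Theorem~\ref{thm:epsilonsample} with this $\eps$ and, say, $\delta$ a small constant (or $\delta = 1/\mathrm{poly}(n)$, which only changes constants), sampling $m = \Omega(\tfrac{1}{\eps^2}(n\log k + \log(1/\delta))) = \Omega(\tfrac{n\log k}{\eps^2})$ edges uniformly (with replacement) yields, with high probability, an $\eps$-sample $\tilde A = A'$: for every labeling $\ell$,
\[
\left| \frac{|A' \cap S_\ell|}{|A'|} - \frac{|S_\ell|}{|A|} \right| \le \eps.
\]
Here $|A' \cap S_\ell|/|A'|$ is precisely the (uniformly) weighted fraction of edges of $D'$ satisfied by $\ell$, and $|S_\ell|/|A|$ is the fraction satisfied in $D$, which by Claim~\ref{claim:int} is at most $\tfrac12(1-\tfrac1k)(\tfrac{n}{n-1})$.

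Taking the maximum over $\ell$ on both sides, $Opt(D') \le Opt(D) + \eps \le \tfrac12(1-\tfrac1k)(\tfrac{n}{n-1}) + \eps$, and absorbing the $\tfrac{1}{n-1}$ term into a slightly adjusted $\eps$ (for $n$ large enough) gives the lemma. A couple of minor points to handle: sampling is with replacement, so $A'$ is a multiset and $D'$ is naturally a weighted digraph with integer edge multiplicities summing to $m$ — this matches the statement's phrasing ``weighted digraph,'' and the $\eps$-sample guarantee is stated exactly for multisets, so nothing breaks; and one should note the trivial reverse-direction containments are not needed. I do not expect a serious obstacle here — the only thing to be careful about is making sure the union bound is over labelings (not over something exponentially larger like all edge subsets), which is why bounding $|\mathcal{S}| \le k^n$ rather than $2^{|A|}$ is the crucial observation. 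The mild nuisance is bookkeeping the $\tfrac{n}{n-1}$ factor and the with/without-replacement distinction, both of which are absorbed into constants and the choice of large enough $n$.
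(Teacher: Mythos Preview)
Your proposal is correct and follows essentially the same approach as the paper: set up the set system $(A, \{S_\ell\})$ indexed by labelings, bound $|\mathcal{S}| \le k^n$, apply Theorem~\ref{thm:epsilonsample} with $\delta = 1/n$ to get an $\eps$-sample after $\Omega(n\log k/\eps^2)$ draws, and then invoke Claim~\ref{claim:int} to transfer the optimum bound from $D$ to $D'$. The only cosmetic difference is that the paper works with $\eps/2$ in the sampling step so that the $\tfrac{n}{n-1}$ slack is cleanly absorbed into the final $\eps$, exactly the bookkeeping you flagged.
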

\begin{proof} Let $V = [n]$ and $A \subseteq [n] \times [n]$ denote the
	vertices and edges of the digraph $D$. Let $\rho: [n]
	\rightarrow [k]$ denote some labeling of the vertices.
	Let $S_\rho := \{(i,j)\,\mid\,\rho(i) < \rho(j), (i,j) \in A\}$
	denote the subset of edges that are satisfied by
	$\rho$, and $\mathcal{S} := \{S_\rho\,\mid\,\forall \text{
	labelings } \rho\}$ denote the collection of such subsets
	induced by all feasible labelings. Since the number of
	distinct labelings is $k^n$, we have that $|\mathcal{S}| \leq
	k^n$. 

	We now construct an $(\eps/2)$-sample for the set system $(A,
\mathcal{S})$ by randomly sampling edges (with replacement) as per
Theorem \ref{thm:epsilonsample}. Let $\tilde{A}$ denote the bag of
randomly chosen $\left(\frac{2}{\eps^2} \ln \frac{2k^n}{\delta}\right)$ 
edges. Substituting $\delta = \frac{1}{n}$, we get that $|\tilde{A}| =
\Omega(\frac{n\log k}{\eps^2})$ and with probability at least $1 -
\frac{1}{n}$ we have,
\begin{equation} \label{eq:lowobjective} \left|\frac{|S_\rho \cap
\tilde{A}|}{|\tilde{A}|} - \frac{|S_\rho|}{|A|}\right| \leq \eps/2, \ \
\ \forall \rho. 
\end{equation} 
In order to avoid multi-edges in the construction, we define the
weight of an edge $w(u,v)$ to be the number of times that edge is
sampled in $\tilde{A}$ and let $A'$ denote the set of thus weighted 
edges obtained
from $\tilde{A}$. Equation
\eqref{eq:lowobjective} along with Claim \ref{claim:int}
guarantees that the optimum integral
solution of the weighted graph $D'$ induced by the edges $A'$ is
bounded by $Opt(D') \leq Opt(D) + \eps \leq \frac{1}{2}\left(1 -
\frac{1}{k}\right)\left(\frac{n}{n-1}\right) + \eps/2 \leq 
\frac{1}{2}\left(1 - \frac{1}{k}\right) + \eps$ as desired.
\qed
\end{proof}
Given a digraph, let its corresponding undirected multigraph be
obtained by replacing every directed edge by the corresponding
undirected one. Note that if the digraph contains both $(u,v)$ and
$(v,u)$ edge for some pair of vertices, then the undirected multigraph 
contains two parallel edges between $u$ and $v$. 

We now show that $D'$ 
obtained in Lemma \ref{lem:lowopt} 
can be slightly modified 
so that its corresponding underlying multigraph is almost
regular, has high girth, and is locally sparse i.e. all small enough
subgraphs are $l$-path decomposable for an appropriate choice of
parameters. 

\begin{lemma}
	\label{lem:girth} Let $k = \{2,\dots, n\}$ and $\eps > 0$ be
	a small enough constant. Given the complete digraph 
	$D = (V,A)$ on $n$ vertices, let $D' = (V,A')$ 
	be obtained by sampling (with replacement) 
  $\Theta(\frac{n\log k}{\eps^2})$ edges uniformly at random. 
  Then, with high probability there exists 
  a subgraph $D'' = (V,A'')$ of $D'$ obtained 
  by removing at most $\eps |A'|$ edges, such that the undirected 
  multigraph $G''$ underlying $D''$ satisfies the following
  properties: 
  \begin{enumerate}
  \item \emph{Bounded Degree:} The maximum degree of any vertex is 
	  at most $2 \Delta$ and $G''$ has $\Omega(\Delta n)$ edges,
	  where $\Delta = \Theta(\frac{\log k}{\eps^2})$.
  \item \emph{High Girth:} $G''$ has girth at least $l = O(\frac{\log
	  n}{\log \Delta})$.
  \end{enumerate}
\end{lemma}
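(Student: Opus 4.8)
The plan is to produce $D''$ in two cleanup stages, a degree-pruning stage followed by a short-cycle-killing stage, and to argue that each stage discards only an $O(\eps)$ fraction of the $|A'| = \Theta(n\log k/\eps^2)$ sampled edges. First I would analyze the degree distribution of $D'$: each of the $m := |A'|$ edges is an independent uniform ordered pair, so the (undirected) degree of a fixed vertex $v$ is a sum of independent indicators with mean $2m/n = \Theta(\log k/\eps^2) =: \Delta$. A Chernoff bound plus a union bound over the $n$ vertices shows that with high probability no vertex has degree exceeding, say, $\tfrac{3}{2}\Delta$; more importantly, the expected number of edge-endpoints lying at vertices of degree above $2\Delta$ is tiny, so by Markov at most $\eps m$ edges need to be deleted to make the maximum degree at most $2\Delta$. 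Since we only delete an $\eps$ fraction and $m = \Theta(\Delta n)$, the surviving graph still has $\Omega(\Delta n)$ edges, giving property~(1).

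For property~(2), the standard high-girth-by-deletion argument applies. Set $l = c\,\frac{\log n}{\log\Delta}$ for a suitably small constant $c$. The expected number of cycles of length exactly $t$ in a random multigraph on $n$ vertices with $m$ edges is at most roughly $n^{t}(m/n^{2})^{t}\cdot\tfrac1{2t} \approx \tfrac{1}{2t}(\,2m/n\,)^{t}\cdot n^{-0}$... more carefully, $\binom{n}{t}t!\,(m/n^2)^{t}/(2t) \le (m/n)^{t}$ up to constants $=\Delta^{t}$ up to constants; summing over all $t \le l$ gives $O(\Delta^{l}) = O(n^{c})$, which for $c$ small enough is $o(\eps m)$ since $m = \Omega(n\log k)$. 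Hence by Markov, with high probability the number of edges lying on some cycle of length $< l$ is at most $\eps m$; deleting one edge from each such short cycle removes at most that many edges and leaves a graph of girth at least $l$. Running this after the degree-pruning stage (and absorbing another $O(\eps)$ loss) yields a graph $D''$ on $\Omega(\Delta n)$ edges, maximum degree $\le 2\Delta$, and girth $\ge l$; finally, one observes that a graph of girth $\ge l$ is automatically $(l-1)$-path decomposable, since any $2$-connected subgraph $H$ has girth $\ge l$, and a shortest cycle in $H$ must contain a long run of degree-$2$ vertices --- more precisely, in a $2$-connected $H$ with girth $\ge l$ every cycle has length $\ge l$, and a standard argument (used in Charikar et al.~\cite{charikar2009integrality}) extracts from it a path of length $\Omega(l)$ all of whose internal vertices have degree $2$ in $H$. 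Rescaling $l$ by a constant absorbs this.

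The main obstacle I anticipate is the interaction between the two cleanup stages and the bookkeeping of the edge budget: after degree-pruning, the graph is no longer a clean uniform random multigraph, so the first-moment bound on the number of short cycles has to be applied either to the original $D'$ (and then argue monotonicity --- deleting edges only destroys cycles) or re-derived for the pruned graph. The clean way is to compute the expected number of short-cycle edges in $D'$ itself, note it is $o(\eps m)$, mark those edges for deletion, and also mark the $\le \eps m$ high-degree edges; the union of the two marked sets has size $\le 2\eps m = O(\eps)m$, and deleting it simultaneously achieves both properties. A secondary technical point is ensuring the parameters are mutually consistent: we need $\Delta = \Theta(\log k/\eps^2)$ large enough that $\Delta^{l} = n^{o(1)}$ forces $l = \Theta(\log n/\log\Delta)$ to still be $\omega(1)$ (indeed it is, as long as $\Delta = n^{o(1)}$, i.e. $k \le 2^{n^{o(1)}}$, which holds since $k \le n$). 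These are all routine once the first-moment computation is set up, so I would present the expected-cycle-count calculation in one display and then invoke Markov and a union bound to finish.
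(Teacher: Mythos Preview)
Your core argument---Chernoff plus Markov for the degree-pruning stage, a first-moment count plus Markov for short cycles, and deleting the union of the two marked edge sets---is essentially the paper's proof; the bookkeeping and the parameter reconciliation are the same.

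One genuine error: your closing remark that ``a graph of girth $\ge l$ is automatically $(l-1)$-path decomposable'' is false, and it is not part of this lemma. A $3$-regular graph of girth $l$ is $2$-connected yet has no vertex of degree $2$, so it admits no path of degree-$2$ vertices of any positive length. High girth by itself says nothing about the degree profile of $2$-connected subgraphs. In the paper, path decomposability is obtained in a separate step: one first shows (by a union bound over dense small subgraphs) that every $2$-connected subgraph on at most $n^\delta$ vertices has at most $(1+\eta)t'$ edges, and only then does the Arora--Bollob\'as--Lov\'asz lemma extract the long degree-$2$ path. That argument genuinely needs the edge-count bound, not just girth. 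Drop the path-decomposability claim from your proof of this lemma and treat it as the subsequent, separate step that it is.
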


\begin{proof}
Since $D'$ is obtained by sampling $\Theta(\frac{n\log k}{\eps^2})$ edges
uniformly, the probability that any given edge is selected is $p =
\Theta(\frac{\log k}{\eps^2 n})$. In addition, these events are negatively
correlated. Therefore given any set of edges $S$, the probability that all
the edges in $S$ are sampled is upper bounded by $p^{|S|}$.

\emph{Bounded Degree:} As the maximum degree of any vertex in $D$ is
at most $2n$, the expected degree of any vertex $v \in V$ in $D'$ is
at most $\Delta = 2pn = \Theta(\frac{\log k}{\eps^2})$. Call a vertex $v
\in V$ \emph{bad} if it has degree more than $2\Delta$ in $D'$, and
call and edge $(u,v) \in A'$ bad if either $u$ or $v$ is bad. Now, for
any edge $(u,v)$, the probability that $(u,v)$ is bad given that
$(u,v) \in A'$ is at most $2e^{-\frac{\Delta}{3}}$ by Chernoff
bound. Hence, the expected number of bad edges is at most
$2e^{-\frac{\Delta}{3}}|A'|$. Finally, by Markov's inequality, with
probability at least $\frac{1}{2}$, the number of bad edges is at most
$4e^{-\frac{\Delta}{3}}|A'|$. Deleting all bad edges guarantees that
the maximum degree of $D'$ is at most $2\Delta$ and with probability
at least half, we only delete $4e^{-\frac{\Delta}{3}}|A'|$ edges which
is much smaller than $\eps |A'|$ since $\Delta =
\Theta(\frac{\log k}{\eps^2})$.

\emph{Girth Control:} Let $G'$ denote the undirected multigraph underlying $D'$. Since the degree of any vertex in $D$ is at most $2n$, we have
\begin{align*}
  \mathbb{E}[\text{Number of cycles in $G'$ of length $i$}] &\leq n(2n)^{i-1}p^i \leq (C \Delta)^i\\
\intertext{for some constant $C$. For $i = O(\frac{\log n}{\log \Delta})$, we get}
  \mathbb{E}[\text{Number of cycles in $G'$ of length $i$}] &\leq n^{0.5}
\end{align*}

Summing up over all $i$ in $2 \ldots l = O(\frac{\log n}{\log \Delta})$, we get that the expected number of cycles of length up to $l$ is at most $O(n^{0.6})$ and hence it is less than $O(n^{0.7})$ with high probability. We can then remove one edge from each such cycle (i.e. $o(n)$ edges) to ensure that the graph $G''$ so obtained has girth at least $l$. In particular, note that the corresponding digraph $D''$ has no 2-cycles.
\qed  
\end{proof}

\subsubsection{Ensuring Local Sparsity}
\label{sec:ensur-local-spars}
Using Lemma \ref{lem:girth} 
we ensure that the subgraph of $G''$ induced by any subset of 
$n^\delta$ vertices is $l$-path decomposable for some constant $\delta > 0$.
The following lemma shows that $2$-connected subgraphs of $G''$ of
polynomially bounded size are sparse.
\begin{lemma} \label{lem:local-sparsity} The undirected multigraph
	$G''$ underlying the digraph $D''$ satisfies the following p,
	i.e., there exists $\delta > 0$ such that every 2-connected
	subgraph $\tilde{G}$ of $G''$ containing $t' \leq n^\delta$
vertices has only $(1+\eta)t'$ edges where $\eta = \frac{1}{3l}$.
\end{lemma}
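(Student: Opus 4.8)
The plan is to prove the contrapositive by a first-moment argument over all candidate ``dense'' subgraphs, exploiting that $G''$ is obtained from the random digraph $D'$ only by deleting edges. Suppose $\tilde G$ is a $2$-connected subgraph of $G''$ with $t' \le n^\delta$ vertices and more than $(1+\eta)t'$ edges, and write $m = |E(\tilde G)| = t' + c$ so that the excess satisfies $c > \eta t' = t'/(3l)$; in particular $c \ge 1$. Since $D''$ is a subgraph of $D'$, every edge of $\tilde G$ lies in the undirected multigraph underlying $D'$, and as $D'$ samples $\Theta(n\log k/\eps^2)$ edges, each fixed pair is present with probability at most $p = \Theta(\log k / (\eps^2 n)) = \Theta(\Delta/n)$, these events being negatively correlated. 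Hence a fixed candidate subgraph with $t' + c$ edges survives in $G''$ with probability at most $p^{\,t'+c}$.

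Next I would bound the number of candidate $2$-connected subgraphs on $t'$ vertices with $t'+c$ edges inside $K_n$: choosing a spanning tree (at most $(t')^{t'-2}$ labelled trees, by Cayley's formula) together with the remaining $c+1$ edges (at most $\binom{\binom{t'}{2}}{c+1}$ ways) gives at most $\binom{n}{t'}(t')^{t'-2}\binom{\binom{t'}{2}}{c+1}$ of them. Using $\binom{n}{t'}(t')^{t'-2}\le n^{t'}e^{t'}$ and $\binom{\binom{t'}{2}}{c+1}\le (t')^{2c+2}$, the expected number of bad subgraphs with parameters $(t',c)$ is at most $n^{t'}e^{t'}(t')^{2c+2}p^{\,t'+c}\le e^{t'}(t')^{2c+2}(C\Delta)^{t'+c}n^{-c}$ for an absolute constant $C$. (This can be sharpened by describing $\tilde G$ as a subdivision of a multigraph on $O(c)$ vertices with $O(c)$ edges, but the crude bound suffices.)

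The decisive step is to take logarithms and calibrate the parameters. With $\Delta = \Theta(\log k/\eps^2)$ and $l = O(\log n/\log\Delta)$ chosen with a small enough absolute constant, the term $-c\ln n$ is at most $-\,t'\ln n/(3l) = -\Theta(t'\log\Delta)$ and at most $-\ln n$; for $\delta$ a small enough constant one has $\ln t' \le \delta\ln n$, so $(2c+2)\ln t' \le 2c\delta\ln n$ and the positive contributions $t' + (t'+c)\ln(C\Delta) = O(t'\log\Delta)$ are dominated by $-c\ln n$ by a constant factor. Combined with $t' \ge l = \Omega(\log n)$ coming from the girth bound of Lemma~\ref{lem:girth}, this makes the expected count of bad $(t',c)$-subgraphs at most $n^{-\Omega(1)}$ for every admissible pair $(t',c)$. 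Summing over the $n^{O(1)}$ choices of $t' \le n^\delta$ and $\eta t' < c \le \binom{t'}{2}$ gives $o(1)$, so with high probability $G''$ has no such subgraph, i.e.\ every $2$-connected subgraph on at most $n^\delta$ vertices has at most $(1+\eta)t'$ edges.

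I expect the main obstacle to be exactly this calibration: $l$ must be a sufficiently small multiple of $\log n/\log\Delta$ relative to the sampling density $p = \Theta(\Delta/n)$, and $\delta$ small enough, so that the first-moment sum converges uniformly over the whole range of $c$ --- the binding regime being small $t'$ with $c$ just above the super-sparsity threshold $\eta t'$, where the improbability factor $n^{-c}$ is smallest and must be \emph{paid for} by the tiny size of the subgraph and by the slack in the constant defining $l$. Note that a purely structural ear-decomposition bound cannot suffice here, since for instance the theta graph formed by three internally disjoint paths of length $\lceil l/2\rceil$ has girth $\ge l$ yet violates the claimed edge bound, so the randomness of $D'$ is genuinely needed to exclude such configurations. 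The only other point to check is the reduction to the independent-sampling model, which follows from $G''$ arising from $D'$ purely by edge deletions together with negative correlation of the edge-sampling process.
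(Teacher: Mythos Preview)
Your proposal is correct and follows essentially the same first-moment strategy as the paper: upper-bound the expected number of small dense $2$-connected subgraphs surviving the random sampling and show it is $o(1)$. The paper's execution differs only in organization---it splits into the regime $t' \le 3l$ (where it counts $2$-connected graphs with excess exactly one edge via their Eulerian tours) and $t' > 3l$ (where it invokes a subgraph-count bound of Lee)---whereas your unified Cayley-formula count combined with the girth lower bound $t' \ge l$ handles both regimes at once; either way one needs $l$ to be a sufficiently small multiple of $\log n/\log\Delta$, exactly the calibration you flag.
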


\begin{proof} Let $G$ denote the undirected multigraph underlying the
	graph $D$ that was used as a starting point for Lemma
	\ref{lem:girth}. The proof proceeds by counting the number of
	possible ``dense'' subgraphs of $G$ and showing that the
	probability that any of them exist in $G''$ after the previous
	sparsification steps is bounded by $o(1)$.  We consider two
	cases based on the value of $t'$. 

\emph{Case 1: $4 \leq t' \leq \frac{1}{\eta}$}. We first bound the
total number of 2-connected subgraphs of $G$ with $t'$ vertices and
$t'+1$ edges. It is easy to verify that the only possible degree
sequences for such subgraphs are $(4,2,2,\ldots)$ or
$(3,3,2,2,\ldots)$. Suppose it is $(4,2,2,\ldots)$ and let $v$ be the
vertex with degree 4. Now, there must be a sequence of $t'+2$ vertices
$(v, \ldots, v, \ldots, v)$ that represents an Eulerian tour. But the
number of such sequences is upper bounded by $nt'(n)^{t'-1} =
t'n^{t'}$ ($n$ for guessing $v$, $t'$ for guessing the position of $v$
in the middle, and $n^{t'-1}$ to guess the other $t'-1$ vertices).
Now assume that the degree sequence is $(3,3,2,2,\ldots)$ and $u,v$ be
the vertices with degree 3. Now, there must a sequence of $t+2$
vertices $(u, \ldots, v, \ldots, u, \ldots, v)$ that represents an
Eulerian path from $u$ to $v$. By a similar argument, the number of
such sequences is bounded by ${t'}^2n^{t'}$.  Hence, we are guaranteed
that the total number of 2-connected subgraphs of $G$ with $t'$
vertices and $t'+1$ edges is at most $2{t'}^2n^{t'}$.

Therefore, the probability that there exists a subgraph of $G''$ with
$t'$ vertices and $t'+1$ edges for $4 \leq t' \leq \frac{1}{\eta} =
3l$ is at most \begin{equation} \label{eq:sparse-small}
	\sum_{t'=4}^{3l}  2{t'}^2n^{t'}p^{t'+1} = \sum_{t'=4}^{3l}
	2{t'}^2n^{t'}\left(\frac{\Delta}{2n}\right)^{t'+1} \leq \frac{C}{n}l^3
	\left(\frac{\Delta}{2}\right)^{3l+1}, \end{equation} 	
where $C$ is an appropriate constant. 
For $l = O(\frac{\log n}{\log \Delta})$, we have that
$(\frac{C}{n})l^3(\frac{\Delta}{2})^{3l+1} \leq n^{-0.1} = o(1)$.

\emph{Case 2: $n^\delta \geq t' > \frac{1}{\eta} = 3l$.} In this case,
we count the number of subgraphs of $G$ with $t'$ vertices and
$(1+\eta)t'$ edges. As shown by Lee~\cite{lee2014hardness}, the number
of such subgraphs is bounded by $C
\alpha^{t'}n^{t'}(\frac{et'}{2\eta})^{2\eta t'}$ for some constants
$C$ and $\alpha$.

Therefore, the probability that such a subgraph exists in $G''$ is at most
\begin{eqnarray}
  \label{eq:sparse-large}
  C \alpha^{t'}n^{t'}\left(\frac{et'}{2\eta}\right)^{2\eta t'} 
  p^{(1+\eta)t'} & = & C
  \alpha^{t'}n^{t'}\left(\frac{et'}{2\eta}\right)^{2\eta t'} 
  \left(\frac{\Delta}{2n}\right)^{(1+\eta)t'} \nonumber \\
  & \leq & (C_1\Delta^2)^{t'} \left(C_2 \frac{l^2{t'}^2}{n}\right)^{t'/3l},
\end{eqnarray}
where $C_1$ and $C_2$ are appropriate constants. We choose $l =
O(\frac{\log n}{\log \Delta})$ and $\delta \in (0,0.1)$, such that
above quantity is less than $n^{-0.1}$. Summing up over all $t' =
3l,\ldots,n^\delta$, we still have that probability such a subgraph
exists is bounded by $o(1)$. \qed \end{proof}

Finally, we need the following lemma proved by Arora 
et al.~\cite{arora2002proving} regarding the existence of long paths in
sparse, 2-connected graphs.  
\begin{lemma}[Arora et al.~\cite{arora2002proving}] 
	\label{lem:arora} Let
	$l\geq 1$ be an integer and $0 < \eta < \frac{1}{3l-1}$, and
	let $H$ be a 2-connected graph with $t$ vertices and at most
	$(1+\eta)t$ edges and $H$ is not a cycle. Then $H$ contains a
	path of length at least $l+1$ whose internal vertices have
	degree 2 in $H$.  
\end{lemma}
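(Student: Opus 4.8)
The plan is to prove Lemma~\ref{lem:arora} (following Arora et al.~\cite{arora2002proving}) by the standard device of \emph{suppressing} degree-$2$ vertices and then counting. First I would observe that since $H$ is $2$-connected and not a cycle, its minimum degree is at least $2$ and it has at least one \emph{branch vertex}, i.e.\ a vertex of degree $\geq 3$ (otherwise $H$ would be $2$-regular and connected, hence a cycle). Let $B$ be the set of branch vertices, $b = |B|$, and form the multigraph $H'$ on vertex set $B$ obtained by replacing each maximal \emph{thread} of $H$ --- a path all of whose internal vertices have degree $2$ --- by a single edge. Here $2$-connectivity rules out a thread being a closed loop at a branch vertex (its internal degree-$2$ vertices would otherwise be separated from the rest of $H$), so $H'$ has no loops, and since suppression does not change the degree of a branch vertex, $H'$ has minimum degree $\geq 3$. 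Writing $m'$ for the number of edges of $H'$ (equivalently the number of threads), this gives $2m' = \sum_{v\in B}\deg_H(v) \geq 3b$.

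The heart of the argument is a short double count. Summing degrees in $H$, $2|E(H)| = \sum_v \deg_H(v) \geq 3b + 2(t-b) = 2t+b$, and combining with $|E(H)| \leq (1+\eta)t$ yields $b \leq 2\eta t$. Next, a thread with $s$ internal vertices accounts for $s$ degree-$2$ vertices and $s+1$ edges, and every degree-$2$ vertex (resp.\ every edge) of $H$ lies in exactly one thread, so summing over threads gives $|E(H)| = m' + (t-b)$, hence $m' = |E(H)| - t + b \leq (1+\eta)t - t + 2\eta t = 3\eta t$. On the other hand the $t - b \geq (1-2\eta)t$ degree-$2$ vertices are distributed among these $m'$ threads, so by averaging some thread has at least $(t-b)/m' \geq (1-2\eta)/(3\eta)$ internal vertices.

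Finally I would extract the exact threshold using integrality. The hypothesis $\eta < \tfrac{1}{3l-1}$ is precisely equivalent to $\tfrac{1-2\eta}{3\eta} > l-1$; since the number of internal vertices of a thread is a nonnegative integer, a value strictly greater than $l-1$ is at least $l$. Thus the thread found above has $\geq l$ internal vertices of degree $2$ in $H$, i.e.\ it is a path of length at least $l+1$ whose internal vertices all have degree $2$ in $H$, as claimed.

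I expect the main obstacle to be squeezing out exactly the constant $\eta < \tfrac{1}{3l-1}$: the naive pigeonhole bound only gives something weaker (roughly $\eta \leq \tfrac{1}{3l+2}$), and one genuinely needs the integrality observation in the last step. A secondary point that requires care is the structural claim that $H'$ has minimum degree $\geq 3$ and no loops --- this is exactly where the hypotheses ``$2$-connected'' and ``not a cycle'' are used, and (if $H$ is allowed to be a multigraph) one should also note that parallel threads do not affect the count while loops cannot occur.
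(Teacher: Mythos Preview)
The paper does not give its own proof of Lemma~\ref{lem:arora}; it simply quotes the result from Arora et al.~\cite{arora2002proving} and uses it as a black box. So there is no in-paper argument to compare against.

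That said, your proof is correct and is exactly the standard suppression-and-count argument one finds in~\cite{arora2002proving}: contract threads to get a multigraph $H'$ on the branch vertices, bound $b\le 2\eta t$ via the handshake inequality, bound the number of threads by $m'=|E(H)|-(t-b)\le 3\eta t$, and pigeonhole the $t-b$ degree-$2$ vertices among the threads. Your integrality step at the end is precisely what is needed to hit the sharp constant $\eta<\tfrac{1}{3l-1}$. Two minor remarks: the inequality $2m'\ge 3b$ you set up is never actually used in the main count (it is harmless but superfluous), and your exclusion of loops in $H'$ is the only place $2$-connectivity is genuinely invoked, so it is worth stating that cleanly.
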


\begin{corollary} \label{cor:path-decomposable} Every subgraph
$\tilde{G}$ of $G''$ that is induced on at most $t' \leq n^\delta$
vertices is $(l-1)$-path decomposable.  
\end{corollary}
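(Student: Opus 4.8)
The plan is to unwind the definition of $(l-1)$-path decomposability directly. Given an induced subgraph $\tilde{G}$ of $G''$ on at most $n^\delta$ vertices, I would take an arbitrary $2$-connected subgraph $H$ of $\tilde{G}$ and produce a path of length $l-1$ inside $H$ all of whose vertices have degree $2$ in $H$. The key observation that makes everything go through is that $H$ is itself a $2$-connected subgraph of $G''$ on at most $n^\delta$ vertices, so the structural facts from Lemma \ref{lem:girth} (high girth), Lemma \ref{lem:local-sparsity} (edge bound for small $2$-connected subgraphs), and Lemma \ref{lem:arora} (long degree-$2$ paths in sparse $2$-connected non-cycles) all apply verbatim to $H$.

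First I would dispose of the case where $H$ is a cycle: since $H \subseteq G''$, Lemma \ref{lem:girth} forces its length to be at least $l$; every vertex of a cycle has degree $2$ in it, and a cycle of length at least $l$ contains a sub-path on $l$ consecutive vertices, i.e. a path of length $l-1$, which is exactly what is required. In the remaining case $H$ is $2$-connected but not a cycle, say on $t \leq n^\delta$ vertices. Then Lemma \ref{lem:local-sparsity} bounds the number of edges of $H$ by $(1+\eta)t$ with $\eta = \tfrac{1}{3l}$; since $\tfrac{1}{3l} < \tfrac{1}{3l-1}$, the hypotheses of Lemma \ref{lem:arora} are satisfied, so $H$ contains a path $P$ of length at least $l+1$ whose internal vertices have degree $2$ in $H$. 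Deleting the two endpoints of $P$ leaves a sub-path of length at least $l-1$ every one of whose vertices has degree $2$ in $H$, and truncating it to length exactly $l-1$ finishes this case. Combining the two cases shows that every $2$-connected subgraph of $\tilde{G}$ contains the desired path, which is precisely the statement that $\tilde{G}$ is $(l-1)$-path decomposable.

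The part that needs the most care — and which I expect to be the only real obstacle — is lining up the quantitative parameters so that the two cited lemmas can be invoked cleanly. One must confirm the inequality $\eta = \tfrac{1}{3l} < \tfrac{1}{3l-1}$ that legitimizes Lemma \ref{lem:arora}, and, more subtly, check that a $2$-connected non-cycle $H \subseteq G''$ is actually large enough for Lemma \ref{lem:arora} to output a path of length $l+1$: inheriting girth at least $l$ from $G''$, such an $H$ already has more than $l$ vertices, and the edge bound of Lemma \ref{lem:local-sparsity} together with the small-case analysis carried out in its proof (Case $1$) rules out the borderline dense configurations on few vertices, so there is always enough room. Beyond these parameter checks the argument is essentially bookkeeping, since the hard combinatorial content has already been isolated in Lemmas \ref{lem:girth}, \ref{lem:local-sparsity}, and \ref{lem:arora}.
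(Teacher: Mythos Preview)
Your proposal is correct and follows essentially the same argument as the paper: split into the cycle case (handled via the girth bound from Lemma \ref{lem:girth}) and the non-cycle case (handled by combining Lemma \ref{lem:local-sparsity} with Lemma \ref{lem:arora}), then trim the resulting path to length $l-1$. The extra parameter checks you flag are fine but unnecessary to elaborate further; the paper's proof is equally terse on these points.
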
 
\begin{proof}
	Consider any 2-connected subgraph $H$ of $\tilde{G}$. If $H$
	is not a cycle, then Lemma \ref{lem:local-sparsity} and Lemma
	\ref{lem:arora} together guarantee that $H$ contains a path of
	length at least $l+1$ such that all internal vertices have
	degree $2$ in $H$, which gives us a path of length $(l-1)$
	with all vertices of degree $2$ in $H$. 
	On the other hand, if $H$ is a cycle, then Lemma
	\ref{lem:girth} guarantees that $H$ has at least $l+1$
	vertices and hence again the required path exists.  \qed
\end{proof}
For convenience we replace $(l-1)$ in Corollary
\ref{cor:path-decomposable} with $l$, and since $l = \Theta(\frac{\log
n}{\log \Delta})$, this does not change any parameter noticeably.

\subsubsection{Final Instance}

\begin{theorem} 
	\label{thm:instance} Given $k \in \{2,\dots, n\}$
	and constants $\eps, \mu > 0$, there exists a constant $\gamma
	> 0$, and parameters $\Delta = \Theta(\frac{\log k}{\eps^2})$,
	and $l = \Theta(\frac{\log n}{\log \Delta})$ such that there
	is an instance $\hat{D}$ (with underlying undirected 
	graph $\hat{G}$) of \maxk with the following properties 
	\begin{itemize} 
		\item Low Integral Optimum: $Opt(\hat{D}) \leq
			\frac{1}{2}(1 - \frac{1}{k}) + \eps$. 
		\item Almost Regularity: Maximum Degree of $\hat{G}
			\leq 2 \Delta$, and $\hat{G}$ has 
			$\Omega(\Delta n)$ edges.  
		\item Local Sparsity: For $t < n^{\gamma/\log \Delta}$, every
			induced subgraph of $G$ on $(2 \Delta)^l t$ 
			vertices is $l$-path decomposable.  
		\item Large Noise: For $t < n^{\gamma/\log \Delta}$, 
			$(1-\mu)^{l/10} 
			\leq \frac{\mu}{5t}$. 
	\end{itemize}
	Note that $n^{\gamma/\log \Delta} =
	n^{\Omega\left(\slfrac{1}{\log\log k}\right)}$. 
\end{theorem}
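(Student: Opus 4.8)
The plan is to assemble Theorem~\ref{thm:instance} by chaining together Lemma~\ref{lem:lowopt}, Lemma~\ref{lem:girth}, Lemma~\ref{lem:local-sparsity}, and Corollary~\ref{cor:path-decomposable}, and then recording the bookkeeping that reconciles their parameters. First I would fix the constants: given $\eps,\mu>0$, set $\Delta = \Theta(\log k/\eps^2)$ as in Lemma~\ref{lem:girth}, let $D'$ be the random subgraph of the complete digraph $D$ obtained by sampling $\Theta(n\log k/\eps^2)$ edges, and let $D''$ (with underlying multigraph $G''$) be the subgraph produced by Lemma~\ref{lem:girth} after deleting at most $\eps|A'|$ bad/high-degree/short-cycle edges. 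At this point Lemma~\ref{lem:lowopt} gives $Opt(D') \le \tfrac12(1-\tfrac1k)+\eps$; since passing to $D''$ only removes edges, $Opt(D'')$ is no larger, so after reabsorbing the $\eps$ slack (rescaling $\eps$ by a constant) the Low Integral Optimum bound holds for $\hat{D}:=D''$. The Almost Regularity bound (max degree $\le 2\Delta$, $\Omega(\Delta n)$ edges) is exactly what Lemma~\ref{lem:girth} delivers, noting that deleting the $o(\eps|A'|)$ bad edges and $o(n)$ girth-fixing edges leaves $\Omega(\Delta n) = \Omega(n\log k/\eps^2)$ edges.

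Next I would establish Local Sparsity. Lemma~\ref{lem:local-sparsity} says every $2$-connected subgraph of $G''$ on $t' \le n^\delta$ vertices has at most $(1+\eta)t'$ edges with $\eta = \tfrac1{3l}$, and Corollary~\ref{cor:path-decomposable} then says every induced subgraph on $t' \le n^\delta$ vertices is $(l-1)$-path decomposable (which, absorbing the $-1$ into $l$ as remarked after the corollary, is $l$-path decomposable). The only subtlety is translating the ``$\le n^\delta$ vertices'' bound into the form stated in the theorem, namely that induced subgraphs on $(2\Delta)^l t$ vertices are $l$-path decomposable for $t < n^{\gamma/\log\Delta}$. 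For this I would choose $\gamma>0$ small enough that $(2\Delta)^l \cdot n^{\gamma/\log\Delta} \le n^\delta$: since $l = \Theta(\log n/\log\Delta)$ we have $(2\Delta)^l = n^{\Theta(1)} \cdot \mathrm{poly}$, more precisely $\log((2\Delta)^l) = l\log(2\Delta) = O(\log n)$, so $(2\Delta)^l \le n^{c}$ for a constant $c<\delta$ controlled by the hidden constant in $l = O(\log n/\log\Delta)$; picking $l$ with a small enough leading constant and $\gamma < \delta - c$ makes the product at most $n^\delta$, so Corollary~\ref{cor:path-decomposable} applies to all these subgraphs. I would also record the final remark $n^{\gamma/\log\Delta} = n^{\Omega(1/\log\log k)}$, which is immediate from $\Delta = \Theta(\log k/\eps^2)$ so $\log\Delta = \Theta(\log\log k)$ for constant $\eps$.

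Finally, the Large Noise condition $(1-\mu)^{l/10} \le \tfrac{\mu}{5t}$ for $t < n^{\gamma/\log\Delta}$ is a purely arithmetic consequence of $l$ being logarithmically large in $n$: taking logarithms, the requirement is $\tfrac{l}{10}\ln\tfrac1{1-\mu} \ge \ln\tfrac{5t}{\mu}$, i.e. $l \ge \tfrac{10}{\ln(1/(1-\mu))}\bigl(\ln(5/\mu) + \ln t\bigr)$. Since $\ln t < \tfrac{\gamma}{\log\Delta}\ln n = O(\log n/\log\Delta)$ and $l = \Theta(\log n/\log\Delta)$ with a leading constant we are free to fix, I would simply choose the constant in $l$ large enough (equivalently $\gamma$ small enough) that this inequality holds for all $t$ in range; this is compatible with the earlier choice, since making $l$ larger only helps path-decomposability and shrinking $\gamma$ only shrinks the range of $t$.

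The bulk of the work is already done in the preceding lemmas, so I expect no genuine mathematical obstacle here; the one place requiring care is the \emph{joint} choice of the constants $l$ (the leading constant inside $\Theta(\log n/\log\Delta)$), $\delta$, and $\gamma$ so that \emph{all} four bullets hold simultaneously — in particular that the Girth/Local-Sparsity lemmas (which want $l$ not too large relative to $\log n/\log\Delta$ so that the probabilistic bounds like $(C\Delta)^l \le n^{0.5}$ and \eqref{eq:sparse-large} stay $o(1)$) are consistent with Large Noise (which wants $l$ large enough). Since both sides are $\Theta(\log n/\log\Delta)$ with independently tunable constants, there is a valid window, and I would spell out the inequality chain that pins down the constants, then conclude by setting $\hat{D} = D''$.
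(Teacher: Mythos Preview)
Your plan matches the paper's proof almost exactly: set $\hat D = D''$, invoke Lemmas~\ref{lem:lowopt} and~\ref{lem:girth} for the first two bullets, Corollary~\ref{cor:path-decomposable} for local sparsity, and then balance $l$, $\delta$, and $\gamma$ so that $(2\Delta)^l t \le n^\delta$ and the large-noise inequality both hold.

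One correction worth making: the sentence ``since passing to $D''$ only removes edges, $Opt(D'')$ is no larger'' is false as written, because in this section $Opt$ denotes the weighted \emph{fraction} of satisfied edges, and deleting edges shrinks the denominator and can therefore \emph{increase} this fraction. The right argument---which your parenthetical about ``reabsorbing the $\eps$ slack'' already gestures at---is that the numerator (absolute weight of satisfiable edges) cannot grow when passing to a subgraph, while Lemma~\ref{lem:girth} guarantees the denominator drops by at most an $\eps$-fraction, so $Opt(D'') \le Opt(D')/(1-\eps) \le \tfrac12(1-\tfrac1k) + O(\eps)$, which is absorbed by rescaling $\eps$.
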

\begin{proof}
	Let $D''$ be the digraph obtained from Lemma \ref{lem:girth}.  
	Lemmas \ref{lem:lowopt} and \ref{lem:girth} imply that  
the digraph $D''$ so obtained
(i) has low integral optimum, (ii) is almost regular, and (iii) has
girth $\geq l$. 

The large noise condition is satisfied by $l \geq \left(\slfrac{C
\gamma \log n}{\log \Delta}\right)$ for an appropriate constant $C$. 

Corollary \ref{cor:path-decomposable} guarantees that the local
sparsity condition is satisfied if $(2 \Delta)^lt \leq n^\delta$, i.e.
$l \leq C'(\delta - \gamma) \log n$ for another constant $C'$. Hence,
by selecting a small enough constant $\gamma$ and an appropriate $l =
\Theta(\frac{\log n}{\log \Delta})$, 
the instance $D''$ obtained in Lemma \ref{lem:girth}
satisfies all the required properties. \qed 
\end{proof}

\subsection{Constructing Local Distributions}
\label{sec:constr-local-distr}

Let $D = (V,A)$ be the instance of \maxk constructed in Theorem
\ref{thm:instance} and let $G=(V,E)$ be the underlying undirected
graph. We now show that there exists a solution to the LP after $t =
n^{\gamma/\log \Delta}$ 
rounds of the Sherali-Adams hierarchy whose objective is at
least $(1 - \eps)(1 - \frac{1}{k})$. Our proof for the existence of
such a solution essentially follows the approach of 
Lee~\cite{lee2014hardness}.  
Given a set of $t \leq n^{\gamma/\log \Delta}$ vertices $S
= \{v_1, v_2, \ldots, v_t\}$, our goal is to give a distribution on
events $\{\ell(v_1) = x_1, \ell(v_2) = x_2, \ldots, \ell(v_t) =
x_t\}_{x_1,x_2,\ldots,x_t \in [k]}$.

Let $d(u,v)$ be the shortest distance between $u$ and $v$ in the
(undirected) graph $G$. Let $V' \subset V$ be the set of vertices that
are at most $l$ distance away from $S$ and let $G'$ be the subgraph
induced by $V'$ on $G$. Since the maximum degree of vertices is
bounded by $2\Delta$, we have $|V'| \leq (2\Delta)^lt$ and hence $G'$
is $l$-path decomposable by Theorem \ref{thm:instance}.

The first step of the construction relies on the following theorem by
Charikar et al.~\cite{charikar2010local} that shows that if a graph
$G'$ is $l$-path decomposable, then there exists a distribution on
partitions of $V$ such that close vertices are likely to remain in the
same partition while distant vertices are likely to be separated. 

\begin{theorem}[Charikar et al.~\cite{charikar2010local}]
\label{thm:cmmmulticut}
Suppose $G' = (V',E')$ is an $l$-path decomposable graph. Let
$d(\cdot,\cdot)$ be the shortest path distance on $G$ , and $L =
\lfloor l/9 \rfloor$; $\mu \in [1/L, 1]$. Then there exists a
probabilistic distribution of multicuts of $G'$ (or in other words
random partition of $G'$ into pieces) such that the following
properties hold. For every two vertices $u$ and $v$, 
\begin{enumerate}
	\item If $d(u,v) \leq L$, then the probability that $u$ and
		$v$ are separated by the multicut (i.e. lie in
		different parts) equals $1 - (1 - \mu)^{d(u,v)}$;
		moreover, if $u$ and $v$ lie in the same part, then
		the unique shortest path between $u$ and $v$ also lies
		in that part.  
	\item If $d(u,v) > L$, then the probability that $u$ and 
		$v$ are separated by the multicut is at least 
		$1 - (1-\mu)^L$.  
	\item Every piece of the multicut partition is a tree.  
\end{enumerate} 
\end{theorem}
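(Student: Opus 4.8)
The plan is to construct the random multicut by recursion on the structure of $G'$, peeling off long degree-$2$ paths until only trees and cycles remain, and building each local cut by deleting edges (essentially) independently with probability $\mu$.

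First I would reduce to the case of a single $2$-connected block. Decompose $G'$ along its cut vertices into blocks $B_1,\dots,B_r$ (maximal $2$-connected subgraphs, bridges, isolated edges). For each $B_i$ I produce, independently, a random edge set $F_i\subseteq E(B_i)$ with $B_i-F_i$ a forest, and output the connected-component partition of $G'-\bigcup_i F_i$. Because the block--cut tree is a tree, every cycle of $G'$ lies inside a single block, so property~3 reduces to acyclicity of each $B_i-F_i$; and any shortest path between two vertices is a concatenation of sub-paths, one per block, consecutive ones meeting at a cut vertex, so by independence the global non-separation probability is the product of the per-block ones. I would also record that $l$-path decomposability forces every cycle to have at least $l$ edges (a $c$-cycle must contain a degree-$2$ path of length $l$, whence $c\ge l$), so $G'$ has girth $\ge l\ge 9L>2L$ and shortest paths of length $\le L$ are unique — exactly the paths named in property~1. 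Hence it suffices to treat one $2$-connected $H$ so that (a) $H-F$ is a forest, and (b) for every shortest path $\pi$ in $H$ of length $\le L$, each edge of $\pi$ is deleted independently with probability $\mu$.

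For a $2$-connected $H$ I recurse on $|E(H)|$. If $H$ is a cycle $C$ (necessarily of length $\ge l$), I invoke a base-case gadget: a distribution over \emph{nonempty} subsets $F$ of $E(C)$ in which every arc of at most $L$ consecutive edges is disjoint from $F$ with probability exactly $(1-\mu)^{(\text{arc length})}$; engineering this exact probability while still guaranteeing $F\neq\emptyset$ (so $C-F$ is a forest) is the delicate point, and it exploits that $C$ is long ($\ge 9L$) compared with $1/\mu\le L$. Otherwise $H$ contains a degree-$2$ path $P=v_0v_1\cdots v_m$ with $m\ge l$; I delete each of its $m$ edges independently with probability $\mu$ and, in parallel, recurse on the graph obtained from $H$ by contracting $P$ to a single weighted edge $\bar e = v_0 v_m$ that is ``uncut'' precisely when no edge of $P$ was deleted (so $\bar e$ acts like a length-$m$ edge cut with probability $1-(1-\mu)^m$). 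Since every cycle of $H$ meeting the interior of $P$ contains all of $P$, acyclicity of $H-F$ is equivalent to acyclicity after contraction, which the recursion delivers; and a shortest path of length $\le L$ either lies entirely inside $P$ (covered by the i.i.d.\ deletions) or traverses $\bar e$ in full (covered inductively, real lengths adding correctly). The constant $L=\lfloor l/9\rfloor$ is tuned so that a length-$\le L$ shortest path can never straddle the seam created by contracting the middle of a length-$\ge l$ path — this is what makes the inductive step close.

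Finally I would assemble the conclusions: property~3 from the acyclicity invariant kept at every step; property~1 from the i.i.d./product structure above together with uniqueness of short shortest paths, which also gives that an unseparated pair retains its unique shortest path inside one tree-piece; and property~2 from a variant of the same analysis controlling the tree-piece containing $u$ (to reach a vertex at distance $>L$, the realized tree-path out of $u$ must have length $>L$ and avoid every deletion, which the recursive structure bounds by $(1-\mu)^L$). The step I expect to be the main obstacle is the $2$-connected recursion: simultaneously keeping the deletion probability of every short-shortest-path edge exactly $\mu$, forcing acyclicity — which demands a guaranteed cut on each cycle, in tension with the previous requirement — and producing the exact cycle base-case gadget. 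Reconciling these three requirements is the technical heart of the argument.
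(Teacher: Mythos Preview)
The paper does not prove this theorem at all: it is quoted as a black-box result of Charikar, Makarychev, and Makarychev and used without proof in the Sherali--Adams integrality-gap construction. There is therefore no in-paper argument to compare your proposal against.

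That said, your outline is very much in the spirit of the original Charikar--Makarychev--Makarychev proof: the reduction to $2$-connected blocks, the recursion that peels off the guaranteed long degree-$2$ path, the i.i.d.\ edge deletions along that path, and the cycle base case are the right structural ingredients, and you correctly flag the cycle gadget (a guaranteed nonempty cut with exact $(1-\mu)^d$ survival on every short arc) as the delicate point. The one place your sketch is genuinely under-specified is the inductive step. After replacing the long path $P$ by a single weighted edge $\bar e = v_0v_m$, the contracted graph need not be $l$-path decomposable in the ordinary sense: the girth bound only says every cycle of $H$ has length $\ge l+1$, so nothing prevents a second $v_0$--$v_m$ path of length as small as $1$, which after contraction yields a $2$-cycle through $\bar e$. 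Your parenthetical that $\bar e$ ``acts like a length-$m$ edge'' is the right intuition, but to close the recursion you must formulate and maintain a \emph{weighted} $l$-path-decomposability invariant (edges carry lengths, cycles are long in total weighted length, and the promised degree-$2$ path is long in weighted length), and check that both the contraction and the cycle base case respect it. The original proof handles exactly this bookkeeping; without it, the induction hypothesis as you have stated it does not self-apply.
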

Based on this random partitioning, we define a distribution on the
vertices in $S$ (actually in $V'$). As each piece of the above
partition is a tree, given some vertex $u$ with an arbitrary label
$i$, we can extend it to a labeling $\ell$ for 
every other vertex in that piece such that every
directed edge $(x,y)$ in the piece satisfies $\ell(y) - \ell(x) = 1\ 
(\textnormal{mod }k)$.

For vertices $u$ and $v$ with $d(u,v) \leq L$, we say that label $i$
for $u$ and $i'$ for $v$ \emph{match} if the labeling $\ell(u) = i,
\ell(v)
= i'$ can be extended so that for every directed edge $(x,y)$ on the
unique shortest path between $u$ and $v$, $\ell(y) - \ell(x) = 1\ 
(\textnormal{mod } k)$. Note
that there are exactly $k$ such matching pairs for every $u$ and $v$.
We can now use Theorem \ref{thm:cmmmulticut} to obtain a random
labeling as follows.
\begin{corollary} \label{cor:tree} Suppose $G' = (V',E')$ is an
	$l$-path decomposable graph. Let $L = \lfloor l/9 \rfloor; \mu
	\in [1/L,1]$. Then there exists a random labeling $r: V'
	\rightarrow [k]$ such that 
	\begin{enumerate} 
		\item If $d = d(u,v) \leq L$, then 
			\\ $\Pr[ r(u) = i, r(v) = i'] = 
			\begin{cases} \frac{(1-\mu)^d}{k} + \frac{1
					- (1-\mu)^d}{k^2} &\text{if
					$i$ and $i'$ match}\\
					\frac{1-(1-\mu)^d}{k^2}
					&\text{otherwise} 
			\end{cases}$
		\item If $d > L$, then
			\\ $\frac{1-(1-\mu)^L}{k^2} \leq
			\Pr[ r(u) = i, r(v) = i'] \leq  
			\frac{1 - (1-\mu)^L}{k^2} +
			\frac{(1-\mu)^L}{k}$ for any 
			$i, i' \in [k]$ 
	\end{enumerate}
\end{corollary}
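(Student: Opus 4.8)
The plan is to derive the random labeling $r$ directly from the random multicut partition supplied by Theorem \ref{thm:cmmmulticut}, together with the observation that each piece of the partition is a tree. First I would describe the sampling procedure: draw a random multicut of $G'$ according to Theorem \ref{thm:cmmmulticut}; then, independently for each piece of the partition, pick a uniformly random ``seed'' label in $[k]$ for an arbitrary vertex of that piece and propagate it along the tree so that every directed edge $(x,y)$ of the piece satisfies $\ell(y) - \ell(x) \equiv 1 \pmod k$. Since each piece is a tree this propagation is well-defined and unique given the seed, and because the seed is uniform over $[k]$, the induced marginal distribution of $r(u)$ for any single vertex $u$ is uniform over $[k]$ (this will be used implicitly and is worth stating).

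Next I would compute the joint distribution $\Pr[r(u)=i,\,r(v)=i']$ by conditioning on the event that $u$ and $v$ lie in the same piece. Consider the case $d = d(u,v) \le L$. With probability $(1-\mu)^d$ the vertices $u$ and $v$ are \emph{not} separated (part (1) of Theorem \ref{thm:cmmmulticut}), and in that event the unique shortest $u$–$v$ path lies entirely in that piece; hence $r(u)$ and $r(v)$ are forced to be a matching pair, with all $k$ matching pairs equally likely, giving conditional probability $\tfrac1k$ for a matching $(i,i')$ and $0$ otherwise. With the remaining probability $1-(1-\mu)^d$, the two vertices lie in different pieces; since the seeds of distinct pieces are chosen independently and uniformly, $r(u)$ and $r(v)$ are then independent uniform, contributing $\tfrac1{k^2}$ to every pair $(i,i')$. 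Combining the two cases by the law of total probability yields exactly the stated formula: $\tfrac{(1-\mu)^d}{k} + \tfrac{1-(1-\mu)^d}{k^2}$ for matching pairs and $\tfrac{1-(1-\mu)^d}{k^2}$ otherwise. For the case $d > L$, I would not have exact separation probabilities, only the bound $1 - (1-\mu)^L \le \Pr[\text{separated}] \le 1$ from part (2). Writing $q := \Pr[\text{separated}]$, the joint probability of any pair $(i,i')$ is at least $\tfrac{q}{k^2} \ge \tfrac{1-(1-\mu)^L}{k^2}$, and at most $\tfrac{q}{k^2} + (1-q)\cdot\tfrac1k \le \tfrac{1-(1-\mu)^L}{k^2} + \tfrac{(1-\mu)^L}{k}$ after bounding $1-q \le (1-\mu)^L$ and, in the $\tfrac q{k^2}$ term, $q \le 1$; this gives the claimed two-sided estimate.

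The only subtlety — and the step I expect to require the most care — is verifying that on the ``not separated'' event the conditional distribution of the matching pair is genuinely uniform over the $k$ matching pairs, and that the seed of the piece containing $u$ and $v$ is uniform over $[k]$ even after conditioning on the separation pattern. This follows because the multicut is drawn first, entirely independently of the seed labels, so conditioning on any event about which vertices are separated says nothing about the seeds; and within a piece, fixing a matching pair $(i,i')$ for $u,v$ corresponds to exactly one choice of seed, so uniformity of the seed transfers to uniformity over matching pairs. A minor bookkeeping point: the same argument shows $r$ is defined on all of $V'$, not just $S$, which is what the statement asserts. With these observations in place, the corollary is essentially immediate from Theorem \ref{thm:cmmmulticut}.
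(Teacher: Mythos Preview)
Your approach is essentially identical to the paper's: sample the multicut from Theorem~\ref{thm:cmmmulticut}, seed each tree-piece with a uniform label, propagate modulo $k$, and condition on the separation event. One small slip in the $d>L$ upper bound: bounding $q\le 1$ in the $q/k^2$ term only gives $\tfrac{1}{k^2}+\tfrac{(1-\mu)^L}{k}$, which is strictly weaker than the stated $\tfrac{1-(1-\mu)^L}{k^2}+\tfrac{(1-\mu)^L}{k}$; the right move (which the paper also leaves implicit under ``substituting'') is to note that $\tfrac{q}{k^2}+\tfrac{1-q}{k}$ is decreasing in $q$ for $k\ge 2$, so plugging in the \emph{lower} bound $q\ge 1-(1-\mu)^L$ yields exactly the claimed upper bound.
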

\begin{proof}
We first sample from the distribution of multicuts given by Theorem
\ref{thm:cmmmulticut}. For every piece obtained, we pick an arbitrary
vertex $u$ and assign $r(u)$ to be a uniformly random label from
$[k]$. Now, since each piece is a tree, we can propagate this label
along the tree so that for every directed edge $(v,w)$ we have $r(w) -
r(v) = 1\ (\textnormal{mod }k)$. Note that the final distribution obtained
does not depend on the choice of the initial vertex $u$.

Consider any two vertices $u$ and $v$. If $d(u,v) \leq L$, then if $u$ and $v$ are in the same piece, then the path connecting $u$ and $v$ in the piece is the shortest path. If $i$ and $i'$ are matching labels, then
\begin{eqnarray}
\Pr[ r(u) = i, r(v) = i'] & = & \Pr[\text{$u,v$ in the same
piece}]\cdot \left(\frac{1}{k}\right) \nonumber \\
& & + \Pr[\text{$u,v$ are separated}]\cdot\left(\frac{1}{k^2}\right).
\nonumber 
\end{eqnarray}
On the other hand, if $i$ and $i'$ are not matching,
\begin{eqnarray}
\Pr[ r(u) = i, r(v) = i'] & = & \Pr[\text{$u,v$ in the same
piece}]\cdot 0 \nonumber \\ 
& & + \Pr[\text{$u,v$ are
separated}]\cdot\left(\frac{1}{k^2}\right).\nonumber
\end{eqnarray}
Similarly, if $d(u,v) > L$, then $\Pr[ r(u) = i, r(v) = i']$ is lower
bounded by $\slfrac{\Pr[\text{$u,v$ are separated}]}{k^2}$ and upper
bounded by $\slfrac{\Pr[\text{$u,v$ in the same piece}]}{k} +
\slfrac{Pr[\text{$u,v$ are separated}]}{k^2}$.  Substituting the
separation probabilities in Theorem \ref{thm:cmmmulticut} proves the
desired result. \qed \end{proof}

The above random labeling defines a distribution $\nu_S$ over labels of pairs of vertices as follows.
\begin{definition} Let $S = \{v_1, v_2, \ldots, v_t\}$ be a fixed set
	of vertices. For any two vertices $u,v \in S$ and $i,i' \in
	[k]$, let $\nu_S(u(i), v(i')) = \Pr[ x(u) = i, x(v)=i']$ in the local distribution on $S$ defined by $r$ in Corollary \ref{cor:tree}.
\end{definition}

We now define another distribution $\rho$ over labels for pairs of
vertices that is independent of the choice of the set $S$ as follows.
\begin{definition} For any vertices $u \neq v$ and $i,i' \in [k]$, let
	$\rho(u(i),v(i')) = \Pr[ x(u) = i, x(v)=i']$ if $d(u,v) \leq
	L$, and $\frac{1}{k^2}$ otherwise. Also define $\rho(u(i),
	u(i)) = \frac{1}{k}$ and $\rho(u(i), u(i')) = 0$ for $i \neq
	i'$. Since the shortest path between $u$ and $v$ is unique
	when $d(u,v) \leq L$, $\rho$ is uniquely defined by $D$ and
	$G$ and is independent of the choice of set $S$.
\end{definition}
Lee~\cite{lee2014hardness} shows that it is possible to use the $\rho$
and $\nu_S$ distributions defined above to produce consistent
distributions over events of the form $\{\ell(v_1)=x_1, \ldots,
\ell(v_t)=x_t\}_{x_1, \ldots, x_t \in [k]}$. Further, these distributions
need to be consistent, i.e., the marginal distribution on $S \cap S'$
does not depend on the choice of its superset ($S$ or $S'$) that is
used to obtained the larger local distribution. The key idea here as
shown by Charikar et al.~\cite{charikar2009integrality} is to embed
$\rho$ into Euclidean space 
with a small error to obtain $tk$ vectors $\{v(i)\}_{v
\in S, i \in [k]}$ such that $u(i) \cdot v(i') \approx \rho(u(i),
v(i'))$. This uses the large noise property in Theorem
\ref{thm:instance}. The following lemma appears as Lemma 5.7 in
\cite{lee2014hardness}.
\begin{lemma}[Lee~\cite{lee2014hardness}]
There exist $tk$ vectors $\{v(i)\}_{v \in S, 
i \in [k]}$ such that $||v(i)||_2^2 = \mu + \frac{1}{T+1}$ and $u(i) \cdot v(i') = \frac{\mu}{2} + \rho(u(i), v(i'))$.
\end{lemma}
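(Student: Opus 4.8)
\emph{Plan of proof (following Lemma~5.7 of Lee~\cite{lee2014hardness}).}
I would obtain the vectors by the Gram–matrix embedding of Charikar et al.~\cite{charikar2009integrality}; the only property of the instance $\hat D$ that is really used is the \emph{large noise} hypothesis of Theorem~\ref{thm:instance}. Concretely, I would produce a positive semidefinite $tk\times tk$ matrix $G$, with rows and columns indexed by pairs $(v,i)$ with $v\in S$, $i\in[k]$, whose diagonal entries are all $\mu+\frac{1}{T+1}$ and whose entry in position $((u,i),(v,i'))$ for $(u,i)\neq(v,i')$ is $\frac{\mu}{2}+\rho(u(i),v(i'))$. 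Any Gram factorization $G=B^{\mathsf T}B$ then gives vectors $v(i)$ (the columns of $B$) realizing the two claimed identities, so the whole task reduces to showing $G\succeq 0$.

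\emph{Step 1: the $\rho$–block is almost PSD.} Let $R$ be the matrix with $R[(u,i),(v,i')]=\rho(u(i),v(i'))$. By the definition of $\rho$, $R$ coincides, on all same–vertex pairs and all pairs at distance at most $L$, with the genuine second–moment matrix $M=\mathbb{E}_r[\chi\chi^{\mathsf T}]$ of the random labeling $r$ of Corollary~\ref{cor:tree} (restricted to $S$); on a pair at distance more than $L$ it only replaces the true joint probability by the product marginal $\frac{1}{k^2}$. Part~(2) of Corollary~\ref{cor:tree} bounds that discrepancy by $\frac{(1-\mu)^L}{k}$ in each of at most $(tk)^2$ entries, so $\|R-M\|_F\le t(1-\mu)^L\le t(1-\mu)^{l/10}$, which is at most $\frac{\mu}{5}$ by the large–noise bound $(1-\mu)^{l/10}\le\frac{\mu}{5t}$ (using $L=\lfloor l/9\rfloor\ge l/10$). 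Since $M\succeq0$, this gives $\lambda_{\min}(R)\ge-\frac{\mu}{5}$. Note this step needs Corollary~\ref{cor:tree} to apply to the \emph{entire} $l$–neighborhood $V'$ of $S$: that is exactly what $|V'|\le(2\Delta)^l t\le n^{\delta}$ together with $l$–path decomposability (Theorem~\ref{thm:instance}) and the girth bound of Lemma~\ref{lem:girth} guarantee.

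\emph{Step 2: regularize and factorize.} Take $G=\frac{\mu}{2}\mathbf{1}\mathbf{1}^{\mathsf T}+\bigl(\tfrac{\mu}{2}+\tfrac{1}{T+1}-\tfrac{1}{k}\bigr)I+R$, which has the prescribed diagonal $\mu+\frac{1}{T+1}$ and off–diagonal entries $\frac{\mu}{2}+\rho(u(i),v(i'))$. One checks that $\mathbf{1}$ is an eigenvector of $R$ with eigenvalue $\frac{t}{k}>0$, so $G$ is manifestly positive on the all–ones direction; on $\mathbf{1}^\perp$ the term $\frac{\mu}{2}\mathbf{1}\mathbf{1}^{\mathsf T}$ disappears, but there $R\succeq-\frac{\mu}{5}I$ by Step~1 and in addition the $(t-1)$–dimensional near–kernel of $R$ (spanned by differences of the per–vertex all–ones indicators) contributes nothing, so choosing $\mu$ in its admissible range — small enough for the eventual objective bound yet still $\Omega(1/k)$ — makes the scalar $\frac{\mu}{2}+\frac{1}{T+1}-\frac{1}{k}$ dominate $\frac{\mu}{5}$ and forces $G\succeq0$. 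Factorizing $G$ then yields the vectors.

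\emph{Main obstacle.} I expect the crux to be the spectral bookkeeping in Step~2: one $O(\mu)$ regularization has to do two jobs at once — (i) beat the $\frac{\mu}{5}$ defect of $R$ coming from truncating long–range correlations to product marginals, and (ii) remain small enough that the later Gaussian rounding of the $v(i)$'s into local distributions still satisfies a $(1-\eps)(1-\frac{1}{k})$ fraction of the edges and stays consistent across overlapping sets $S$ over $T$ rounds of Sherali–Adams, the $\frac{1}{T+1}$ term in $\|v(i)\|_2^2$ being precisely the slack reserved for this consistency. This is why Theorem~\ref{thm:instance} was engineered so that $l=\Theta(\log n/\log\Delta)$ makes $(1-\mu)^{l/10}$ exponentially small in the bounded local size $t$.
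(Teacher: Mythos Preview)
The paper does not prove this lemma at all: it is stated verbatim as Lemma~5.7 of Lee~\cite{lee2014hardness} and simply cited, with the remark that ``this uses the large noise property in Theorem~\ref{thm:instance}.'' So there is nothing in the paper's own text to compare your proof against beyond that one-line attribution.

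Your reconstruction is faithful to the Charikar--Makarychev--Makarychev / Lee argument the paper is invoking. The two essential moves---(i) bound $\|R-M\|$ by $t(1-\mu)^{L}$ using Corollary~\ref{cor:tree}(2) and then kill it with the large-noise inequality $(1-\mu)^{l/10}\le \mu/(5t)$, and (ii) add $\tfrac{\mu}{2}\mathbf{1}\mathbf{1}^{\mathsf T}$ plus a diagonal shift to force the Gram matrix PSD---are exactly the mechanism behind Lee's Lemma~5.7, and your verification that $\mathbf{1}$ is an eigenvector of $R$ with eigenvalue $t/k$ (via the uniform marginals of $\rho$) is the right structural fact. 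The one place your sketch is genuinely soft is the inequality you need on $\mathbf{1}^\perp$ in Step~2: you want $\tfrac{\mu}{2}+\tfrac{1}{T+1}-\tfrac{1}{k}\ge \tfrac{\mu}{5}$, and as you note this only goes through once $\mu$ is pinned down (in Lee's argument $\mu$ is chosen as a specific function of $k$ and $\eps$, large enough that $\tfrac{\mu}{2}-\tfrac{1}{k}$ is already a positive multiple of $\mu$). You correctly identify this bookkeeping as the crux; it is not a gap so much as the place where one has to stop sketching and actually fix the constants.
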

Given such $tk$ vectors, one can use a geometric rounding scheme to
define the consistent local distributions. Note that the local
distribution is completely defined by the pairwise inner products of
the vectors which, for any two vectors, is independent of the
subset $S$. Lee~\cite{lee2014hardness} 
shows that the
following simple rounding scheme suffices to obtain a good
distribution: choose a random Gaussian vector $g$, and for each vertex
$v$, let $\ell(v) = \argmax_i (v(i) \cdot g)$.  
\begin{lemma}[Lee \cite{lee2014hardness} \footnote{The lemma follows
	from the proof of Lemma 5.8 of Lee \cite{lee2014hardness} by
	substituting $l_A(u,v) = 1$.}] \label{lem:lee-obj} There
	exists a $\mu > 0$ depending on $k$ and $\eps$ such that, in
	the above rounding scheme, for any edge $(u,v)$ and any label
	$i \in [k]$ the probability that $\ell(u)=i$ and $\ell(v)=i+1\ 
	(\text{mod
	$k$})$ is at least $\frac{1-12\eps}{k}$.  
\end{lemma}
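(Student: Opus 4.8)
The plan is to trace the Gaussian rounding applied to the embedding produced by Lemma~5.7 of~\cite{lee2014hardness}. Recall we have vectors $\{w(j)\}_{w\in S,\,j\in[k]}$ with all squared norms equal to $\sigma^2:=\mu/2+1/k$ (using $\rho(w(j),w(j))=1/k$), pairwise inner products $w(j)\cdot w'(j')=\tfrac{\mu}{2}+\rho(w(j),w'(j'))$, and that $\ell$ is obtained by drawing a standard Gaussian $g$ and setting $\ell(w)=\argmax_j\langle w(j),g\rangle$. Since every edge $(u,v)$ of the instance of Theorem~\ref{thm:instance} has $d(u,v)=1\le L$, Corollary~\ref{cor:tree} gives, on the pair $\{u,v\}$, $\rho(u(i),v(i+1))=\tfrac{1-\mu}{k}+\tfrac{\mu}{k^2}$ for the unique matching label $i+1\pmod k$, $\rho(u(i),v(i'))=\tfrac{\mu}{k^2}$ for every $i'\neq i+1$, and $\rho(w(i),w(i'))=0$ for $i\neq i'$. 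First I would reduce the claim to a single ``edge-satisfaction'' bound. The Gram matrix of $\{u(i)\}_i\cup\{v(j)\}_j$ depends only on the type of each pair and is invariant under the simultaneous cyclic shift $j\mapsto j+c\pmod k$ at both $u$ and $v$; since the joint distribution of $(\ell(u),\ell(v))$ is a deterministic function of this Gram matrix (it is the covariance of the Gaussian vector $(\langle w(j),g\rangle)$), $\Pr[\ell(u)=i,\ell(v)=i+1]$ is independent of $i$. Moreover, within a single vertex the $k$ values $\langle w(j),g\rangle$ have common variance $\sigma^2$ and common pairwise covariance $\tfrac{\mu}{2}$, so $\langle w(j),g\rangle=\sqrt{\mu/2}\,\xi_w+\sqrt{1/k}\,\zeta_{w,j}$ with $\xi_w,\{\zeta_{w,j}\}_j$ i.i.d.\ standard, whence $\ell(w)$ is uniform on $[k]$ and $\Pr[\ell(u)=i]=\tfrac1k$. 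Thus it suffices to show $\Pr[\ell(v)\neq\ell(u)+1\pmod k]\le 12\eps$.

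The geometric point is that matching vectors are almost identical while all others are well separated. A direct computation gives $\|u(i)-v(i+1)\|_2^2=2\sigma^2-\mu-\tfrac{2(1-\mu)}{k}-\tfrac{2\mu}{k^2}=\tfrac{2\mu(k-1)}{k^2}$, whereas $\|u(i)-u(i')\|_2^2=\|v(j)-v(j')\|_2^2=2\sigma^2-\mu=\tfrac2k$ for $i\neq i'$, and $\|u(i)-v(i')\|_2^2=\tfrac2k-\tfrac{2\mu}{k^2}$ for non-matching $(i,i')$; so for small $\mu$, $v(i+1)$ lies much closer to $u(i)$ than any other $v(j)$ does. Fixing a label $i$, for each $j\neq i+1$ I would use the identity
\[
  \langle v(j)-v(i+1),\,g\rangle \;=\; \langle u(j-1)-u(i),\,g\rangle \;+\; \langle (v(j)-u(j-1))-(v(i+1)-u(i)),\,g\rangle ,
\]
in which, on the event $\ell(u)=i$, the first term is negative (since $j-1\neq i$), and the second is a mean-zero Gaussian of variance $\|(v(j)-u(j-1))-(v(i+1)-u(i))\|_2^2\le 8\mu/k$. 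Hence, for any threshold $\delta>0$, $\Pr[\ell(u)=i,\ \ell(v)\neq i+1]\le\sum_{j\neq i+1}\big(\Pr[\langle u(j-1)-u(i),g\rangle\in(-\delta,0)]+\Pr[\langle(v(j)-u(j-1))-(v(i+1)-u(i)),g\rangle>\delta]\big)$, where the first summand is bounded by one-dimensional anti-concentration of an $N(0,2/k)$ variable and the second by a Gaussian tail estimate. Choosing $\delta$ a suitable small power of $\eps/k$ and then $\mu=\mu(k,\eps)$ small enough makes this sum at most $\tfrac{12\eps}{k}$; together with $\Pr[\ell(u)=i]=\tfrac1k$ and the cyclic symmetry, this gives the lemma.

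The cleanest write-up would simply observe that this is exactly Lemma~5.8 of~\cite{lee2014hardness} specialized to the case where the ``edge length'' parameter $l_A(u,v)$ equals $1$, which is legitimate here since $d(u,v)=1$ for every edge of the sparsified instance; one then fixes $\mu$ as prescribed there. Carrying the argument out directly, the main obstacle is the quantitative step just described: bounding the probability that the $O(\sqrt{\mu/k})$-scale perturbation exceeds the gap between the two largest of the $k$ values $\langle u(\cdot),g\rangle$, and tracking how the several error contributions combine into the clean factor $1-12\eps$ — this is precisely where the choice of $\mu$, hence of the noise parameter of Theorem~\ref{thm:cmmmulticut} used in the embedding of Lemma~5.7, is pinned down.
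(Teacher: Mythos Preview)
The paper does not prove this lemma at all: it is stated with attribution to Lee~\cite{lee2014hardness}, and the footnote records precisely the observation you make in your final paragraph, namely that it is Lemma~5.8 of~\cite{lee2014hardness} with $l_A(u,v)=1$. Your detailed sketch of the Gaussian rounding analysis---the exchangeability argument giving $\Pr[\ell(w)=i]=1/k$, the norm computations showing matching vectors are $O(\sqrt{\mu/k})$-close while all other pairs are $\Theta(1/\sqrt{k})$-apart, and the decomposition isolating the small perturbation from the within-$u$ gap---is a correct outline of the underlying argument and goes well beyond what the paper itself supplies; the only point left imprecise is the final choice of $\delta$ and $\mu$ to land on the constant $12$, which you explicitly flag and which the paper likewise defers to~\cite{lee2014hardness}.
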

	Consider the solution to $n^{\gamma/\log \Delta}$ 
	rounds of the Sherali-Adams hierarchy obtained by the 
	above rounding process. 
	For any edge $(u,v) \in A$, its contribution 
	to the objective is 
	\begin{eqnarray}
	\sum_{1\leq i<i'\leq k} 
	\Pr[ \ell(u) = i, \ell(v) = i'] & \geq & 
	\sum_{i \in [k-1]} \Pr[ \ell(u)=i, \ell(v) = i+1] \nonumber \\ 
	& \geq & \sum_{i \in [k-1]}\frac{1-12\eps}{k}\nonumber 
	\end{eqnarray}
The last inequality follows due to  Lemma \ref{lem:lee-obj}. Thus we have 
a fractional solution with  value at least $(1-12\eps)(1 - \frac{1}{k})$.
This, along with the low optimum of the instance from Theorem
\ref{thm:instance} completes the proof of Theorem
\ref{thm:main2restated}.

\section{The \mink Problem}
\label{sec:mink}

Recall that the \mink problem is to remove the minimum weight subset 
of edges from a given DAG
so that the remaining digraph
does not contain any path of length $k$. 

\subsection{Combinatorial $k$-Approximation}
\label{sec:greedy-k-appr}

In the unweighted case (i.e. all edges have unit weight), 
the following simple
scheme is a $k$-approximation algorithm. As long as the DAG
contains a directed path $P$  of length $k$, delete \emph{all} edges
of that path. It is easy to see that the above scheme guarantees a
$k$-approximation as the optimal solution must delete at least one
edge from the path $P$ while the algorithm deletes exactly $k$ edges.

The following slightly modified scheme that uses the 
local ratio technique yields a $k$-approximation for weighted DAGs.

\vspace{2mm}
{\bf Algorithm {\sf LocalRatio}}:
\vspace{-2mm}
\begin{enumerate}
\item $S \leftarrow \{e \in E \ | \ w(e) = 0\}$
\item While $(V, E \setminus S)$ contains a path $P$ of length $k$
  \begin{enumerate}
  \item $w_{min} \leftarrow \min_{e \in P} (w(e))$
  \item $w(e) = w(e) - w_{min}, \forall e \in P$
  \item $S = \leftarrow \{e \in E \ | \ w(e) = 0\}$
  \end{enumerate}
\end{enumerate}

\begin{theorem}
	{\sf LocalRatio} 
	is a polynomial time $k$-approximation to the 
	\mink problem on weighted DAGs. 
\end{theorem}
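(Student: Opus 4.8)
The plan is to prove two things about {\sf LocalRatio}: that it runs in polynomial time, and that it outputs a $k$-approximate solution. For the running time, the key observation is that each iteration of the while loop zeroes out at least one new edge (the edge attaining the minimum $w_{min}$ on $P$), so the set $S$ strictly grows, and there are at most $|E|$ iterations. Finding a path of length $k$ in a DAG can be done in polynomial time (for instance by dynamic programming over a topological order, computing for each vertex the longest path ending there, in time $O(|E|)$), so each iteration is polynomial and the whole algorithm is polynomial. Note this is exactly where we use that the input is a DAG — on general digraphs detecting a length-$k$ path is NP-hard.

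For correctness I would use the standard local ratio argument, reasoning about the weight decomposition. At a generic iteration with current weight function $w$ and path $P$, write $w = w_1 + w_2$ where $w_1(e) = w_{min}$ for $e \in P$ and $w_1(e) = 0$ otherwise, and $w_2 = w - w_1$ is the residual weight used in the next iteration. The first step is the local observation: for the ``uniform on $P$'' weight function $w_1$, every feasible solution (every edge set $F$ whose removal kills all length-$k$ paths) must delete at least one edge of $P$, so $w_1(F) \ge w_{min}$; meanwhile any solution deletes at most all $k$ edges of $P$, so for the set $F^{alg}$ eventually returned by the algorithm we have $w_1(F^{alg}) \le k\, w_{min} \le k\cdot w_1(F^{opt})$ for the optimum $F^{opt}$ of the original instance (here one must check $F^{opt}$ is still feasible for, and hence a valid comparison point for, the residual instances — which it is, since feasibility of ``no length-$k$ path'' does not depend on the weights). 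The second step is induction: by the inductive hypothesis the solution is $k$-approximate with respect to $w_2$, i.e. $w_2(F^{alg}) \le k\cdot w_2(F^{opt})$; the base case is when no length-$k$ path remains, where the algorithm returns exactly the zero-weight edges and $w(F^{alg}) = 0$. Adding the two inequalities gives $w(F^{alg}) = w_1(F^{alg}) + w_2(F^{alg}) \le k(w_1(F^{opt}) + w_2(F^{opt})) = k\, w(F^{opt})$.

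The one point requiring a little care — and the main obstacle, though it is a mild one — is making the local ratio bookkeeping precise about \emph{which} solution set the algorithm returns and ensuring the comparison is against a single fixed optimum throughout. The cleanest way is to argue by induction on the number of while-loop iterations: if zero iterations remain the claim is immediate as above; otherwise peel off the first iteration's weight $w_1$, apply the inductive hypothesis to the instance with weights $w_2$ on the same DAG, and combine. One must also verify the subtlety that the edge set $S$ returned at the end is indeed feasible for the original instance — it is, because the loop terminates precisely when $(V, E\setminus S)$ has no length-$k$ path, and removing $S$ from the \emph{original} graph removes exactly those edges. With these observations the theorem follows; I would present it as a short induction with the local ratio lemma ($w_1(F^{alg}) \le k\cdot w_1(F^{opt})$) as the inductive step's new ingredient.
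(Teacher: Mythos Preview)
Your proof is correct and uses the same local-ratio idea as the paper. The only difference is presentational: you phrase the approximation guarantee via the textbook inductive weight decomposition $w = w_1 + w_2$, whereas the paper gives a direct charging argument---bounding the algorithm's cost by the total weight reduction over all iterations, then charging each iteration's reduction (at most $k\cdot w_{min}$) to an optimum edge on $P$ whose weight drops by at least $w_{min}$. Both unwind to the same inequality; your version is a bit more formal about feasibility of the returned set and the fixed comparison optimum, while the paper's version is shorter.
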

\begin{proof}
	We note that the {\sf LocalRatio} 
	terminates in at most $|E|$ iterations as
the weight of at least one edge reduces to 0 in each iteration. Also,
since one can check if there exists a path of length $k$ in DAG via a
dynamic programming, it follows that {\sf LocalRatio} runs in polynomial
time.

Let $\mc{O}\subseteq E$ be an optimal solution and $\mc{S}\subseteq E$ be the
solution returned by {\sf LocalRatio}.
Note that an edge is in $\mc{S}$ if its
weight is reduced to $0$ in some iteration of the algorithm. 
Thus, the weight of $\mc{S}$ is upper
bounded by the total reduction in the weight of the edges. 
At each iteration, for a path $P$ of
length $k$, the reduction is at most
$k$ times the minimum weight edge (according to the current weights)
on in $P$. Since there is at least one edge $e$ in $P$ which is in
$\mc{O}$, we charge this reduction to the weight of $e$. Then
the weight of $e$ decreases by at least $1/k$ factor of what is charged to
it, and it cannot decrease beyond $0$. Thus, the weight of $\mc{S}$ is
at most the $k$ times the weight of $\mc{O}$. \qed
\end{proof}

\subsection{$k$-Approximation via LP Rounding}
\label{sec:ded-lp}

The natural LP relaxation for \mink on an $n$-vertex DAG $D = (V,E)$ 
is given in Figure \ref{fig:LPmink}.
\begin{figure}[h]
\setlength{\fboxsep}{10pt}
\begin{center}
\fbox{
\begin{minipage}[c]{4.3in}
\begin{equation}
	\min
	\sum_{e \in E}w(e)x_e
\end{equation}
subject to,
\begin{align}
	& \forall \text{ paths $P$ of length $k$} , && \displaystyle
	\sum_{e \in P} x_e	\geq 1 .\label{eqn:LP-min-sum}\\
        &\forall e \in E, && x_e \geq 0.	\label{eqn:LP-min-xpositive}
\end{align}
\caption{LP Relaxation for instance $\mc{I}$ of \mink.}
\label{fig:LPmink}
\end{minipage}
}\end{center}
\end{figure}
This relaxation has $n^{O(k)}$ constraints. 
However, when the input graph is a DAG, it admits the following 
polynomial time 
separation oracle for any $k \in \{2,\dots, n-1\}$.

\subsubsection{Separation Oracle and Rounding.}
\label{sec:separation}
For each vertex $v \in V$ and integer $t \in [n]$, 
define ${a^v_t = \min_{P} (\sum_{e \in P} x_e)}$ 
where $P$ is a path of length $t$ that ends at vertex $v$. 
Once we compute all these $a^v_t$ values, then a constraint is 
violated if and only if there is a vertex $v$ such that $a^v_k < 1$.

On a DAG the $\{a^v_t\,\mid\, v \in V, t \in [n]\}$ can be computed 
by dynamic programming. First assume that the vertices are arranged in 
a topological order. For any vertex $v$ with no predecessors, set 
$a^v_t = 0, \forall t$. Otherwise, we have the following recurrence,
\[ a^v_t = \min_{u \in \text{predecessors($v$)}} (x_{(u,v)} +
a^u_{t-1}).\]
It is easy to see that the above recurrence leads to a dynamic program
on a DAG.
Once we obtain an optimal solution to the LP relaxation, a simple threshold based rounding using  a threshold of $\slfrac{1}{k}$ yields a $k$-approximation.
\begin{theorem}
	The standard LP relaxation for \mink on $n$-vertex DAGs can be
	solved in polynomial time for $k=\{2,\dots, n-1\}$ and yields
	a $k$-approximation. 
\end{theorem}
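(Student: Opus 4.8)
The plan is to prove the two assertions of the statement in turn: (a) the LP of Figure~\ref{fig:LPmink} admits a polynomial-time algorithm on a DAG even though it has $n^{O(k)}$ constraints, and (b) deterministic threshold rounding at $1/k$ of an optimal LP solution yields a feasible deletion set of weight at most $k\cdot\mathrm{Opt}$.

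For (a) I would run the ellipsoid method, which reduces the task to designing a polynomial-time separation oracle for the covering constraints~\eqref{eqn:LP-min-sum}. Given $\{x_e\}\ge 0$, these constraints hold if and only if $\min_P\sum_{e\in P}x_e\ge 1$ over all directed paths $P$ of length $k$. Using that $D$ is acyclic, hence every directed walk is a simple path, define for each vertex $v$ and each $t\in[n]$ the value $a^v_t=\min\{\sum_{e\in P}x_e: P\text{ a length-}t\text{ path ending at }v\}$ with $a^v_0=0$ and the convention $\min\emptyset=+\infty$; these obey $a^v_t=\min_{u\in\mathrm{pred}(v)}\bigl(x_{(u,v)}+a^u_{t-1}\bigr)$ and are computed by one sweep in topological order in time $O(n|E|)$. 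A constraint is violated exactly when some $a^v_k<1$, and following the arg-min pointers back $k$ steps exhibits the violated length-$k$ path to return to the ellipsoid algorithm. This gives a polynomial-time separation oracle, hence a polynomial-time LP solver, for every $k\in\{2,\dots,n-1\}$.

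For (b), let $\{x_e\}$ be an optimal LP solution of value $\mathrm{LP}$ and set $F=\{e\in E:x_e\ge 1/k\}$. If some path $P$ of length $k$ survived in $(V,E\setminus F)$, each of its $k$ edges would satisfy $x_e<1/k$, so $\sum_{e\in P}x_e<1$, contradicting~\eqref{eqn:LP-min-sum}; thus $F$ is a valid deletion set. Its weight satisfies $w(F)=\sum_{e\in F}w(e)\le k\sum_{e\in F}w(e)x_e\le k\sum_{e\in E}w(e)x_e=k\cdot\mathrm{LP}\le k\cdot\mathrm{Opt}$, the last inequality because the LP is a relaxation of the integer program. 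This establishes the claimed $k$-approximation.

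The only step needing genuine care is the separation oracle in (a): one must verify that acyclicity makes the layered dynamic program both correct (all length-$t$ walks are simple, so no length-$t$ path ending at $v$ is missed and the recurrence is well-founded with respect to a topological order) and efficient, and that a violated inequality can be produced explicitly for the ellipsoid method. Part (b) — feasibility of the rounded set and the factor-$k$ cost bound — is routine given the $1/k$ threshold.
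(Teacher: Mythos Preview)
Your proposal is correct and matches the paper's own argument essentially line for line: the same dynamic program $a^v_t=\min_{u\in\mathrm{pred}(v)}(x_{(u,v)}+a^u_{t-1})$ as a separation oracle, and the same $1/k$-threshold rounding. You have in fact filled in more detail than the paper does (explicit invocation of the ellipsoid method, the feasibility and cost arguments for the rounded set), but the approach is identical.
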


\subsection{Hardness of Approximation}
\label{sec:mink-hardness}

For fixed integer $k \geq 2$ and arbitrarily small constant $\eps >
0$,
Svensson~\cite{svensson2012hardness} showed factor $(k-\eps)$
UGC-hardness of
the \emph{vertex deletion} version of \mink, which requires deleting
the minimum number of vertices from a given DAG to remove all paths
with $k$ vertices. 
In particular, 
\cite{svensson2012hardness} proves
the following structural hardness result.
\begin{theorem}[Svensson~\cite{svensson2012hardness}]
	\label{thm:svensson} For any fixed integer $t \geq 2$ and arbitrary
	constant $\eps > 0$, assuming the UGC the following is NP-hard: 
	Given a DAG
	$D(V,E)$, distinguish between the following cases:
	\begin{itemize} 
		\item \textnormal{(Completeness):} 
			There exist $t$ disjoint subsets
			$V_1, \ldots, V_t \subset V$ satisfying $|V_i|
			\geq \frac{1-\eps}{t}|V|$ and such that a
			subgraph induced by any $t-1$ of these subsets
			has no directed path of $t$ vertices.  
		\item \textnormal{(Soundness):} 
			Every induced subgraph on $\eps|V|$
			vertices has a path with $|V|^{1-\eps}$ vertices.  
	\end{itemize}
\end{theorem}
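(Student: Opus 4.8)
Theorem~\ref{thm:svensson} is due to Svensson~\cite{svensson2012hardness}; I outline how one would establish it from the Unique Games Conjecture. The plan is a long-code reduction, following the structured hardness of vertex cover on $t$-uniform hypergraphs (Bansal and Khot), but arranged so that the produced instance is a DAG and the designed ``hyperedges'' become directed paths on $t$ vertices. First I would fix a Unique Games instance $\mathcal{U}$ on a regular bipartite constraint graph with label set $[R]$, for which under the UGC it is NP-hard to tell whether a $(1-\delta)$-fraction of the bijective constraints can be simultaneously satisfied or no labeling satisfies even a $\delta$-fraction, with $\delta \ll \eps/t$. On top of $\mathcal{U}$ I would place, for each vertex $w$ of one side, a block $B_w$ of ``long-code'' coordinates indexed by $[q]^R$ with a constant $q = q(t)$, folded across Unique-Games edges through the corresponding bijections $\pi_e$; the essential point is to equip this vertex set with a layering (partial order) that is acyclic by construction and in which the specific $t$-tuples that Bansal and Khot use as hyperedges are exactly the directed paths on $t$ vertices.

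For the completeness statement, given a labeling of $\mathcal{U}$ satisfying a $(1-\delta)$-fraction of constraints, I would read off the $t$ disjoint subsets $V_1, \ldots, V_t$ from the ``dictator'' partition: in each block $B_w$ with assigned label $i_w$, the cells $\{x \in B_w : x_{i_w} \in I_j\}$ for a partition $I_1, \ldots, I_t$ of $[q]$; folding makes these cells globally consistent except on the $\delta$-fraction of violated edges, which is absorbed into the $\eps$ slack so that each $|V_i| \geq \frac{1-\eps}{t}|V|$. Because the hyperedge $t$-tuples are designed so that each of them meets all $t$ cells, deleting any single $V_i$ removes every directed path on $t$ vertices from the corresponding induced subgraph, which is the completeness guarantee.

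For soundness I would argue the contrapositive: if some induced subgraph on an $\eps$-fraction of the vertices had no directed path on the required number of vertices, then a noticeable fraction of blocks would contain a dense surviving subset with no long directed chain; treating the indicator of each such subset as a function on $[q]^R$ and running the standard influence-decoding argument (a dense chain-free set must have an influential coordinate, and influential coordinates survive the $\pi_e$) produces a labeling satisfying more than a $\delta$-fraction of $\mathcal{U}$, a contradiction. The one extra ingredient is amplification: the basic gadget only forbids $t$-vertex paths in the no-case, whereas the theorem needs paths on $|V|^{1-\eps}$ vertices, so I would compose the gadget recursively in series (layers pointing to later layers) so that a short longest path at the top forces short longest paths inside each layer and hence the same contradiction, while the completeness partition lifts layer-by-layer.

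I expect the main obstacle to be precisely this amplification, together with the requirement that it coexist with acyclicity and with the combinatorial identification ``designed hyperedge $=$ directed $t$-path''. One must push the no-case from ``contains a $t$-vertex path'' all the way to ``contains a polynomially long path in every large induced subgraph'' without eroding the yes-case structure — the $t$ parts must stay of measure $(1-\eps)/t$ and must retain the ``delete one part, kill all short paths'' property through the recursion; getting these quantitative bookkeeping constraints to close simultaneously is the delicate heart of Svensson's argument.
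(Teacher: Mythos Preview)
The paper does not prove Theorem~\ref{thm:svensson}; it is quoted from Svensson~\cite{svensson2012hardness} and used as a black box in the reduction of Theorem~\ref{thm:ded-hardness}. So there is no ``paper's own proof'' to compare against, and your proposal is attempting something the paper does not attempt.

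That said, your outline is a plausible high-level sketch of how Svensson's result is obtained: the Bansal--Khot long-code reduction from Unique Games, the dictator partition for completeness, influence decoding for soundness, and a recursive amplification to push the soundness guarantee from ``contains a $t$-vertex path'' to ``contains a $|V|^{1-\eps}$-vertex path.'' You correctly identify the amplification as the crux. But as written this is a plan, not a proof: the step ``a dense chain-free set must have an influential coordinate'' hides the specific test and the Gaussian/invariance machinery that makes it work, and the recursive composition preserving both the completeness partition structure and acyclicity requires the concrete construction from~\cite{svensson2012hardness} rather than a generic series composition. If the goal is to reproduce the present paper, you should simply cite the theorem; if the goal is to reprove it, you would need to fill in those two pieces with the actual constructions.
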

The following theorem provides a simple gadget reduction from the
above theorem to a hardness for \mink on DAGs.
\begin{theorem}
  \label{thm:ded-hardness}
  Assuming the UGC, for any constant $k \geq 4$
  and $\eps > 0$, 
  the \mink problem on weighted DAGs is NP-hard to approximate with a
  factor better than $\left(\lfloor\slfrac{k}{2}\rfloor - \eps\right)$.
\end{theorem}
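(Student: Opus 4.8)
The plan is to reduce from Svensson's structural hardness (Theorem~\ref{thm:svensson}) with parameter $t = \lfloor k/2 \rfloor$. Given an instance $D(V,E)$ of that problem, I would build a weighted DAG $D'$ for \mink as follows. For each vertex $v \in V$ of Svensson's instance, create a short directed ``gadget path'' attached at $v$ — concretely, replace $v$ by a directed path on a constant number of new vertices (depending on $k$ and $t$) so that \emph{deleting} the single low-weight ``gadget edge'' on this path in $D'$ corresponds exactly to \emph{deleting the vertex $v$} in Svensson's instance. The original edges of $E$ are included in $D'$ with very large weight (so an optimal solution never deletes them), appropriately rewired through the endpoints of the gadget paths. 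The parameters of the gadget (its length, and how the $t$-vertex-path condition of the source instance translates into a length-$k$-path condition in $D'$) are chosen so that a directed path on $t$ vertices in the induced subgraph of the source instance becomes, after inserting the gadget paths, a directed path of length exactly $k$ in $D'$; hence removing all length-$k$ paths in $D'$ is equivalent (on the cheap edges) to hitting all $t$-vertex paths in the source instance, i.e. solving its vertex-deletion version.

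The key steps, in order: (1) Fix $t = \lfloor k/2\rfloor$ and design the per-vertex gadget so that a path with $t$ vertices in the source graph inflates to a path of length $k$ in $D'$ — roughly, each traversed source vertex contributes a fixed number of gadget edges and each source edge contributes one, and the arithmetic $t \cdot (\text{gadget contribution}) + (t-1) = k$ must be arranged (this is where the floor and the ``$k\ge 4$'' come in). (2) Set the weight of each gadget edge to $1$ and of each original edge to $+\infty$ (or a sufficiently large polynomial bound), so any finite-weight solution only deletes gadget edges, and deleting a gadget edge is in bijection with deleting the corresponding source vertex. (3) \textbf{Completeness:} from the $t$ subsets $V_1,\dots,V_t$ with $|V_i| \ge \frac{1-\eps}{t}|V|$ whose $(t-1)$-wise unions contain no $t$-vertex path, keep the gadget edges for the largest such $t-1$ union and delete the rest — this deletes at most $\bigl(\tfrac1t + \eps\bigr)|V|$ gadget edges and, by the path-length correspondence from step~(1), leaves no length-$k$ path in $D'$. (4) \textbf{Soundness:} if $D'$ has a deletion set of finite weight using fewer than $(t-\eps)$ times the completeness value of gadget edges, then fewer than roughly $(1-\eps)|V|$ source vertices are deleted, so some $\eps|V|$-vertex induced subgraph survives; by soundness of Theorem~\ref{thm:svensson} it contains a path on $|V|^{1-\eps}$ vertices, which (for $|V|$ large, since $t=\lfloor k/2\rfloor$ is constant) contains a sub-path on exactly $t$ vertices, inflating to a length-$k$ path in $D'$ that is not hit — a contradiction. (5) Combine: optimal \mink value is $\approx \tfrac1t|V|$ in the yes-case and $\gtrsim (1-\eps)|V|$ in the no-case, giving an inapproximability factor approaching $t = \lfloor k/2\rfloor$.

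The main obstacle I anticipate is step~(1): engineering the gadget so that the length-$k$-path condition in $D'$ corresponds \emph{cleanly} to the $t$-vertex-path condition in the source instance, with no ``spurious'' short paths that lie entirely within the gadgets or that mix gadget and original edges in unintended ways. In particular one must ensure that (a) no length-$k$ path exists within a single vertex-gadget or among gadgets attached to a short source path with $< t$ vertices, and (b) the large-weight original edges cannot be ``routed around'' to create a cheaper solution. This forces a careful choice of gadget length and of which internal gadget vertices carry the deletable unit-weight edge, and is exactly where the discrepancy between $k$ and $2t = 2\lfloor k/2\rfloor$ (the loss for odd $k$, and the requirement $k\ge 4$) is absorbed; the rest of the argument is then a routine translation of Svensson's completeness and soundness guarantees through this correspondence.
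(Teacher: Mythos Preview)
Your proposal is essentially correct and follows the same route as the paper: reduce from Svensson's vertex-deletion hardness with parameter $t=\lfloor k/2\rfloor$, replace each source vertex by a gadget carrying a single cheap edge, make the original edges prohibitively heavy, and translate completeness and soundness through the resulting correspondence between source-vertex deletions and target-edge deletions.

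Where you diverge from the paper is in over-engineering the gadget. You aim for an \emph{exact} correspondence ``$t$-vertex source path $\leftrightarrow$ length-$k$ target path'' and therefore contemplate a multi-edge gadget whose length solves $t\cdot c+(t-1)=k$; as you probably noticed, this equation has no integer solution for most $k$. The paper sidesteps this entirely by using the simplest gadget: split each $v$ into $v_{\mathrm{in}},v_{\mathrm{out}}$ joined by one unit-weight edge, and give every original edge weight $2|V|$. Then a source path on $m$ vertices becomes a target path of $2m-1$ edges (plus at most one extra heavy edge at each end). For completeness, ``no $t$-vertex path'' in the source gives a maximum target path length of $2(t-1)+1=2t-1\le k-1<k$, so no length-$k$ path survives. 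For soundness, you do \emph{not} need a $t$-vertex sub-path to inflate to length exactly $k$; Svensson's soundness hands you a path on $|V|^{1-\eps}$ vertices, which inflates to a target path of length $\gg k$ and hence contains a length-$k$ sub-path. The slack between $2t-1$ and $k$ (and between $t$ and $t+1$ source vertices) is simply absorbed by the polynomial-length guarantee in the soundness case, so no arithmetic tuning of the gadget is required.

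In short: your plan works, but step~(1) is easier than you fear---the standard one-edge vertex split suffices, and the ``exactly $k$'' correspondence you are trying to arrange is unnecessary.
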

\begin{proof}
Fix $t = \lfloor\slfrac{k}{2}\rfloor$. Let $D = (V, E)$ be a hard
instance from Theorem \ref{thm:svensson} for the parameter $t$ and
small enough $\eps > 0$. 
The following simple reduction yields a weighted DAG $H
= (V_H, E_H)$ as an instance of \mink. 
Assign $w(e) = 2|V|$ to every edge $e \in E$. Split
every vertex $v \in V$ into $v_{in}$ and $v_{out}$ and add a directed edge
$(v_{in}, v_{out})$ of weight $1$. Also every edge entering $v$ now
enters $v_{in}$ while edges leaving $v$ now leave $v_{out}$. It is
easy to see that removing all edges of weight $1$ from $H$ eliminates
all paths with $2$ edges, implying that the optimum solution has
weight at most $|V|$. Thus, we may assume that the optimum solution
does not delete any edge of weight $2|V|$. 

We now show that Theorem \ref{thm:svensson} implies that it is
$UG$-hard to distinguish whether: (Completeness) $H$ has a
solution of cost $\leq (\frac{1}{t} + \eps)|V|$, or (Soundness) $H$ 
has no solution of cost $(1 - \eps)|V|$. This immediately implies the
desired $(t -\eps') = \left(\lfloor\slfrac{k}{2}\rfloor - \eps'\right)$
UGC-hardness for \mink.
\begin{itemize} 
	\item (Completeness) There exists a subset $S \subseteq V$ of
			size at most $(\frac{1}{t}+\eps)|V|$, such
			that removing $S$ eliminates all paths in $D$
			of $t$ vertices. Let $S'$ denote
			the set of edges in $H$ corresponding to the
			vertices in $S$. It is easy to observe that
		 	$H(V_H, E_H \setminus S')$
			has no paths of length (number of edges) $2t$.
			Thus, $S'$ is a feasible solution to the \mink
			problem of cost $(\frac{1}{t} + \eps)|V|$.
		
	\item (Soundness) Assume for the sake of contradiction, that
		we have an optimal solution $S' \subseteq E_H$ of cost at
		most $(1 - \eps)|V|$. Since $S'$ is an optimal
		solution it only has edges of weight $1$, each
		of which correspond to a vertex in $V$.
		Let $S$ denote this set of vertices in $V$. By
		construction, since $H(V_H, E_H \setminus
		S')$ has no paths with $k$ edges, $D[V \setminus S]$
		has no induced paths with $
		\lfloor\slfrac{k}{2}\rfloor + 1 = t + 1$ vertices. 
		Further,
		since $|S'| = |S| \leq (1 - \eps)|V|$, we have $|V
		\setminus S| \geq \eps |V|$. Thus, we have a set of size
		$\eps |V|$ that has no induced paths of length $t+1$.
		This is a contradiction since every induced subgraph
		of $\eps |V|$ vertices has a path of length
		$|V|^{1-\eps} \geq t+1$.  
\end{itemize} 
\qed 
\end{proof}

\bibliographystyle{plain}
\bibliography{bibfile}

\end{document}